
\documentclass[11pt]{article}

% Language setting
% Replace `english' with e.g. `spanish' to change the document language
\usepackage[english]{babel}

% Set page size and margins
% Replace `letterpaper' with `a4paper' for UK/EU standard size
\usepackage[letterpaper,top=1in,bottom=1in,left=1in,right=1in,marginparwidth=1.75cm]{geometry}

% Useful packages
\usepackage[T1]{fontenc}%for Leszek' name
\usepackage{amsmath}
\usepackage{amsthm}
\usepackage{tcolorbox}
\usepackage{multicol}

\newtheorem{theorem}{Theorem}
\newtheorem{lemma}{Lemma}
\newtheorem{fact}{Fact}
\newtheorem{corollary}{Corollary}

\bibliographystyle{plainurl}

%\title{State-optimal and Almost State-optimal\\ Self-stabilising Ranking Population Protocols}
\title{Improving Efficiency in Near-State and State-Optimal Self-Stabilising Leader Election Population Protocols}
%for Ranking and Leader Election}

%\titlerunning{Self-stabilzing Ranking Population Protocols}

\author{Leszek G\k asieniec, Tytus Grodzicki, and Grzegorz Stachowiak%
\footnote{Supported by the National Science Center, Poland (NCN), grant 2020/39/B/ST6/03288.}
}
%\footnote{Corresponding author}}{University of Liverpool, United Kingdom \and \url{https://www.csc.liv.ac.uk/~leszek/} }{l.a.gasieniec@liverpool.ac.uk}{https://orcid.org/0000-0003-1809-9814}{}
%\author{Tytus Grodzicki}{Uniwersytet Wroc\l awski, Poland}{tytusgrodzicki23@gmail.com}{}{}
%\author{Grzegorz Stachowiak}{Uniwersytet Wroc\l awski, Poland}{gst@cs.uni.wroc.pl}{https://orcid.org/0000-0003-0463-3676}{}

%\authorrunning{G\k asieniec, Grodzicki, and Stachowiak} %TODO mandatory. First: Use abbreviated first/middle names. Second (only in severe cases): Use first author plus 'et al.'

%\Copyright{Leszek~G\k{a}sieniec, Tytus Grodzicki, and~Grzegorz~Stachowiak} %TODO mandatory, please use full first names. LIPIcs license is "CC-BY";  http://creativecommons.org/licenses/by/3.0

\date{}

%\ccsdesc[100]{\textcolor{black}{Theory of computation~Distributed algorithms}} %TODO mandatory: Please choose ACM 2012 classifications from https://dl.acm.org/ccs/ccs_flat.cfm 

%\keywords{Population protocols, self-stabilising protocols, ranking and leader election}

%\bibliographystyle{plainurl}

\begin{document}
%\nolinenumbers

\maketitle

\begin{abstract}
    We investigate {\em leader election problem} via {\em ranking} within self-stabilising population protocols. In this scenario, the agent's state space comprises $n$ {\em rank states} and $x$ {\em extra states}. The initial configuration of $n$ agents consists of arbitrary arrangements of rank and extra states, with the objective of self-ranking. Specifically, each agent is tasked with stabilising in a unique rank state {\em silently}, implying that after stabilisation, each agent remains in its designated state indefinitely.

    %Known solutions to self-stabilizing leader election are achieved by reducing the problem to ranking.
    
    The ranking problem stands as a pivotal challenge in self-stabilising population protocols, with its resolution resulting in leader election and requiring a minimum of $n$ states~\cite{CIW12}. Within this framework, a self-stabilising ranking protocol is deemed {\em state-optimal} if it exclusively employs rank states. It is assumed to be {\em near-state optimal} if it incorporates $O(\log n)$ extra states, denoted as $x=O(\log n)$. Additionally, a configuration of agents' states is referred to as {\em $k$-distant} (from the final configuration) if exactly $k$ rank states are not used in the initial setup.

    A generic state-optimal ranking population protocol ${\mathcal A}_G$ demonstrates silent self-stabilisation within a time frame of $\Theta(n^2)$ with high probability (whp). It is established in~\cite{BCC+21,DS18} that any self-stabilising silent leader election protocol demands $\Omega(n)$ expected time. Furthermore,~\cite{BCC+21} introduces silent ranking protocols with expected times of both $O(n)$ and $O(n\log n)$ whp, each utilising $x=\Omega(n)$ extra states,
    recently improved to $x= O(\log^2 n)$ extra states in~\cite{BEG+25}.
    However, the exact complexity of near-state and state-optimal self-stabilising ranking remains elusive. Particularly, questions whether state-optimal self-stabilising ranking is viable within a time frame of $o(n^2)$ whp, and the exploration of the trade-off between the number of extra states and stabilisation time, remain unanswered.

    %In this paper, 
    We present several new self-stabilising ranking (and in turn leader election) protocols, greatly enriching our comprehension of these intricate problems.
    All protocols ensure self-stabilisation time with high probability (whp), defined as $1-n^{-\eta},$ for a constant $\eta>0.$
    We delve into three scenarios, from which we derive {\em stable} (always correct), either state-optimal or nearly state-optimal, silent ranking protocols that self-stabilise within a time frame of $o(n^2)$ whp, including:
    \begin{enumerate}
    \item In Section~\ref{s:k-distant}, we show that a novel concept of {\em an agent trap} admits a state-optimal ranking self-stabilising in time $O({\rm\/min}(kn^{3/2},n^2\log^2 n)),$ for any $k$-distant starting configuration.
    \item We also show that one extra state ($x=1$) enables a ranking protocol that self-stabilises in time $O(n^{7/4}\log^2 n)=o(n^2)$, regardless of the initial configuration, see Section~\ref{s:one}.
    \item Finally, in Section~\ref{s:logn}, we show that extra $x=O(\log n)$ states admit self-stabilising ranking with the best currently known time $O(n\log n)$, with whp and silent guarantees imposed.
\end{enumerate}

%Due to space constraints, all missing proofs can be found in Section~\ref{last}.

\end{abstract}

%\begin{CCSXML}
%	<ccs2012>
%		<concept>
%			<concept_id>10003752.10003809</concept_id>
%			<concept_desc>Theory of computation~Design and analysis of algorithms</concept_desc>
%			<concept_significance>500</concept_significance>
%		</concept>
%		<concept>
%			<concept_id>10003752.10010061</concept_id>
%			<concept_desc>Theory of computation~Randomness, geometry and discrete structures</concept_desc>
%			<concept_significance>500</concept_significance>
%		</concept>
	%
%	</ccs2012>
%\end{CCSXML}

%\ccsdesc[500]{Theory of computation~Design and analysis of algorithms}
%\ccsdesc[500]{Theory of computation~Randomness, geometry and discrete structures}    

%%
%% Keywords. The author(s) should pick words that accurately describe
%% the work being presented. Separate the keywords with commas.
%\keywords{Population protocols, Self-stabilisation, Ranking problem}

%\received{20 February 2007}
%\received[revised]{12 March 2009}
%\received[accepted]{5 June 2009}

%%
%% This command processes the author and affiliation and title
%% information and builds the first part of the formatted document.
\maketitle

\section{Introduction}

The {model} of {\em population protocols} originates from the seminal paper of 
Angluin et al.~\cite{DBLP:conf/podc/AngluinADFP04}, and it is used to study the power of {\em pairwise interactions} between simple indistinguishable entities referred to as {\em agents}. 
In this model, each agent is equipped with a limited storage, a single state drawn from the predefined state space. 
It is often assumed that the state space of agents is fixed, as such protocols are independent 
from the size $n$ of (the number of agents in) the population. 
However, in the {\em ranking problem} one has to utilise $n$ {\em rank states} in order to distinguish 
between all agents. In addition, some extra $x$ states can be used to perform ranking more efficiently.
We say that a self-stabilising ranking protocol is {\em state-optimal}, if it utilises only rank states, 
and {\em nearly state-optimal}, if it uses also $O(\log n)$ extra states, i.e., $x=O(\log n)$.

In the original model of population protocols, it is assumed that a protocol begins in the predefined initial configuration of agents' states, representing the input, and stabilises in one of the final configurations of states, representing the solution to the considered problem.
In contrast, in self-stabilising protocols, the starting configuration is assumed to be an arbitrary configuration of states drawn from the state space, which in the considered problem consists of rank and extra states.
We also distinguish $k$-{\em distant configurations} of states, in which exactly $k$ rank states are missing, 
i.e., they are not occupied by any agent in this configuration.
%A population protocol is said to stabilise {\em silently}, if in the final configuration 
%all agents remain in their states indefinitely.

In the {\em probabilistic variant} of population protocols adopted here, 
in each step of a protocol the {\em random scheduler} selects an ordered pair of agents: the {\em initiator} and the {\em responder}, which are drawn from the whole population uniformly at random.
Note that the lack of symmetry in this pair is a powerful source of random bits utilised by population protocols.
In this variant, in addition to efficient  {\em state utilisation} one is also interested in the {\em stabilisation time} (often referred to as {\em parallel time}) defined as the number of interactions leading to the final configuration divided by the size of the population $n.$ 
We say that a protocol is {\em stable}, if it stabilises with the correct answer with probability 1, and is {\em silent}, if agents stop changing their states in a final configuration. 
All protocols presented in this paper are both {\em silent} and {\em stable}, and they guarantee stabilisation time with high probability (whp)
defined as $1-n^{-\eta},$ 
for a constant $\eta>0.$
Several lemmas in Section~\ref{s:one} utilise the Chernoff bound, see Section~\ref{s:app}, via Corollary~\ref{coro}.

\subsection{Related work}

%The ranking problem is adhered to {\em leader election}
%in self-stabilising population protocols, as any proper ranking affirms leader election in this model. 
%
%As indicated earlier, in self-stabilising protocols the ranking problem is closely related to the central problem of leader election, where in the final configuration a single agent 
%is in a {\em leader} state 
%and all other agents adopt 
%the {\em follower} state. 
In the standard model (with predefined starting configuration) of population protocols, the results in~\cite{DBLP:conf/wdag/ChenCDS14, DBLP:conf/soda/Doty14} laid down the foundation for the proof that leader election cannot be solved in a sublinear time with agents utilising a fixed number of states~\cite{DS18}.
In further work~\cite{DBLP:conf/icalp/AlistarhG15}, Alistarh and Gelashvili studied the relevant upper bounds including 
a new leader election protocol stabilising in time $O(\log^3 n)$ assuming $O(\log^3 n)$ states per agent.
Later, Alistarh {\em et al.}~\cite{DBLP:conf/soda/AlistarhAEGR17} considered more general trade-offs between the number of states and the time complexity of stabilisation.
In particular, they proposed a separation argument distinguishing between {\em slowly stabilising} population protocols 
which utilise $o(\log\log n)$ states and {\em rapidly stabilising} protocols relying on $O(\log n)$ states per~agent. 
This result coincides with another fundamental result by Chatzigiannakis {\em et al.}~\cite{DBLP:journals/tcs/ChatzigiannakisMNPS11} 
stating that population protocols utilising $o(\log n)$ states are limited to semi-linear predicates,
while the availability of $O(n)$ states (permitting unique identifiers) admits computation of more general symmetric predicates.
Further work proposes leader election in time 
$O(\log^2 n)$ w.h.p. and in expectation utilising $O(\log^2 n)$ states~\cite{DBLP:conf/podc/BilkeCER17}. 
The number of states was later reduced to $O(\log n)$ by
Alistarh et al. in \cite{DBLP:conf/soda/AlistarhAG18} and by Berenbrink et al. in \cite{DBLP:conf/soda/BerenbrinkKKO18} through the application of two types of synthetic coins.
In more recent work G\k asieniec and Stachowiak reduce utilisation of states to $O(\log\log n)$ while preserving the time complexity $O(\log^2 n)$ whp~\cite{DBLP:journals/jacm/GasieniecS21}.
The high probability can be traded for faster leader election in the expected parallel time $O(\log n\log\log n)$~\cite{DBLP:conf/spaa/GasieniecSU19}. This upper bound was recently reduced to the optimal expected time $O(\log n)$ by Berenbrink {\em et al.} in~\cite{DBLP:conf/stoc/BerenbrinkGK20}. 
%One of the main open problems in the area is to establish whether one can elect a single leader in time $o(\log^2 n)$ whp while preserving the optimal number of states $O(\log\log n).$ 

The fact that at least $n$ states on the top of the knowledge of the exact value of $n$ are needed in self-stabilising leader election (as in the ranking problem) was first observed in~\cite{CIW12}. 
One could use an upper bound on $n$ instead, if {\em loose-stabilisation} is permitted, i.e., when the selected leader remains for an extended period of time, but then it has to be recomputed again~\cite{Sudo+20}.
Another example refers to a recent study on $O(\log n)$-state loosely-stabilising phase clocks with application to an adaptive variant of the majority problem, see~\cite{BBH+22}. 
An alternative line of attack is to supply population protocols with extra features, where in the context of leader election one can find work on population protocols with an oracle~\cite{BBB13, FJ06}, mediated population protocols~\cite{MOK+12}, and population protocols with interactions along limited in size hyper-edges~\cite{XYK+13}. 
A different strand of work refers to communication networks limited to constant degree regular graphs including rings, in which state requirements are much lighter~\cite{AAF+05,CC19,CC20,YSM20,YSO+23}. 

The consideration of $k$-distant configurations in this paper has ties to $k$-stabilising protocols, which are self-stabilising protocols characterised by a known upper bound $k\leq n$ on the number of faults~\cite{BGK99}.
Numerous works, e.g., ~\cite{BGK99,GX99,KP97,KP95}, observed that systems 
recover more rapidly when subjected to fewer faults. 
The most efficient {\em $k$-linearly adaptive} protocols require only $O(k)$ transitions to stabilise, see~\cite{BDH06}. 
Some lower bounds for such protocols have been explored in~\cite{GT02}.
Due to the random nature of interactions and a %severely 
constrained state space, our self-stabilising ranking protocol for $k$-distant configurations, while bearing similarities to $k$-stabilising protocols, still necessitates engagement from the entire population, resulting in a stabilisation time of $O({\rm\/min}(kn^{3/2},n^2\log^2 n))$. However, the value of $k$ in our protocol does not have to be predetermined.

In the ranking problem, which is the focus of this work, the agent's state space consists of 
$n$ {\em rank} states and $x$ {\em extra} states. The initial configuration of $n$ agents is 
an arbitrary arrangement of rank and extra states, with the goal of achieving self-ranking. 
Specifically, each agent must stabilise in a {\em unique} rank state silently, meaning that 
after stabilisation, each agent remains in its designated state indefinitely.
The ranking problem is closely related to the {\em naming (labelling)} problem, where each agent is 
assigned a unique name (or label), as studied earlier in \cite{Bea12,Mic13,Bur19,Gas24,Gas25}  within the context 
of population protocols. The key difference is that, after ranking, any pair of agents knows 
how many agents with intermediate ranks separate them. Additionally, due to the restrictions on 
final states, computing the ranking is generally more challenging.
In this context, the primary focus of \cite{Bea12,Bur19} is on the feasibility of ranking and naming, 
while \cite{Mic13} examines the time required for the population to stabilise. A more comprehensive study 
on the state and parallel efficiency of naming is presented in \cite{Gas25}, where the authors explore 
trade-offs between the number of names, available states, and stabilisation time. Another recent work, 
\cite{Gas24}, investigates a specific naming scheme (by consistent coordinates) in the context of distributed 
localization within {\em spatial} population protocols.
A related {\em renaming} (namespace reduction) problem, introduced in \cite{Att90}, has been studied in terms of solvability and complexity in asynchronous environments, e.g., in \cite{Ebe98,Afe99,Her99,Att01}, and with respect to time complexity in the shared-memory model, e.g., in \cite{Bor93,Pan98,Chl08,Ali14}.

In the direct context of our work, in~\cite{BCC+21,DS18}, the authors show that any silently self-stabilising leader election protocol requires $\Omega(n)$ expected time. 
In~\cite{BCC+21}, silent self-stabilising ranking (and in turn leader election) protocols achieve expected time $O(n)$ and $O(n \log n)$ whp, using $\Omega(n)$ extra states, recently improved in~\cite{BEG+25} to $O(\log^2 n)$ extra states.
However, the only known state-optimal ranking protocol $\mathcal{A}_G$ 
with the state space 
$\{0,\dots,n-1\}$ 
based on a single rule $i+i\rightarrow i+(i+1{\rm\ mod\ } n)$,
i.e., if two agents in state $i$ interact, the responder transitions to state $(i+1\text{ mod } n)$.
This protocol
has stabilisation time $\Theta(n^2).$

\section{Basic observations, tools and probabilistic guarantees}\label{s:basic} 

In this work, we focus on self-stabilising population protocols for the ranking problem. In~such protocols, the starting configuration contains an arbitrary combination of rank (and extra) states. In the final configuration, each agent has to stabilise in a unique rank state.
%
%Our protocols are {\em stable}, i.e., they always compute correct solutions, and the stabilisation times are guaranteed with high probability. 

We say that a state is {\em occupied} if there is an agent in this state in the current configuration.
We note that for any (rank) state $s$ in a state-optimal ranking protocol, there exists a rule $(s,s) \rightarrow (s',s'')$, where $s''\neq s$. This rule serves to offload states occupied by more than one agent. Furthermore, these are the only rules permitted; otherwise, the protocol would never stabilise.
Thus, every state-optimal ranking protocol has a transition function formed exactly of $n$ rules. For instance, in the generic ranking protocol $\mathcal{A}_G$ with the state space $0, \dots, n-1$, the only rules are of the form $i+i\rightarrow i+(i+1{\rm\ mod\ } n)$, for all $0 \leq i \leq n-1$. This constraint leaves very little room for manoeuvre in the design of such protocols.
In order to outwit this restriction 
%in state-optimal and to great extent in nearly state-optimal ranking protocols 
we design and utilise several combinatorial tools including: an {\em agent trap} introduced in this section, a {\em ring of traps} in Section~\ref{s:k-distant}, a {\em line of traps} in Section~\ref{s:one},  and 
{\em size-balanced} binary trees in Section~\ref{s:logn}.

\subsection{Agent trap}

We propose a novel concept called an {\em agent trap}, where a trap of size $m+1$ is formed by $m+1$ rank states supported by $m+1$ rules in the transition function, allowing indefinite entrapment of $m+1$ agents. Let $0, \dots, m$ represent the states used in the trap, where $0$ is the gate state, and all remaining $m$ states are referred to as {\em inner states}. Each inner state $0 < i \leq m$ is associated with rules ${\mathcal R}_i: i+i \rightarrow i+(i-1)$, for $i=1,\dots,m,$ and the {\em gate state} is associated with the rule ${\mathcal R}_g: 0+0 \rightarrow m+Y$, where $Y$ is a state not belonging to this trap. It's important to note that such a trap degenerates to a single state when $m=0$. In our protocols, $Y$ is either the gate state of another trap or the unique extra state utilised in nearly state-optimal ranking protocols described in Section~\ref{s:one}.
Generally, the inner states are employed to capture agents, while the gate state is responsible for reducing an excess of entrapped agents and for accepting new agents into the trap. 

An unoccupied inner state is called a {\em gap}. 
\begin{fact}
    Once a gap state becomes occupied, it remains occupied indefinitely.
\end{fact}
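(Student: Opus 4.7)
\noindent
The plan is to prove this by a straightforward case analysis of all transition rules that can possibly decrease the number of agents in an inner state $i \in \{1, \dots, m\}$, and to observe that every such rule leaves at least one agent in state $i$.

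First, I would formalise what can happen to the occupancy count of an inner state $i$ during a single interaction. The occupancy of $i$ can only change if a rule fires in which at least one of the two resulting states differs from one of the two source states, and $i$ appears with different multiplicity on the two sides. So I would list every rule in the transition function of the trap in which the state $i$ appears, namely:
\begin{enumerate}
    \item $\mathcal{R}_i : i + i \to i + (i-1)$, which can only fire when state $i$ is already occupied by at least two agents, and after which the initiator remains in $i$;
    \item $\mathcal{R}_{i+1} : (i+1) + (i+1) \to (i+1) + i$, valid when $i < m$, which can only increase the occupancy of $i$;
    \item $\mathcal{R}_g : 0 + 0 \to m + Y$, valid only for the special case $i = m$, which can only increase the occupancy of $m$.
\end{enumerate}
No other rule of the trap involves state $i$, and by the construction of a state-optimal protocol (one rule per rank state), there are no further rules that could move an agent out of $i$.

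From this list, the only rule capable of decreasing the occupancy of an inner state $i$ is $\mathcal{R}_i$, and it requires at least two agents in state $i$ to be applicable; upon firing, the initiator remains in state $i$, so after the step the occupancy of $i$ is exactly one less than before but still at least one. All other rules either leave the occupancy of $i$ unchanged or strictly increase it. Hence, once the occupancy of state $i$ becomes positive, it stays positive in every subsequent configuration, which is exactly the statement of the fact.

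I do not anticipate a genuine obstacle here: the entire argument is a bookkeeping check on the at most three rules that touch an inner state $i$, and the crucial design choice that the initiator in $\mathcal{R}_i$ is preserved in state $i$ makes the invariant immediate. The only mild subtlety worth noting is that the claim is about inner states only (the gate state $0$ can be completely emptied by $\mathcal{R}_g$), which is consistent with the definition of a gap as an unoccupied \emph{inner} state.
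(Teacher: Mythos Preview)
Your proof is correct. The paper states this fact without proof, treating it as immediate from the rule definitions; your case analysis on the rules $\mathcal{R}_i$, $\mathcal{R}_{i+1}$, and $\mathcal{R}_g$ is exactly the justification the paper leaves implicit, and your observation that the claim applies only to inner states (not the gate) is the right caveat.
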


%We say that a trap has $k$ {\em gaps} if $k$ inner states in this trap are not occupied by any agent. 
We say that a trap is {\em saturated} if there are no gaps in this trap.
We also say that a trap is {\em full} if it is saturated and at least $m+1$ agents occupy (states in) this trap.

\begin{fact}~\label{f:2k}
%    Arrival of at most $2d$ new agents at the gate of a trap with $d$ gaps guarantees saturation of this gap. 
To saturate a trap with $d$ gaps, where the gate ejects every other agent, the total number of agents currently in the trap plus those expected to reach its gate state must be at least $2d$.

%As the gate ejects every other agent from the trap, to saturate a trap with $d$ gaps, it is sufficient for the combined number of agents currently in and those expected to arrive at its gate state to be at least $2d$.
\end{fact}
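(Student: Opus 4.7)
The plan is a double bookkeeping argument that tracks the agent population at the gate state and in the inner block over the stabilisation process, exploiting the rigidity of the trap's transition function. I would first split $T$ into $G_0$ agents currently at the gate and $I_0=T-G_0$ agents currently in the inner states, adopting the canonical form of a trap with $d$ gaps, i.e., one agent per occupied inner state, so that $I_0=m-d$. Write $R$ for the number of future arrivals at the gate, and let $f_g,f_1,\ldots,f_m$ count the total firings of the gate rule and the inner rules during stabilisation. Rules $\mathcal{R}_2,\ldots,\mathcal{R}_m$ preserve the total inner count, while $\mathcal{R}_g$ increases it by $1$ per firing (moving an agent from $0$ to $m$) and $\mathcal{R}_1$ decreases it by $1$ per firing (moving an agent from $1$ to $0$). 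Hence the inner count at saturation equals $I_0+f_g-f_1$; since saturation forces each of the $m$ inner states to be occupied, this sum is at least $m$, yielding $f_g-f_1\ge d$, and in particular $f_g\ge d$.

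Independently, nonnegativity of the agent count at state $0$ throughout the process gives $2f_g\le G_0+R+f_1$, because each firing of $\mathcal{R}_g$ removes two agents from state $0$ while the only sources feeding state $0$ are the initial $G_0$, the future arrivals $R$, and the $f_1$ returns from state $1$. Combining the two bounds, $G_0+R\ge 2f_g-f_1=f_g+(f_g-f_1)\ge d+d=2d$; adding $I_0\ge 0$ then delivers $T+R\ge 2d$, which is precisely the claim. The phrase ``the gate ejects every other agent'' is reflected in the factor of $2$ between $f_g$ and the gate intake.

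The main obstacle is the normal-form assumption that non-gap inner states initially hold exactly one agent. Without it, a large pile of agents stacked at a single high inner state could cascade downward via $\mathcal{R}_2,\ldots,\mathcal{R}_m$ alone and fill lower gaps while bypassing the gate, so that the naive conservation bound becomes vacuous. I would address this by explicitly invoking the trap's canonical form ($I_0=m-d$), which is the convention implicit in the paper's usage of the construct; for a statement free of this convention, one must refine the inner-count identity by charging each newly filled gap to a distinct downward move and tracing that move back either to a gate firing or to an initial surplus already counted in $I_0$, a strengthening that still leaves the bound $T+R\ge 2d$ intact.
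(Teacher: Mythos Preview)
The paper states Fact~\ref{f:2k} without proof; it is intended as a one-line observation: since each firing of the gate rule consumes two agents at state~$0$ and sends only one of them into the inner block (the other is ejected to~$Y$), injecting $d$ agents to fill $d$ gaps costs $2d$ agents at the gate. Your bookkeeping argument---tracking $f_g$, $f_1$, inner count, and gate count---is a correct formalisation of exactly this intuition, and in fact proves the sharper inequality $G_0+R\ge 2d$ from which $T+R\ge 2d$ follows trivially. This stronger form is what the paper actually uses later (in the proof of Lemma~\ref{l:arbit}, where the quantity of interest is the number of agents that must \emph{arrive at the gate}).

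Your caveat about the canonical form $I_0=m-d$ is well observed: the conservation identity $f_g-f_1\ge m-I_0$ only yields $f_g-f_1\ge d$ when each occupied inner state holds exactly one agent. If a single inner state carries a surplus, that surplus can cascade downward and fill lower gaps without ever touching the gate, so the inequality $f_g\ge d$ need not hold. The paper's usage (tidy configurations with $m-d$ agents in inner states, as in Lemma~\ref{l:arbit}) is precisely the canonical setting you identify, so your assumption matches the intended scope.
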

%
%\begin{proof}
%    A half ($k$) of the newly arrived agents will enter state $g+m$ and further descend towards the existing gaps.
%    This number can be smaller than $k,$ if some gaps are filled in (become occupied) by agents already present in the trap. 
%\end{proof}
%
\iffalse
\begin{proof}
    Rule ${\mathcal R}_g$ guarantees that half of the $2d$ agents in gate state $g$ adopt state $g+m,$ and these agents gradually descend to all gaps using rules ${\mathcal R}_i,$ for $i=m,\dots,2$.
\end{proof}
\fi

\begin{fact}\label{f:full}
    Once a trap becomes full, it remains full indefinitely.
\end{fact}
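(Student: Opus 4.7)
The plan is to decompose the condition ``full'' into two invariants and to show that each is preserved by any single interaction: (i) every inner state $1,\dots,m$ remains occupied (i.e.\ saturation persists), and (ii) the total population of the trap stays at least $m+1$. Together (i) and (ii) recover ``full'' by definition, so it suffices to check that both are maintained step by step.

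I would dispatch invariant (i) by invoking the preceding fact that once a gap is filled it stays filled: the only rule whose left-hand side contains inner state $i$ is $\mathcal{R}_i: i+i\rightarrow i+(i-1)$, and when it fires the initiator remains at $i$; no other rule acts on an agent sitting in state $i$, because in a state-optimal transition function all rules are of the same-state form. Hence $i$ can never be emptied once occupied.

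For invariant (ii) the key observation is that the only rule removing an agent from the trap is the gate rule $\mathcal{R}_g: 0+0\rightarrow m+Y$, and a single firing reduces the trap population by exactly one (the responder exits to $Y$, while the initiator stays at state $m$, which is inside the trap). I would then run a short case analysis on the rule that fires. Inner-rule firings and interactions matching no rule of this trap leave the trap population unchanged; an external rule that happens to deliver an agent into this trap (e.g.\ another trap's gate rule pointing here) can only increase it. The critical case is a firing of $\mathcal{R}_g$: in a saturated trap with current population $N$, the $m$ inner states hold at least $m$ agents, so the gate holds at most $N-m$; hence $\mathcal{R}_g$ can fire only when $N\ge m+2$, and after firing the population is $N-1\ge m+1$, while state $m$ is the only inner state whose occupancy changes (and it only gains one agent).

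The only real care needed is in enumerating the interaction types and noting that interactions between two trap agents in different states, or between a trap agent and any outside agent of a different state, trigger no rule and thus cannot violate either invariant. Beyond this bookkeeping I do not foresee any substantive obstacle; the counting bound on the gate occupancy in a saturated trap is the one nontrivial step, and once it is in place the claim is immediate.
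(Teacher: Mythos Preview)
Your proposal is correct and follows essentially the same argument as the paper's own (suppressed) proof: the only way for an agent to leave the trap is via $\mathcal{R}_g$, which requires two agents at the gate; since saturation guarantees at least $m$ agents in the inner states, firing $\mathcal{R}_g$ implies the trap held at least $m+2$ agents, and afterwards it still holds at least $m+1$. The paper's version is terser---it treats the saturation invariant as immediate from Fact~1 and focuses only on the population count---whereas you spell out both invariants and the case analysis explicitly, but the substance is identical.
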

%
\iffalse
\begin{proof}
    Since a full trap can loose an agent only by application of rule ${\mathcal R}_g$ when at least two agents occupy the gate state, in such case the number of agents in the trap is at least $m+2.$
    After rule ${\mathcal R}_g$ is applied one of these agents adopts state $g+m.$
    Thus, the number of agents in the trap remains at least $m+1.$
\end{proof}
\fi

We say that a trap has a {\em surplus} $l\ge 0$ if $m+l+1$ agents occupy states of this trap.

\begin{lemma}\label{l:surplus}
    Assume $m>n^{\varepsilon},$ for some constant $\varepsilon>0.$ A trap with a surplus $l>0$
    \begin{enumerate}
        \item releases (and returns back to the trap) at least $\lfloor\frac{l+1}{2}\rfloor\ge 1$ agents in time $mn$ whp, and
        \item releases at least $l$ agents in time $mn(\lceil\log (l+1)\rceil+1)$ whp.
    \end{enumerate}
    Moreover, if $l=0$ and the trap is full, then in time $mn$, an agent arrives at the gate state.
\end{lemma}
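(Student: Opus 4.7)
The strategy is to track excess agents as they trickle from inner states down to the gate state $0$ under the rules $\mathcal{R}_i$, and then count the resulting firings of $\mathcal{R}_g$.

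The core technical ingredient is a single-agent trickle estimate. Since the trap is full, every inner state is occupied by a native resident, so as soon as an excess agent reaches state $i$ the count $a_i$ becomes at least $2$ and rule $\mathcal{R}_i$ fires at rate $\geq 2/n$ per unit of parallel time. One descent step is therefore a geometric wait of mean $O(n)$; summing $m$ such i.i.d.\ waits gives a total with mean $\Theta(mn)$ and standard deviation $\Theta(\sqrt{m}\,n)$. The Chernoff-type bound of Corollary~\ref{coro} applied to this sum yields a tail of order $\exp(-\Omega(m))$, which is $\leq n^{-\eta}$ precisely because of the hypothesis $m > n^{\varepsilon}$. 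A union bound over the $l \leq n$ excess agents then gives that within parallel time $O(mn)$, every excess agent has reached state $0$ at least once, whp.

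For Part~1, once all $l$ excess agents have arrived at state $0$ alongside the native gate occupant, the cumulative number of arrivals at state $0$ is at least $l+1$. Each firing of $\mathcal{R}_g$ consumes two such arrivals, sending one to $Y$ and returning the other to state $m$; odd-indexed arrivals trigger a firing while even-indexed ones sit and wait, so the $l+1$ arrivals support at least $\lfloor(l+1)/2\rfloor$ firings. After these firings the trap retains $m + \lceil(l+1)/2\rceil \geq m+1$ agents and remains full, in agreement with Fact~\ref{f:full}. Part~2 follows by iteration: writing $l_0 = l$, the bounced-back agents form the new surplus after each round, so $l_{k+1} = \lceil(l_k - 1)/2\rceil \leq l_k/2$; after $\lceil\log_2(l+1)\rceil$ rounds of length $mn$ the surplus is zero, and the additive $+1$ in the stated bound absorbs a final round needed to drive the last re-trickled agent down to the gate. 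A union bound over the $O(\log n)$ rounds keeps the overall failure probability polynomially small. The ``moreover'' clause ($l=0$, trap full) covers the degenerate configuration of one agent per state and amounts to invoking the single-agent estimate once an external interaction (via a neighbouring structure feeding agents into $Y$) delivers a second agent into the trap.

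The main obstacle will be the concentration step: individual descent waits have variance of the order of the square of their mean, so Chernoff concentration on the sum of $m$ i.i.d.\ geometric variables only delivers the required $n^{-\eta}$ tail once $m = \Omega(\log n)$. The hypothesis $m > n^{\varepsilon}$ is exactly what makes this work; without it the whp guarantee would fail for small traps.
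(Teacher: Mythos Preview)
There are two genuine gaps. First, your descent-rate argument assumes the trap is full (every inner state has a ``native resident''), but surplus $l>0$ only says the trap holds $m+l+1$ agents; it can still have gaps (unoccupied inner states). For instance, all $m+l+1$ agents could start in state $m$ with states $1,\ldots,m-1$ empty. When an agent moves down into an empty state $i-1$ it sits there alone and cannot trigger $\mathcal{R}_{i-1}$, so your per-step rate bound $\geq 2/n$ fails at that point. The paper handles this by assigning each descending agent a \emph{destination}---either the gate or the highest gap it will eventually fill---and observing that, by definition of that destination, every state strictly between the agent's current position and its destination is already occupied; hence the rate bound holds along the entire path to the destination. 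The agents whose destination is the gate are then enough to drive $\lfloor(l+1)/2\rfloor$ firings of $\mathcal{R}_g$. Your sketch misses this case distinction entirely and so does not cover the general statement.

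Second, you misread the ``moreover'' clause. With $l=0$ and the trap full there are exactly $m+1$ agents and no gaps; either each of the $m+1$ states holds one agent (so the gate is already occupied), or the gate is empty and some inner state holds two. The claim is that in the latter case the extra \emph{internal} agent trickles down to the gate in time $mn$---no external arrival is involved, and $Y$ is the exit target of $\mathcal{R}_g$, not an input channel into the trap. This follows directly from the same single-agent descent estimate applied to the one doubly-occupied inner state; the paper simply remarks that ``the same process'' covers it.
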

\begin{proof}
Consider a trap with a positive surplus $l$. Since such a trap may have gaps (unoccupied states), we can identify two groups of agents descending (possibly from as far as state $g+m$) toward the gate state $g$ within this trap. The first group includes all agents that reach state~$g$. 
The second group is made up of the remaining agents which have filled the gaps inside the trap.
%The second group consists of any other descending agents that have occupied gaps within this trap.
%
While agents in these two groups may have different destinations, they all utilise rules ${\mathcal R}_i$ to reach their respective destinations, with each agent using such a rule at most $m$ times. Furthermore, we can assume that for any descending agent, all inner states along the path to the destination are already occupied. Otherwise, the agent would stop earlier at a gap.
To be more precise, an agent that occupies a gap first may later be displaced from that state by another agent. However, as agents are indistinguishable, we can interpret such an event as the later arrival moving rather than the reverse.
%
%This implies that during every interaction of an agent targeting the gate state, the probability of the agent moving one state closer to its destination (i.e., a transition rule ${\mathcal R}_i$ being applied) is at least $\frac{2}{n^2}$.
The probability that, in a given interaction, an agent heading towards the gate state will take a step forward is at least as high as the probability of interaction with the agent occupying the inner state, i.e., $\frac{2}{n^2}$.
Therefore, over $0.75mn^2$ interactions, the expected number of movements of this agent along descending states within the trap is at least $\mu=1.5m$.
By the Chernoff bound, the probability that this number is smaller than $m=\frac{2}{3}\mu$ is at most $e^{-\mu/18}=e^{-m/12}<n^{-\eta-2}$, assuming $m>n^{\varepsilon}$.
After the subsequent $0.75mn^2$ interactions, $\lfloor\frac{l+1}{2}\rfloor$ agents visiting the gate state $g$ exit this trap, with the same number being transferred to state $g+m$. Also, the probability that an agent vacates state $g$ in an interaction is at least $2/n^2$. Thus, the probability that this agent remains in state $g$ during $\frac{1}{4}mn^2$ interactions is less than $(1-2/n^2)^{mn^2/4}<e^{-m/2}<n^{-\eta-2}$.
Finally, by the union bound, during $mn^2$ interactions, which correspond to a time of $mn$, at least $\lfloor\frac{l+1}{2}\rfloor$ agents leave the trap.
%The number of interactions $mn^2$ is equivalent to the time $mn$.

When $l=0$, the same process ensures that an agent arrives at the gate state.
    
    After at least $l+1$ agents visiting the gate state during the considered $mn^2$ interactions are released whp, at most $\lceil\frac{l+1}{2}\rceil$ agents will visit the gate state again during  the subsequent $mn^2$ interactions whp. By iterating this reasoning, after the subsequent $mn^2$ interactions, we are left whp with at most $\lceil\frac{l+1}{4}\rceil$ agents that are yet to visit the gate state. Thus, whp after  $mn^2\lceil\log (l+1)\rceil$ interactions, we are left with exactly one agent that is yet to visit the gate state. This agent arrives at the gate state whp after the subsequent $mn^2$ interactions, which correspond to a time of $mn$.
\end{proof}

Finally, we say that a trap occupied by exactly $m+1$ agents is {\em almost stabilised}, if it is %saturated, but one of the inner states is occupied by two agents, i.e., 
full and the gate state is empty, and %a trap is
{\em fully stabilised} if each state in the trap is occupied by exactly one agent.

\subsection{Tidy configurations}

In Sections \ref{s:k-distant} and \ref{s:one}, we investigate ranking protocols based on multiple traps interconnected by their gate states. In such systems, we define \textit{tidy configurations}, where within each trap, the \textit{overloaded} inner states occupied by more than one agent have higher indices than all gaps (unoccupied inner states) in this trap. The following lemma holds.

\begin{lemma}\label{l:tidy}
    The configuration of states becomes and remains tidy 
    after time $mn$ whp.
\end{lemma}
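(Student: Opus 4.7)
The plan is to focus on a single trap and conclude by a union bound, since tidiness is a property local to each trap. Within one trap on states $\{0,1,\ldots,m\}$ call a pair $(i,j)$ an \emph{inversion} if $i$ is an overloaded inner state ($c_i\ge 2$), $j$ is a gap (unoccupied inner state), and $i<j$; I would show that all inversions vanish within time $mn$ whp, and then check that tidiness is preserved by the dynamics thereafter.

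Two structural observations drive the first part. First, once an inner state $i>0$ is non-empty it stays non-empty, because $\mathcal{R}_i$ always leaves at least one agent at $i$ and no other rule removes agents from $i$; hence the set of inner gaps is monotonically non-increasing. Second, each firing of $\mathcal{R}_i$ shifts one excess from $i$ to $i-1$, and $\mathcal{R}_g$ recycles any excess that has descended to the gate back to the top inner state $m$. Consequently, an excess that starts below a gap must, step by step, either fill a gap during its descent (eliminating an inversion) or arrive at the gate and be redeposited at $m$, which lies above every inner gap. To quantify the time I would borrow the Chernoff-style computation from Lemma~\ref{l:surplus}: any overloaded state fires its rule with probability at least $2/n^2$ per interaction, so in $mn^2$ interactions it fires at least $m$ times with failure probability at most $e^{-m/12}<n^{-\eta-2}$, using $m>n^{\varepsilon}$; this is enough downward steps to carry any excess all the way from its initial state down to the gate. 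A union bound over the $O(n)$ excesses and $O(n)$ traps keeps the total failure probability at $n^{-\Omega(1)}$.

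For the ``remains tidy'' clause I would verify the invariance directly. From a tidy configuration, firing $\mathcal{R}_i$ either fills the gap at $i-1$ (removing an existing inversion) or shifts an excess to $i-1$, creating a newly overloaded state whose index strictly exceeds every remaining gap, because tidiness placed all gaps strictly below $i$ and $i-1$ is itself no longer a gap after the firing. Firing $\mathcal{R}_g$ deposits an agent at the top inner state $m$, which lies above every inner gap. Interactions arriving from an adjacent trap affect only the gate state $0$, which is not an inner state and therefore cannot introduce inner gaps or inner overloaded states relevant for tidiness. The main obstacle is the cascading nature of the descent: an excess at $i$ can spawn further overloaded states at $i-1, i-2, \ldots$ on its way down, so no naive per-state potential function decreases monotonically at each step; instead the argument has to track excesses individually and rely on the $m$-step descent guarantee to ensure that every excess has enough opportunities either to annihilate against a gap or to be recycled to a position above all gaps.
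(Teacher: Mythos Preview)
Your proposal is correct and follows essentially the same route as the paper: both arguments rely on the descent estimate from Lemma~\ref{l:surplus} to conclude that every excess agent either fills a gap or reaches the gate within time $mn$ whp, after which no overloaded inner state can sit below a gap. The paper's proof is a one-line appeal to Lemma~\ref{l:surplus} and asserts the invariance of tidiness without justification; your explicit case analysis of $\mathcal{R}_i$, $\mathcal{R}_g$, and inter-trap arrivals fills in that gap and is a welcome addition. One minor wording slip: in a tidy configuration there are no inversions by definition, so ``removing an existing inversion'' should read ``removing a gap''; similarly, in the branch where $i-1$ was already occupied, the phrase ``$i-1$ is itself no longer a gap'' is moot since it was never a gap in that case --- but the logic behind both sentences is sound.
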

\begin{proof}
By %the proof of 
Lemma~\ref{l:surplus} all agents that do not fill the gaps, get to the gate in time $0.75mn<mn$ whp.
Once it happens, the resulting configuration is tidy
%    Using an analogous argument to the proof of Lemma~\ref{l:surplus} and the union bound, one can demonstrate that in each trap, all excess agents either fill in existing gaps or descend to the gate state after $O(mn^2)$ interactions whp. And when this occurs, 
and all subsequent configurations are also tidy. 
    
\end{proof}

%\subsection{Tidy configurations}

%Consider a protocol utilising traps. Recall that agents can only enter traps via gate states. 

\section{State-optimal k-distant ranking}\label{s:k-distant}

In this section, we present and analyse a new state-optimal ranking protocol stabilising silently in time $O({\rm\/min}(kn^{3/2},n^2\log^2 n)),$ for any $k$-distant starting configuration.
The argument is split into two cases with small values of $k$ (Section~\ref{ss:km}), and arbitrarily large $k$ (Section~\ref{ss:arbitrary}).

\subsection{Ring of traps}

In Section~\ref{s:basic}, we introduced an agent trap which admits indefinite entrapment of agents.
Here we extend this construction to a {\em $(m,m+1)$-ring-of-traps} formed of $m$ traps of size $m+1,$ where $(a,0)$ constitute the gate states in each trap $a=0,\dots,m-1,$ and $(a,b),$ for $b=1,\dots,m,$ refer to the inner states. % $b$ in trap $a$. 
Below we assume that $n=m(m+1),$ however, one can reduce some traps to less than $m+1$ states
to accommodate any other value of $n.$

The inner states $(a,b)$ use rules ${\mathcal R}_i$ supporting movement of excess agents %towards the gate state of
along the trap, where
\[(a,b)+(a,b)\rightarrow(a,b)+(a,b-1), {\rm\ for\ }b>0,
\]
while the use of the gate state the rule ${\mathcal R}_g$ results in sending one agent to the top inner state and expediting the other to the gate state of the next trap on the ring, where  
\[(a,0)+(a,0)\rightarrow(a,m)+((a+1){\rm\ mod\ }m,0).
\]

Recall that in a $k$-distant configuration, there are exactly $k$ states that are unoccupied. 
In a ring of traps, some of these missing states correspond to gaps, while others represent currently unoccupied gate states. It's important to note that the number of gaps is monotonically decreasing; that is, the inner states become increasingly occupied. However, the number of unoccupied gate states may vary over time. In our argument, we first address filling of all gaps and then focus on the full stabilisation of all traps in the ring.

\subsection{Ranking for k-distant initial configurations}\label{ss:km}
Although filling gaps occurs simultaneously in real terms, for small values of $k$, we first establish an upper bound on the time required to fill a single gap and then multiply this bound by the number $k$ of gaps.
%While in real terms reduction of gaps is done simultaneously,
%in our argument for small values of $k$ we provide an upper bound for a single gap reduction, and then multiply this bound by the number of at most $k$ gaps. 
We start with the following lemma.
\begin{lemma}
\label{f:saturation}
    The time needed to stabilise any $k$-distant configuration is $O(kn^{3/2})$ whp. 
\end{lemma}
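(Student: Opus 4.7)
My plan is to iterate a per-gap-filling subclaim $k$ times. First, by Lemma~\ref{l:tidy}, the configuration becomes and remains tidy whp after time $O(mn)=O(n^{3/2})$, which is absorbed into the target budget since $k\ge 1$. In a tidy configuration, within each trap all overloaded inner states sit strictly above all gaps, so any descending surplus agent must either fill a gap in its own trap or reach the gate state. Moreover, once a gap becomes occupied it stays occupied, so the total number of gaps is monotonically non-increasing, which reduces the stabilisation problem to bounding the time until every gap is filled.

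Next I would establish the key single-step claim: starting tidy with $g\ge 1$ gaps remaining on the ring, at least one gap is filled within a further $O(mn)$ parallel time whp. The argument rests on agent conservation: with $g$ empty rank states distributed over the ring and $n$ agents, some trap $a^*$ must contain more than $m+1$ agents, i.e.\ positive surplus. Invoking Lemma~\ref{l:surplus}(1) on $a^*$, within $mn$ time whp either (a) a surplus agent of $a^*$ descends into a gap of $a^*$, giving a direct fill, or (b) an agent is pushed via rule $\mathcal{R}_g$ into the gate state of the next trap on the ring. In case (b) one follows the ring until meeting a trap that contains a gap (which exists since $g\ge 1$); tidiness of the intermediate traps guarantees that the cascade of induced $\mathcal{R}_g$ firings eventually deposits an agent inside a gap-containing trap, where it descends to fill one of its gaps.

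Iterating the single-step claim $k$ times yields a total time of $O(kmn)=O(kn^{3/2})$. A union bound over at most $k\le n$ events, each failing with probability at most $n^{-\eta-2}$, preserves the whp guarantee after re-tuning the constant $\eta$. Finally, once all inner states are occupied, a single additional $O(mn)$ phase balances the gate-state occupancies by one more invocation of Lemma~\ref{l:surplus}, completing full stabilisation within the stated bound.

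The main obstacle I expect is making case (b) of the single-step claim quantitatively tight. A worst-case initial configuration places the surplus trap and the gap-containing trap on diametrically opposite arcs of the ring, and a naive hop-by-hop analysis would pay $O(mn)$ per hop and $\Theta(m)$ hops per filling, giving $O(m^2 n)=O(n^2)$ per gap --- the very bound we are trying to beat. To avoid this blow-up the argument must exploit parallelism: once the initial $\mathcal{R}_g$ firing of $a^*$ has injected a surplus into the adjacent trap, that trap's own release becomes applicable, and subsequent hops occur in overlapping scheduler windows rather than strictly sequentially. Quantifying this effect --- essentially a wavefront or flow argument on the ring of traps, combined with amortisation of the per-hop Chernoff estimates --- is the delicate technical heart of the proof.
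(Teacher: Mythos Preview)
Your high-level plan matches the paper's---reduce to a per-step progress claim and iterate $O(k)$ times---but the proposal leaves the central difficulty unresolved and uses too coarse a progress measure.

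\textbf{The cascade.} You correctly flag case~(b) as the obstacle, but the resolution is not a wavefront or overlapping-windows argument. The paper first disposes of the easy sub-case where some trap has a gap lying below an overloaded inner state: then Lemma~\ref{l:surplus} fills that gap directly in time $O(mn)$. In the remaining sub-case every trap that still has a gap is \emph{flat}. One can then choose the sequence $a,a{+}1,\dots,a{+}c$ so that all intermediate traps $a{+}1,\dots,a{+}c{-}1$ are already full (almost or fully stabilised). By the last clause of Lemma~\ref{l:surplus}, after one $O(mn)$ period each such trap has an agent waiting at its gate. Once all intermediate gates are occupied, the agent released from trap $a$ traverses the chain by a sequence of at most $m$ single gate--gate interactions (each a rule-$\mathcal{R}_g$ firing of probability $\Theta(1/n^2)$); the whole chain therefore completes in $O(mn)=O(n^{3/2})$ time whp. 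The point is that a hop costs one interaction, not one $O(mn)$ descent, and this is available precisely because the intermediate traps were chosen to be already full rather than arbitrary.

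\textbf{The potential.} Counting only gaps is insufficient. When the cascaded agent arrives at the \emph{unoccupied} gate of a flat trap that still has gaps, no gap is filled---the agent simply sits at the gate---so your single-step claim fails as stated. Moreover, gate states can become unoccupied again whenever $\mathcal{R}_g$ fires, so ``number of unoccupied rank states'' is not monotone either. The paper instead tracks $K=k_1+2k_2$, where $k_2$ is the number of gaps and $k_1$ the number of flat traps with empty gates, verifies that $K$ is monotone non-increasing under every transition (the weighting by $2$ is exactly what absorbs the cases where $k_1$ goes up while $k_2$ goes down), and shows $K$ drops by at least one every $O(n^{3/2})$ time. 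Since $K\le 2k$ initially, iterating gives $O(kn^{3/2})$.
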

\begin{proof}

We say that a trap is \emph{flat} if no internal state of the trap is occupied by more than one agent. Let \( k_1 \) denote the number of flat traps with unoccupied gate states, where each such trap is assigned a weight of 1. Additionally, let \( k_2 \) represent the total number of gaps across all traps, with each gap assigned a weight of 2. Note that the \emph{total weight} $K$ of such a configuration satisfies \(K= k_1 + 2k_2 \leq 2k \). 

We observe that the total weight $K$ does not increase when transitioning to the next configuration. Specifically, \( k_2 \) cannot increase, since the number of gaps decreases over time. Furthermore, \( k_1 \) can only increase in two cases: (1) when a trap becomes flat, or (2) when the gate state of a flat trap becomes unoccupied.

In case (1), there must be a single internal state \( i \geq 1 \) occupied by exactly two agents, which interact according to the rule \( i+i \rightarrow i+(i - 1) \).
If \( i > 1 \), the gap state \( i - 1 \) becomes occupied, \( k_2 \) is reduced by 1, \( k_1 \) is increased by 1, and consequently, \( K \) is reduced by 1. If \( i = 1 \), both \( k_1 \) and \( K \) remain unchanged, since the number of flat traps has not changed.

In case (2), when the gate state of a trap becomes unoccupied, one agent adopts state \( m \) in this trap. If after this transition the trap is flat, i.e., if state \( m \) was unoccupied before this transition, \( k_1 \) is increased by 1 but \( k_2 \) is decreased by 1, resulting in reduction of $K$ by 1. Alternatively, if the trap is no longer flat, i.e., if state \( m \) is now occupied by 2 agents, both \( k_1 \) and \( K \) remain unchanged, since the number of flat traps with empty gates has not changed.

Thus, we can conclude that the total weight $K$ decreases monotonically. In the following, we focus on the rate at which the total weight $K$ is reduced over time.

According to Lemma~\ref{l:tidy}, after time $O(n^{3/2})$ whp any initial configuration becomes tidy. 
We now show that, within \( O(n^{3/2}) \) time, the total weight \( K \) of a tidy configuration is reduced by at least 1, provided that \( K > 0 \).

Assume first that there is a trap with a gap below a state occupied by at least two agents. In this case, some agents will descend and fill at least one gap within time \( mn = O(n^{3/2}) \) whp, as argued in Lemma \ref{l:surplus}. Consequently, \( K \) is reduced by at least 1.

Otherwise, if no such traps exist, a sequence of traps \( a, (a+1), \dots, (a+c) \) modulo \( m \) can be found, such that trap \( a \) has a surplus, traps \( a+1 \) through \( a+c-1 \) are almost or fully stabilised, and trap \( a+c \) is not full. 
By Lemma~\ref{l:surplus}, trap \( a \) releases at least one agent in time \( m = O(n^{3/2}) \) whp. Additionally, all intermediate traps \( a+1 \) through \( a+c-1 \) stabilise with their gates occupied by agents within time \( O(n^{3/2}) \) whp. This ensures that the sequence of agent transfers through the gates of traps \( a \) through \( a+c \) is completed in time \( O(n^{3/2}) \) whp, guaranteeing that a new agent reaches the gate of trap \( a+c \). 

If this new agent encounters an unoccupied gate state of trap $a+c$ and this trap is flat, \( k_1 \) is reduced by 1 instantly. Otherwise, i.e., trap $a+c$ is not flat (and configuration is tidy), after time \( O(n^{3/2}) \) whp, at least one gap in this trap gets filled in reducing $k_2$ by at least 1. 

If, on the other hand, the gate state of trap $a+c$ was occupied, then in time \( O(n\log n) \) one of the agents from the gate state moves to state $m$ of this trap. If state $m$ was unoccupied before this transition, the value of $k_2$ is reduced by 1 instantly. Otherwise, the highest gap in this trap is filled in time $O(n^{3/2})$ whp, also reducing $k_2$ by 1. Both actions may result in increasing $k_1$ by 1 (the trap becomes flat), which gives reduction of $K$ by at least 1. 

When the total weight \( K \) reaches 0, stabilisation is achieved by filling all the gates in the traps that are already almost stabilised. This occurs in time \( O(kn^{3/2}) \) whp.

\end{proof}

\subsection{Ranking for arbitrary initial configurations}\label{ss:arbitrary}

For the completeness of presentation we also consider arbitrarily large values of $k.$ In this case we utilise greater parallelism of agents' transfers guaranteed by the second case of Lemma~\ref{l:surplus}.

\begin{lemma}\label{l:arbit}
    For an arbitrary initial configuration, the $(m,m+1)$-ring-of-traps protocol stabilises in time $O(n^2\log^2 n)$ whp.
\end{lemma}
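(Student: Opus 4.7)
The plan is to replace the gap-by-gap bookkeeping of Lemma~\ref{f:saturation} with a phase-based argument that exploits the parallel release guaranteed by part~2 of Lemma~\ref{l:surplus}. By Lemma~\ref{l:tidy}, I may assume that the initial configuration is tidy at a cost of $O(n^{3/2})$ time whp. I then partition the remaining execution into phases of length $\Theta(mn\log n)=\Theta(n^{3/2}\log n)$. Within a single phase, part~2 of Lemma~\ref{l:surplus} guarantees that every trap with surplus $l$ (possibly as large as $\Theta(n)$) releases all $l$ excess agents through its gate whp. Those released agents arrive at the gate of the next trap on the ring and either occupy a previously empty gate state or descend into a gap---both of which decrease the potential $K = k_1 + 2k_2$ from the proof of Lemma~\ref{f:saturation}---or they form new surplus that is forwarded further around the ring in subsequent phases.

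The key quantitative claim is that, while $K\ge\sqrt n$, each phase reduces $K$ by $\Omega(\sqrt n)$ whp. The intuition is that the ring contains $m=\Theta(\sqrt n)$ traps, so concurrent relay chains of surplus propagating around the ring encounter $\Omega(\sqrt n)$ empty gate states and inner gaps within a single phase. Since $K=O(n)$ initially, at most $O(\sqrt n \log n)$ such phases are needed, where the extra $\log n$ absorbs Chernoff tails in the per-phase progress estimate. Once $K$ drops below $\sqrt n$, I fall back on the per-gap analysis of Lemma~\ref{f:saturation}, which contributes a further $O(\sqrt n \cdot n^{3/2})=O(n^2)$ time. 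A union bound over all phases yields total time $O(\sqrt n \log n\cdot n^{3/2}\log n)=O(n^2\log^2 n)$ whp, with an additional $O(n\log n)$ time at the end to drain the last residual excess to the gate states, again by Lemma~\ref{l:surplus}.

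The hardest step will be rigorously justifying the parallelism: all concurrent relay chains draw from the same pool of $\Theta(n^2)$ interactions per unit of time, so I must show that a phase of length $\Theta(n^{3/2}\log n)$ really lets $\Omega(\sqrt n)$ chains complete simultaneously. My plan is to treat each of the $m=O(\sqrt n)$ gates separately: each specific ordered gate-pair is selected with probability $\Theta(1/n^2)$ per interaction, so in $\Theta(n^2\log n)$ interactions it fires $\Omega(\log n)$ times whp via Corollary~\ref{coro}; a union bound over gates, and then over the $O(\sqrt n\log n)$ phases, delivers the per-phase guarantee. A secondary obstacle is showing that surplus cannot get permanently ``stuck'' in a saturated trap whose gate is empty, which is handled by the moreover-clause of Lemma~\ref{l:surplus} ensuring that an agent reaches such a gate within $O(n^{3/2})$ time and thereby re-enables outflow.
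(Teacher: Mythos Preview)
Your central quantitative claim---that while $K\ge\sqrt n$ each phase of length $\Theta(mn\log n)$ reduces $K$ by $\Omega(\sqrt n)$---does not hold, and the intuition you offer for it conflates the number of traps with the number of deficient locations. Consider the tidy configuration in which trap $0$ is completely empty (so $k_2=m$ and $k_1=1$, hence $K=2m+1\approx 2\sqrt n$), trap $1$ holds $2(m+1)$ agents, and every other trap is fully stabilised. The only way to reduce $K$ is to deliver agents to trap $0$, but surplus travels in the direction $1\to 2\to\cdots\to m-1\to 0$, so it must pass through $m-2$ fully stabilised traps. Each such trap forwards roughly one agent per $\Theta(mn)$ time (its gate must ``recharge'' by having the agent sent to state $m$ descend back down), so within a single phase only $O(\log n)$ agents reach trap $0$ and $K$ drops by $O(\log n)$, not $\Omega(\sqrt n)$. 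Your gate-firing count in the ``hardest step'' paragraph does not help here: gates in fully stabilised traps can fire, but each fire only shuttles a single agent one hop forward while sending another up to state $m$, so the bottleneck is the descent, not the gate.

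The paper's proof avoids this difficulty by abandoning the global potential $K$ altogether. It fixes a single trap (say trap $0$) with $d$ missing inner agents and tracks an \emph{accumulating wave}: over $m-1$ consecutive periods of length $O(mn\log n)$, it shows inductively that after $j$ periods at least $\sum_{i=1}^{j}n_i - j(m+1)$ agents have reached trap $j+1$, so after the full $O(n^2\log n)$ sweep at least $d'$ agents arrive at trap $0$'s gate, halving its deficit. Iterating this $O(\log n)$ times and taking a union bound over all $m$ traps gives $O(n^2\log^2 n)$. The essential difference is that the paper does not try to amortise progress across many traps in a single short phase; it budgets a full ring traversal per halving step, which is exactly what your concentrated-deficit example requires.
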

\begin{proof}

By Lemma~\ref{l:tidy}, after time $O(n^{3/2})$, the configurations of states are tidy. We will prove that any trap becomes full after time $O(n^2 \log^2 n)$ whp. Without loss of generality, we will demonstrate this for trap 0. By symmetry, the result holds for any trap, and by the union bound, it will hold for all traps simultaneously.

Assume that in a tidy configuration, trap 0 contains \( m - d \leq m \) agents in its inner states. For saturation, such a trap requires the same number of agents to arrive at the gate as a trap with \( d \) gaps. According to Fact~\ref{f:2k}, this number of agents is \( 2d \). Furthermore, the arrival of one additional agent causes the trap to become full. This process is completed in time \( O(n^{3/2}) \) whp, after all these agents have arrived at the gate of trap 0.

Let there be $m-d$ agents in inner states of trap 0
and $\gamma_0$ agents in the gate state.
We will show that at least $d'=d-\gamma_0+1$ agents will arrive at this gate state within time $O(n^2\log n)$ whp.
This will result in at least $\lceil d/2\rceil$ new agents in the inner states of trap 0 in time $O(n^2\log n)$~whp.

If \( d' \le 0 \), then all required agents are already in the gate state.  
Otherwise, let the traps \( 0, 1, 2, \ldots \) contain \( n_0, n_1, n_2, \ldots \) agents, respectively.  
There is an equality \( n_0 + n_1 + \cdots + n_{m-1} = n = m(m+1) \).  
Since \( n_0 = m - d + \gamma_0 = m + 1 - d' \), we have  
$n_1 + \cdots + n_{m-1} = (m-1)(m+1) + d'.$
From the second point of Lemma~\ref{l:surplus}, after time \( O(mn \log n) \), at least \( n_1 - (m+1) \) agents will transition to trap 2 whp.  
Note that if \( n_1 - (m+1) \le 0 \), then no agents are required to move; and if \( n_1 - (m+1) > 0 \), then this value is the surplus to which Lemma~\ref{l:surplus} refers.  
By the same principle, in the next period of length \( O(mn \log n) \), at least \( n_1 + n_2 - 2(m+1) \) agents will transition to trap 3 whp.  
After the next period of length \( O(mn \log n) \), at least \( n_1 + n_2 + n_3 - 3(m+1) \) agents will go to trap 4 whp, and so on.  
Finally, after a total time of \( O(n^2 \log n) \), at least \( n_1 + \cdots + n_{m-1} - (m-1)(m+1) = d' \) agents will reach the gate state of trap 0 whp.

Since at least \( \lceil d/2 \rceil \) new agents enter the inner states of trap 0, at least \( m - \lfloor d/2 \rfloor \) agents will be in the inner states of trap 0 after time \( O(mn \log n) \) whp.  
After repeating this process \( \log d \) times, we obtain a configuration with at least \( m \) agents in the inner states of trap~0.  
One further repetition will cause trap 0 to become full.  
This results in a total time of \( O(n^2 \log^2 n) \) whp.

Since at least $\lceil d/2\rceil$ new agents enter the inner states of trap 0, then at least $m-\lfloor d/2\rfloor$ agents are in inner states of trap 0
after time $O(mn\log n)$ whp.
After repeating this process $\log d$ times, we will obtain a configuration with at least $m$ agents in the inner states of trap~0.
One further repetitions will cause trap 0 to become full.
This results in the total time $O(n^2\log^2 n)$ whp.
\end{proof}

By combining Lemma \ref{f:saturation} and Lemma~\ref{l:arbit} we get Theorem \ref{l:kdist}.

\begin{theorem}\label{l:kdist}
    By solving ranking for a $k$-distant configuration, %with $k < n$, 
    we obtain a silent, self-stabilising leader election protocol with a running time of $O(\min(kn^{3/2}, n^2 \log^2 n))$ whp.
\end{theorem}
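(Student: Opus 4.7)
The plan is straightforward: the theorem follows almost immediately by combining the two lemmas already established for the $(m,m+1)$-ring-of-traps protocol. Both Lemma~\ref{f:saturation} and Lemma~\ref{l:arbit} analyse the \emph{same} protocol, merely applying different arguments suited to different regimes of $k$. Consequently, any single execution satisfies both tail bounds, so the stabilisation time is bounded whp by the minimum of $O(kn^{3/2})$ and $O(n^2\log^2 n)$.

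Concretely, I would first invoke Lemma~\ref{f:saturation} to bound the stabilisation time by $O(kn^{3/2})$ whp. This bound is sharp when $k$ is small, reflecting the fact that weight $K \le 2k$ decreases by at least $1$ in every window of length $O(n^{3/2})$. Then I would invoke Lemma~\ref{l:arbit} to obtain the $O(n^2\log^2 n)$ whp bound, which is preferable when $k$ is large, because the parallelism of trap saturation exploited there is insensitive to~$k$. Since both bounds hold with probability at least $1-n^{-\eta}$ for the same execution, a union bound preserves the high-probability guarantee for their minimum.

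It remains to record why the resulting ranking protocol yields a silent, stable, self-stabilising leader election procedure. Stability is built into the ring-of-traps construction: by Fact~\ref{f:full} a full trap remains full, and once every trap is fully stabilised each rank state is occupied by exactly one agent and no rule is ever applied again, giving silence. Leader election then follows trivially by designating the agent in any fixed rank state (say $(0,0)$) as the leader.

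I do not expect a real obstacle here, since the work is carried out in the two supporting lemmas. The only point worth stating carefully is that the two whp estimates refer to the same protocol execution, so the $\min$ is genuinely attainable rather than requiring the algorithm to ``choose'' between strategies based on an unknown $k$; the value of $k$ need not be known to the agents.
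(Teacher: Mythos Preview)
Your proposal is correct and matches the paper's approach exactly: the paper's proof is the single sentence ``By combining Lemma~\ref{f:saturation} and Lemma~\ref{l:arbit} we get Theorem~\ref{l:kdist}.'' Your added remarks about the union bound and about silence/stability/leader designation are sound elaborations that the paper leaves implicit.
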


%\begin{theorem}\label{l:kdist}
%    For an arbitrary $k<n,$ the time needed to silently stabilise a $k$-distant configuration is \linebreak $O({\rm\/min}(kn^{3/2},n^2\log^2 n))$ whp.
%\end{theorem}

In particular, when $k=o(\sqrt{n})$ leader election protocol stabilises in time $o(n^2)$.

%In particular, when $k=o(\sqrt{n})$ {\em ring-of-traps} protocol stabilises in time $o(n^2)$.

\section{Ranking with one extra state}\label{s:one}

In this section, we demonstrate that the introduction of one additional state, i.e., when $x=1$, enables a reduction in the stabilisation time of ranking from $O(n^2)$ to $O(n^{7/4}\log^2 n)=o(n^2)$ for an arbitrary initial configuration of states.
This new 
%ranking 
protocol is based on the concept of a {\em line of traps}.

\subsection{Lines of traps}

A {\em line of traps} comprises a sequence of traps connected consecutively. In the new ranking protocol $m^2$ lines of traps are deployed, with each line containing $3m$ traps of size $m+1$.
Hence, $n=3m^3(m+1).$
%More formally, a line of traps is a subset of states denoted as $(a,b)$, where $a\in[1,3m],b\in[0,m]$,
%and the rules outlined in the transition function below.
More formally, a line of traps is a subset of states denoted as $(a,b)$, where $a\in[1,3m]$ and $b\in[0,m]$, subject to the rules outlined in the transition function below.
%governed by rules of transition function described below.
Similar to the ring of traps, within each trap $a=1,\dots,3m,$ state $(a,0)$ represents the gate state, while the remaining states $(a,b),$ for $b>0,$ denote the inner states.
For every line, the gate $(3m,0)$ functions as the {\em entrance gate}, while the gate $(1,0)$ acts as the {\em exit gate}.

The ranking protocol utilises also a single extra state $X$ (not belonging to any line of traps) which collects agents released by the exit gate in each line of traps.
%Agents can be sent from state $X$ to the entrance gate of any line at random time.
%
In Section~\ref{s:prot_def}, we describe a mechanism that utilises random interactions to transition agents in state $X$ to the entrance gate of any line.
%
%In Section~\ref{s:prot_def}, we describe a mechanism that uses random interactions involving agents in state $X$ to transition them to the entrance gate of any line.
%
%We describe in Section~\ref{s:prot_def} a mechanism that uses random encounters with other agents to send agents from state $X$ to the entrance gate of any line.
%at a random time.
We have the following transition rule for inner states in each trap:
\[
(a,b)+(a,b)\rightarrow (a,b)+(a,b-1), \mbox{ for } b>0.
\]
The following transition rule moves agents between two consecutive traps on the line:
\[
(a,0)+(a,0)\rightarrow (a,m)+(a-1,0), \mbox{ for } a>1.
\]
Finally, the (exit) gate of the last trap releases agents to the extra state $X$:
\[
(1,0)+(1,0)\rightarrow (1,m)+X.
\]

The arrangement of agents in a line $l$ %at a given moment of computation
is called a configuration~\( C_l \).
It describes the numbers of agents in all states at a given moment.
%It describes the numbers of agents in all states along the line.
Denote by $|C_l|$ the number of agents present in configuration $C_l$.
We say that a line is {\em full}, when all its traps are full.
By Lemma \ref{l:tidy}, after time $O(mn)$ all configurations $C_l$ remain tidy whp. Due to this fact, we assume that all configurations analysed in this section are tidy.
For the sake of the analysis, a tidy configuration of agents in a line can be described by two vectors.

\begin{itemize}
\item
In the {\em inner vector} \( \beta(C_l) = (\beta_{3m}(C_l), \beta_{3m-1}(C_l), \ldots, \beta_2(C_l), \beta_1(C_l))
\),
the value \( \beta_a(C_l) \) represents the number of agents that are in the inner states of trap \( a \).
\item
In the {\em gate vector} \( \gamma(C_l) = (\gamma_{3m}(C_l), \gamma_{3m-1}(C_l), \ldots, \gamma_2(C_l), \gamma_1(C_l)),
\)
the value \( \gamma_a(C_l) \) represents the number of agents that are in the gate state of trap \( a \).
\end{itemize}

In Lemma~\ref{unique}, we prove that if no agents arrive at the gate of line $l$ during the execution of the protocol, the number of agents released by this line before reaching a silent configuration depends solely on the initial configuration $C_l$. We call this number of released agents the {\em surplus} and denote it by $s(C_l)$.

\begin{lemma}\label{unique}
Assume that the protocol executed on line \( l \) starts in a  configuration \( C_l \) with inner vector \( \beta(C_l) \) and gate vector \( \gamma(C_l) \), and that, during the execution of the protocol, no new agents arrive at the entrance gate of line \( l \). 
Let \( s(C_l) \) denote the number of agents released by the line before eventually reaching a silent configuration \( \overline{C_l} \), with vectors \( \beta(\overline{C_l}) = (\beta_{3m}(\overline{C_l}), \ldots, \beta_1(\overline{C_l})) \) and \( \gamma(\overline{C_l}) = (\gamma_{3m}(\overline{C_l}), \ldots, \gamma_1(\overline{C_l})) \). 
The content of the vectors \( \beta(\overline{C_l}) \), \( \gamma(\overline{C_l}) \), and the value of \( s(C_l) \) depend solely on the initial configuration \( C_l \).

\end{lemma}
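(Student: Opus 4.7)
I would view the execution of the protocol on line $l$ in isolation as a multiset rewriting system: a configuration is completely described by the multiset of states occupied by the agents, and each rule removes a pair of agents in some state $(a,b)$ and adds one agent to each of (at most two) destination states. The goal is to prove that this rewriting system is confluent, so that regardless of the scheduler's choices the terminal multiset $\overline{C_l}$, and hence both $\beta(\overline{C_l})$, $\gamma(\overline{C_l})$ and the release count $s(C_l)=|C_l|-|\overline{C_l}|$, depend solely on $C_l$.

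\textbf{Termination.} Although the statement takes termination for granted (through the phrase ``eventually reaches a silent configuration''), it is convenient to re-establish it explicitly, since confluence together with termination gives uniqueness of normal forms. Let $N_a$ denote the total number of agents currently inside trap $a$, and let $n_{a,b}$ denote the number of agents in state $(a,b)$. Define the lexicographic potential $(\Phi_1,\Phi_2)$ with $\Phi_1=\sum_{a=1}^{3m} a\,N_a$ and $\Phi_2=\sum_{a,b} b\,n_{a,b}$. A descent rule leaves $\Phi_1$ unchanged and strictly decreases $\Phi_2$ by $1$; a transit rule decreases $\Phi_1$ by $1$; the exit rule also decreases $\Phi_1$ by $1$, because one agent leaves the line entirely. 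Since both components are non-negative integers, $(\Phi_1,\Phi_2)$ is well-ordered and strictly decreases on every step, which yields termination.

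\textbf{Local confluence.} Suppose the configuration $C$ admits two distinct one-step successors $C\to C_1$ and $C\to C_2$, triggered by the rules at states $s_1$ and $s_2$ respectively. If $s_1=s_2$, indistinguishability of agents and the fact that there is a unique rule per state forces $C_1=C_2$. If $s_1\neq s_2$, each rule only modifies the counts $n_{a,b}$ by adding or subtracting constants in a fixed set of coordinates, so the two multiset updates commute as operations on $\mathbb{Z}^S$. Hence applying the $s_2$-rule to $C_1$ and the $s_1$-rule to $C_2$ yields the same multiset $C'$. The only genuinely delicate sub-case is when an output state of one rule coincides with the input state of the other (for example descent in $(a,1)$ feeding the gate $(a,0)$, whose own rule is simultaneously enabled); I would check this sub-case by direct bookkeeping of the counts, which confirms commutativity. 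This gives the diamond property.

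\textbf{Conclusion.} Termination and local confluence together imply, by Newman's lemma, that the rewriting system has unique normal forms. Since the silent configurations are precisely the normal forms of the system (no state contains two agents, hence no rule applies), the silent configuration $\overline{C_l}$ that is eventually reached is uniquely determined by $C_l$, and so are its inner and gate vectors $\beta(\overline{C_l})$ and $\gamma(\overline{C_l})$. Finally, conservation of agents (aside from those ejected by the exit gate) forces $s(C_l)=|C_l|-|\overline{C_l}|$, which is therefore also a function of $C_l$ alone. The main obstacle I anticipate is cleanly handling the overlap case in local confluence when one rule's right-hand side interacts with another rule's left-hand side; but the indistinguishability of agents and the additive nature of multiset updates reduce it to straightforward arithmetic.
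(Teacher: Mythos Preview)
Your proof is correct and takes a genuinely different route from the paper. The paper proceeds by an explicit induction on the trap index $a$, descending from $3m$ to $1$: it introduces auxiliary quantities $x_a$ (the number of agents trap $a$ ultimately receives from trap $a+1$), sets $y_a=x_a+\gamma_a$, and shows via a two-case analysis on whether $\beta_a+\lfloor y_a/2\rfloor\le m$ that $x_{a-1}$, $\beta_a(\overline{C_l})$, and $\gamma_a(\overline{C_l})$ are explicit functions of $\beta_a,\gamma_a,x_a$. In other words, the paper \emph{computes} $\overline{C_l}$ trap by trap. Your argument, by contrast, never computes anything: a lexicographic potential gives termination, additivity of multiset updates gives local confluence (and the ``delicate'' overlap case you flag is indeed harmless, since firing at $s_1\neq s_2$ never decreases the count at $s_2$), and Newman's lemma closes the deal. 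This is cleaner and does not even rely on the tidiness hypothesis. The trade-off is that the paper's constructive proof produces the quantities $x_a$ that are explicitly reused downstream (Lemma~\ref{line_throughput} says ``we adopt notations from the proof of Lemma~\ref{unique}'' and tracks the $x_{a-1}$ to bound the release time), so if you follow your route you will still have to derive those formulas separately before invoking them later.
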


\begin{proof}

    Let $\beta_a=\beta_a(C_l)$ and $ \gamma_a=\gamma_a(C_l).$ Moreover, for $a=0,\dots,3m-1,$ $x_a$ be the number of agents that trap $a$ receives from
    trap $a+1$, where $x_{3m}=0$ and $x_0=s(C_l)$ is the number of agents released from the line.
    We prove that sequences $x_a,\beta_a(\overline{C_l}),\gamma_a(\overline{C_l})$, and the value of $s(C_l)$ can be expressed in terms of $\beta(C_l)$ and $ \gamma(C_l)$.
    
    The proof is done by induction on the trap index $a$, descending from $a=3m$ to $1$.

Let us consider a fixed trap $a$.
Define $y_a=x_a+\gamma_a$ as the total number of agents visiting gate state $(a,0)$. Thus, the number of agents that enter trap $a$ during their initial interactions at this gate is $\lfloor y_a/2\rfloor$. We will consider two cases.

    If $\beta_a+\lfloor y_a/2\rfloor\leq m$, then $\beta_a(\overline{C_l})=\beta_a+\lfloor y_a/2\rfloor$. Additionally, $x_{a-1}=\lfloor y_a/2\rfloor$ agents are transferred from $(a,0)$ to $(a-1,0)$ (or to state $X$ if $a=1$). So $\gamma_a(\overline{C_l})=y_a - 2 \lfloor y_a/2 \rfloor$. In this case, no agents return to gate $(a,0)$ from inner states of trap $a$.

    If $\beta_a+\lfloor y_a/2\rfloor > m$, then $\beta_a(\overline{C_l})=m$ and $\gamma_a(\overline{C_l})=1$. Trap $a$ releases all agents not used to saturate it. Thus, $x_{a-1}=\beta_a+y_a-m-1$ agents are transferred from $(a,0)$ to $(a-1,0)$ (or to state $X$ if $a=1$).
\end{proof}

Let us assume that at some moment a single trap contains $\beta$ agents in the inner states and $\gamma$ agents in its gate.
One can calculate the numbers $\alpha\in[0,m]$, $\delta\in\{0,1\}$, and $\rho$ of agents left in inner states, the gate and released from the trap respectively, after total stabilisation without adding new agents to this trap.
If $\beta+\lfloor\frac{\gamma}{2}\rfloor\le m$, then $\alpha=\beta+\lfloor\frac{\gamma}{2}\rfloor$, $\delta=\gamma\mod 2$, $\rho=\lfloor\frac{\gamma}{2}\rfloor$.
If $\beta+\lfloor\frac{\gamma}{2}\rfloor> m$, then $\alpha=m$, $\delta=1$, $\rho=\beta+\gamma-m-1$.
We can now define vectors consisting of values $\alpha, \gamma, \rho$ for all traps in the line.

\begin{itemize}
\item
In the {\em allocation vector} \( \alpha(C_l) = (\alpha_{3m}(C_l), \alpha_{3m-1}(C_l), \ldots, \alpha_2(C_l), \alpha_1(C_l)),
\)
where\\ \( \alpha_a(C_l) =\min\left\{\beta_a(C_l)+\lfloor\frac{\gamma_a(C_l)}{2}\rfloor,m\right\}\).
\item
In the {\em target gate vector} \( \delta(C_l) = (\delta_{3m}(C_l), \delta_{3m-1},(C_l) \ldots, \delta_2(C_l), \delta_1(C_l)),
\)
where $\delta_a(C_l)=\gamma_a(C_l)\mod 2$, for $\beta_a(C_l)+\lfloor\frac{\gamma_a(C_l)}{2}\rfloor\le m$ and $\delta_a(C_l)=1$ otherwise.
\item
In the {\em excess vector} \( \rho(C_l) = (\rho_{3m}(C_l), \rho_{3m-1}(C_l), \ldots, \rho_2(C_l), \rho_1(C_l)),
\)
where \( \rho_a(C_l) = \lfloor\frac{\gamma_a(C_l)}{2}\rfloor\), if $\beta_a(C_l)+\lfloor\frac{\gamma_a(C_l)}{2}\rfloor\le m$ and $\rho_a(C_l)=\beta_a(C_l)+\gamma_a(C_l)-m-1$ otherwise.
%Let $r(C_l)=\sum_a \rho_a(C_l)$.
\end{itemize}

Consider the excess vector $\rho(C_l)=(\rho_{3m},\ldots,\rho_1)$,
where each trap $a$ contains {\em excess} of
size $\rho_a$.
We interpret the current excess of each trap as the number of {\em tokens} associated with this trap.
The total number of tokens is $r(C_l)=\sum_a \rho_a(C_l)$.
%We place {\em tokens} corresponding to excess units on traps. Specifically, for each
%$a$ we put $\rho_a$ tokens on trap $a$.
Tokens corresponding to vector $\rho(C_l)$ are numbered from 1 to $r(C_l)$ in such a way that the lower the trap number, the lower the token number.
Consider a sequence of interactions where $C_l$ is the initial
configuration and agents can enter the line during the protocol.
Subsequent interactions can either move tokens towards traps with lower numbers or eliminate them.
Agents entering the line turn into new tokens, which receive
subsequent numbers greater than $r(C_l)$.
A movement of an agent from trap $a$ to $a-1$ decreases the value of $\rho_a$ by one.
If this is associated with an increase in the value of $\rho_{a-1}$, we say that a token {\em moves} from trap $a$ to $a-1$.
On the other hand, if $\rho_{a-1}$ remains the same, a token is {\em eliminated} in trap $a-1.$
Such an elimination is associated with a permanent increase in the number of agents in the trap $a-1$.
We assume that if a trap contains multiple tokens, then tokens with the lowest numbers are moved to the next trap or eliminated first.
Moving tokens this way maintains their ordering consistent with the order of traps. 
Note that the number of tokens can increase only by adding
agents to the entrance gate of the line.
In particular, if no agents enter a line
and configuration $C_l'$ occurs after configuration $C_l$,
then
$r(C_l')\le r(C_l)$.
If a token is eliminated or passes through the entire line, resulting in an agent being released from the line, we say that this token has been {\em handled}.

We define an order on the configurations in such a way that $C_l\le C'_l$ if for all $a$, the conditions 
$\alpha_a(C_l)\le \alpha_a(C'_l)$, and $\alpha_a(C_l)+\delta_a(C_l)\le \alpha_a(C'_l)+\delta_a(C'_l)$ hold simultaneously.
Also in addition to the surplus $s(C_l)$ for configuration $C_l$, one can define its {\em deficit} $d(C_l)=3m(m+1)-|\overline{C_l}|$ as the number of unoccupied states in $\overline{C_l}$.

\begin{lemma}\label{reduce_d}
Let $C_l$ be a line configuration with deficit $d(C_l)$.
If $C_l'$ is obtained by inserting $\min\{d(C_l)+m,2d(C_l)\}$ additional agents at the entrance gate of this line, % to the line in configuration $C_l$,
then $d(C_l')=0$.
\end{lemma}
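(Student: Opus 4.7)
The plan is to track, trap by trap, how many agents reach each gate along line $l$ and to verify that the inserted $A$ agents suffice to saturate every gap. By Lemma~\ref{unique} we may replace $C_l$ by $\overline{C_l}$, so that initially $\beta_a=\alpha_a$, $\gamma_a=\delta_a$, the number of empty states in trap $a$ is $g_a=m+1-\alpha_a-\delta_a$, and $d=\sum_{a=1}^{3m}g_a$. Inserting $A$ agents at the entrance gate modifies only $\gamma_{3m}(C_l')=\delta_{3m}+A$; running the recursion of Lemma~\ref{unique} on $C_l'$ we obtain $y_a=x_a+\delta_a$ for $a<3m$ and $y_{3m}=\delta_{3m}+A$, where $x_a$ is the number of agents transferred by the gate rule of trap $a$ to gate $(a-1,0)$, or to $X$ when $a=1$.

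I then argue by descending induction on $a$ that every trap $b\ge a$ saturates in $\overline{C_l'}$ and that $x_a=A-\sum_{b>a}g_b$. Inspecting the two branches of Lemma~\ref{unique}, trap $a$ ends full exactly when $y_a\ge 2(g_a+\delta_a)-1$: the main branch $\alpha_a+\lfloor y_a/2\rfloor>m$ covers $y_a\ge 2(g_a+\delta_a)$, and the boundary case $\alpha_a+\lfloor y_a/2\rfloor=m$ with $y_a$ odd catches the single remaining value $y_a=2(g_a+\delta_a)-1$. In either case the release satisfies $x_{a-1}=x_a-g_a$ (with the convention $x_{3m}=0$, so $x_{3m-1}=A-g_{3m}$), closing the induction. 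After substitution the saturation condition becomes
\[
A\ \ge\ (g_a+\delta_a-1)+\sum_{b\ge a}g_b \qquad \text{for every } a\in\{1,\dots,3m\}.
\]

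Finally, I bound the right-hand side uniformly in $a$ in two different ways. Since $g_a+\delta_a-1=m-\alpha_a\le m$ and $\sum_{b\ge a}g_b\le d$, the right-hand side is at most $d+m$. Using instead $g_a+\delta_a-1\le g_a$ together with $g_a+\sum_{b\ge a}g_b=2g_a+\sum_{b>a}g_b\le g_a+d\le 2d$, it is at most $2d$. Thus $\max_a[(g_a+\delta_a-1)+\sum_{b\ge a}g_b]\le\min\{d+m,2d\}$, which is exactly the number of inserted agents, so every trap of $\overline{C_l'}$ is full and $d(C_l')=0$.

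The main delicacy is the $-1$ in the filling threshold $2(g_a+\delta_a)-1$. It comes from the parity edge case of Lemma~\ref{unique}: placing exactly $m$ agents in the inner states can leave a single unpaired agent at the gate, saturating the trap even while still in the first branch of the lemma. Without this tight parity accounting one would accumulate a $\Theta(1)$ slack per trap and lose the $d+m$ bound.
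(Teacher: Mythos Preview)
Your proof is correct and in fact more careful than the paper's. The paper argues informally: for the $2d(C_l)$ bound it observes that at every non-full trap one of each pair of arriving agents stays, so $2d(C_l)$ insertions fill all $d(C_l)$ empty slots; for the $d(C_l)+m$ bound it notes that an agent passing through the whole line raises $\alpha_a$ by one in every non-full trap, which can happen at most $m$ times, after which each further agent reduces the deficit. You instead reduce to the stabilised configuration $\overline{C_l}$, run the recursion of Lemma~\ref{unique} explicitly, and extract the sharp per-trap filling condition $A\ge (g_a+\delta_a-1)+\sum_{b\ge a}g_b$, then bound the maximum over $a$ by both $d+m$ and $2d$. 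Your route yields the exact threshold (and the parity observation about the $-1$ is a nice point the paper does not make explicit), while the paper's argument is shorter and closer to the flow intuition used later in Section~\ref{s:one}.

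Two small presentational points. First, the reduction ``by Lemma~\ref{unique} we may replace $C_l$ by $\overline{C_l}$'' deserves one more sentence: it holds because Lemma~\ref{unique} makes the silent limit independent of interaction order, so from $C_l'$ one may first let the original agents stabilise to $\overline{C_l}$ while the $A$ inserted agents sit idle at gate $3m$, and only then process them. Second, there is a notational slip around the base case: with $x_a$ defined (as in Lemma~\ref{unique}) as the number of agents trap $a$ \emph{receives}, the convention $x_{3m}=0$ does not give $x_{3m-1}=A-g_{3m}$ via the recursion $x_{a-1}=x_a-g_a$; rather, $x_{3m-1}=A-g_{3m}$ is computed directly from $y_{3m}=\delta_{3m}+A$, and the recursion $x_{a-1}=x_a-g_a$ then holds for $a\le 3m-1$. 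This is purely cosmetic and does not affect the argument.
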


\begin{proof}
Every other agent entering a trap that is not full remains in it. Thus, by inserting $2d(C_l)$ agents we make all traps full.
On the other hand, if an agent passes through all traps and is released to state $X$,
this increases the $\alpha_a$ value by 1 in all non-full traps $a$.
Such an increase can occur at most $m$ times. The remaining additional agents passing
through the line decrease the value of $d(C_l)$.
Therefore, $d(C_l)+m$ additional agents are enough to reduce
the value of $d(C_l)$ to 0.
\end{proof}

In what follows, we consider a scenario where agents can enter the entrance gate of the line during the execution of the protocol. We will prove the following three lemmas.

\begin{lemma}\label{monotone}
%We consider the protocol in a single line starting from configuration $C_l$ and assume agents can enter the line.
Suppose some agents can enter the line during the execution of the protocol, starting from configuration $C_l$. Once $r(C_l)$ tokens with the smallest numbers are handled, the line reaches a configuration $C_l^*\ge \overline{C_l}$, as defined in Lemma \ref{unique}. Between configurations $C_l$ and $C_l^*,$
$s(C_l)$ agents are released from the line.
Moreover, the number of tokens remaining in the line in $C_l^*$ does not exceed the number of agents that entered the line during this process.
\end{lemma}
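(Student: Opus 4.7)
The plan is to couple the mixed-scenario execution with the pure one analysed in Lemma~\ref{unique}. Since tokens are handled lowest-number-first and the original tokens carry numbers $1,\ldots,r(C_l)$, all originals are processed before any new token can be eliminated or released. The third claim then follows immediately: every token present in $C_l^*$ must carry a number exceeding $r(C_l)$, and such tokens arise only from agents that entered at the entrance gate during the process, so their count is at most the number of new arrivals.

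For the first two claims, I would proceed by induction on the trap index $a$, descending from $a = 3m$ down to $a = 1$. The induction hypothesis for trap $a$ is as follows: at the moment all original tokens whose trajectories pass through trap $a$ have left it, the number of original tokens that have traversed gate $(a,0)$ equals the analogous count in the pure scenario, and the trap satisfies $\alpha_a \ge \alpha_a(\overline{C_l})$ together with $\alpha_a + \delta_a \ge \alpha_a(\overline{C_l}) + \delta_a(\overline{C_l})$. The base case $a=3m$ starts from identical values $\beta_{3m}(C_l)$ and $\gamma_{3m}(C_l)$ in both runs; extra arrivals due to new agents can only raise the inner-state occupancy (or keep $\alpha_{3m}$ at $m$), which preserves both inequalities while leaving the count of original tokens shipped to trap $3m-1$ equal to that in the pure run. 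In the inductive step, the lowest-first rule ensures that within trap $a$ the originals exit (into inner gaps or through gate $(a,0)$) before any higher-numbered token interacts there, so the originals experience gate arithmetic at least as favourable to passage as in the pure scenario; any newer tokens present in trap $a$ contribute only by further filling inner states, pushing $\alpha_a$ upward without altering the count of originals that leave.

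Summing the gate traversals across all traps and projecting onto the exit gate $(1,0)$ yields exactly $s(C_l)$ agents released between $C_l$ and $C_l^*$, and the trap-wise domination gives $C_l^* \ge \overline{C_l}$ in the order introduced before the lemma. I expect the main obstacle to be the verification that higher-numbered tokens in a trap genuinely do not disturb the handling of originals at its gate. This amounts to checking, at the gate-interaction level, that whenever an original token is at gate $(a,0)$ it is never forced to pair with a higher-numbered token while another original is simultaneously waiting there, so that the case analysis of Lemma~\ref{unique} (computing $\alpha_a(\overline{C_l})$, $\delta_a(\overline{C_l})$ and the outgoing count from $\beta_a$, $\gamma_a$) continues to describe the outcome for the originals. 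Once this invariant is justified by a careful examination of the lowest-first convention coupled with the gate rule, the induction closes and all three assertions of the lemma follow.
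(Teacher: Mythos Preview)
Your overall plan—couple the mixed run with the pure run of Lemma~\ref{unique} and argue trap by trap—is close in spirit to the paper's argument, which instead runs a time-based induction tracking per-token trap states $(\alpha_a^k,\delta_a^k)$. Two points deserve attention.

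First, your opening claim that ``all originals are processed before any new token can be eliminated or released'' is false for elimination: a new token can be eliminated upon entering some trap $a$ while original tokens still sit unhandled in traps with indices below $a$, because the lowest-first convention operates only within a single trap, not globally. (The claim \emph{is} true for release, via the global ordering invariant, since a new token leaving trap $1$ forces every lower-numbered token to have already left trap $1$ or been eliminated upstream.) Your third conclusion survives anyway, since at $C_l^*$ all originals are gone by definition, so every remaining token is new.

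Second, and this is the real gap: you correctly isolate the crux—higher-numbered tokens must not disturb how originals are handled—but you do not prove it, and your phrase ``at least as favourable to passage'' would yield only $\ge s(C_l)$ releases, not the required equality. The missing observation, which the paper uses, is that in a tidy configuration the pair $(\alpha_a,\delta_a)$ is \emph{invariant} under every transition except an arrival at $(a,0)$ from trap $a{+}1$: the gate rule shifts $(\beta_a,\gamma_a)$ by $(+1,-2)$ and so preserves both $\beta_a+\lfloor\gamma_a/2\rfloor$ and $\gamma_a\bmod 2$; inner descents with $b>1$ touch neither quantity; and the descent from $(a,1)$ to $(a,0)$ fires only when $\beta_a>m$, where $\alpha_a=m,\ \delta_a=1$ are already pinned. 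Since arrivals from trap $a{+}1$ occur in strictly increasing token-number order, the value of $(\alpha_a,\delta_a)$ seen by the $k$-th arriving token is a deterministic function of $C_l$ alone, independent of the schedule and of any later-numbered (hence possibly new) tokens. The paper encodes this as an explicit recursion $(\alpha_a^{k-1},\delta_a^{k-1})\mapsto(\alpha_a^k,\delta_a^k)$, from which both the exact release count $s(C_l)$ and the coordinate-wise monotonicity in $k$ (giving $C_l^*\ge\overline{C_l}$ once every trap has processed at least its share of the first $r(C_l)$ tokens) follow directly. Your trap-by-trap induction can be completed once you supply this invariance; without it, the inductive step does not close and the equality $s(C_l)$ is not secured.
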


\begin{proof}
We will demonstrate that each token present in the initial configuration $C_l$ is always handled in the same way, regardless of the interactions generated by the scheduler.
Namely, it is either consistently handled by releasing an agent from the line, or it is consistently eliminated in the same trap $a$.
Let $C_l'$ be a configuration that occurs (not necessarily immediately) after the configuration $C_l.$
%Let $k$ be the smallest token label present in traps $b>a$ in $C_l'$ or $k=k_{\max}+1$ where $k_{max}$ is the maximal token number. 
Let $k_1$ be the largest token number which occurred in traps $b\le a$ in configurations not later than $C_l',$ or was eliminated in trap $a$ before $C_l'$.
%Let us denote by $k_1$ the largest token number that has been moved to or eliminated in trap $a$ until configuration $C_l'$.
Let $k_2$ be the smallest token number that is present in traps $b>a$ in configuration $C_l'.$
If such a token does not exist, then $k_2$ is a number one greater than the maximum token number that has appeared before configuration $C_l'$. Let $k$ be an arbitrary integer, s.t., $k_1\le k<k_2$.
We prove that $\alpha_a(C_l')=\alpha_a^k$ and  $\delta_a(C_l')=\delta_a^k$, for some values
$\alpha_a^k,\delta_a^k$ which depend solely on the initial configuration $C_l$.
We will use time-based induction for the protocol duration.
Let $C_l''$ be the configuration
immediately following $C_l'$.
If no agent moves from trap $a+1$ to $a$ during the transition preceding $C_l''$,
 $\alpha_a(C_l'')=\alpha_a(C_l')=\alpha_a^k$ and  $\delta_a(C_l'')=\delta_a(C_l')=\delta_a^k$.
Otherwise, an agent moves from trap $a+1$ to $a$, and let $k$ be the smallest token label in trap $a+1$.
We can compute that if $\delta_a^{k-1}=0$,
then token $k$ is eliminated and $\alpha_a(C_l'')=\alpha_a^k=\alpha_a^{k-1}$ and $\delta_a(C_l'')=\delta_a^k=1$.
Also, if token $k$ is eliminated in trap $a$, then for traps $b<a$,
$\alpha_b(C_l'')=\alpha_b^k=\alpha_b^{k-1}$ and $\delta_b(C_l'')=\delta_b^k=\delta_b^{k-1}$.
When token $k$ is not eliminated, it is moved to trap $a$.
In this case if $\alpha_a^{k-1}<m$,
then $\alpha_a(C_l'')=\alpha_a^k=\alpha_a^{k-1}+1$ and $\delta_a(C_l'')=\delta_a^k=0$.
When $\alpha_a^{k-1}=m$,
then $\alpha_a(C_l'')=\alpha_a^k=m$ and $\delta_a(C_l'')=\delta_a^k=1$.

We have $\alpha_a(\overline{C_l})=\alpha_a^{r(C_l)}$ and  $\delta_a(\overline{C_l})=\delta_a^{r(C_l)}$ for all $a$.
Since we have assumed that the $r(C_l)$ tokens with the lowest numbers in $C_l^*$ have been processed, then equalities
$\alpha_a(C_l^*)=\alpha_a^{k(a)}$ and $\delta_a(C_l^*)=\delta_a^{k(a)}$ hold for some numbers $k(a)\ge r(C_l)$.
Therefore $C_l^*\ge \overline{C_l}$.
\end{proof}

\begin{lemma}\label{line_throughput}
Let us consider a scenario where a line $l$ starts in a configuration $C_l$. During a time period of $T=3m^2n(\lceil\log n\rceil+1)$, with high probability, the line releases at least $s(C_l)$ agents reaching a configuration $C_l^*\ge \overline{C_l}$.
The number of tokens remaining in the line in $C_l^*$ does not exceed the number of agents which entered the line during execution of the protocol.
\end{lemma}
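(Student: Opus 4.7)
The plan is to combine a phase-based timing argument driven by Lemma~\ref{l:surplus} with the structural guarantees of Lemma~\ref{monotone}, so that the task reduces to bounding how long it takes for all $r(C_l)$ original (smallest-numbered) tokens of $C_l$ to be handled.

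I would partition the window of length $T$ into $3m$ consecutive phases, each of length $T_0 = mn(\lceil\log n\rceil + 1)$, and associate phase $i \in \{1,\ldots,3m\}$ with trap $a_i = 3m - i + 1$. The key inductive claim is: with high probability, at time $iT_0$ no original token remains in any trap $a \ge a_i$. The base case $i = 0$ is vacuous. For the inductive step, at time $(i-1)T_0$ trap $a_i$ holds some surplus $l \le n - m - 1 < n$. The second part of Lemma~\ref{l:surplus} guarantees that within the next $T_0$ time units, at least $l$ agents are released through the gate of trap $a_i$ with high probability; since lowest-numbered tokens are handled first and the inductive hypothesis forbids original tokens in any trap above $a_i$ at time $(i-1)T_0$, the tokens present in $a_i$ at that moment are the lowest-numbered available at $a_i$ during phase $i$ and hence are among the first $l$ to depart. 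Since no original token can arrive at $a_i$ from above during phase $i$, the claim at $iT_0$ follows.

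After all $3m$ phases, every trap is free of original tokens, which means all $r(C_l)$ smallest-numbered tokens have been handled. Lemma~\ref{monotone} then delivers the remaining conclusions: the configuration $C_l^*$ reached at the moment of last original handling satisfies $C_l^* \ge \overline{C_l}$, exactly $s(C_l)$ of the original tokens have led to agents being released (so at least $s(C_l)$ agents leave the line during the interval of length $T$), and the number of tokens remaining in $C_l^*$ is bounded by the number of agents that entered the line during this process. A union bound over the $3m$ applications of Lemma~\ref{l:surplus}, each failing with probability at most $n^{-(\eta+2)}$, keeps the overall failure probability well below $n^{-\eta}$.

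The main technical subtlety to address is that Lemma~\ref{l:surplus} is stated and proved for a trap studied in isolation, whereas in the line protocol new agents may continuously arrive at the gate of trap $a_i$ from trap $a_i + 1$ (and ultimately from state $X$) during phase $i$. Such extra arrivals only raise the gate-pair interaction rate and accelerate the internal descents that the Chernoff-based estimates in the proof of Lemma~\ref{l:surplus} rely on; therefore the release-time bound for the initial surplus remains valid, and the phase-by-phase argument carries through unchanged.
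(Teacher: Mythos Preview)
Your decomposition into $3m$ phases of length $mn(\lceil\log n\rceil+1)$ and the phase-by-phase appeal to Lemma~\ref{l:surplus} are exactly what the paper does; the paper tracks the cumulative transfer counts $x_a$ from the proof of Lemma~\ref{unique} and argues $\overline{C_l}\le C_l^*$ and the release of $s(C_l)=x_0$ agents directly, whereas you phrase the induction in terms of the $r(C_l)$ lowest-numbered tokens and then hand all three conclusions to Lemma~\ref{monotone}, which is a clean and legitimate shortcut.

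One step deserves more care. You invoke Lemma~\ref{l:surplus} with the instantaneous surplus $l$ of trap $a_i$ at time $(i-1)T_0$ and assert that the original tokens present there are ``among the first $l$ to depart''. This tacitly assumes that the number of tokens in $a_i$ is at most $l$, but the token count $\rho_{a_i}$ can strictly exceed the surplus: take $\beta_{a_i}=0$ and $\gamma_{a_i}=2j$ with $j\le m$; then $\rho_{a_i}=j>0$ while the surplus equals $2j-m-1$, which may be zero or undefined, so Lemma~\ref{l:surplus} as stated guarantees too few releases to flush all original tokens in one phase. The fix is easy---the descent and gate-firing estimates inside the proof of Lemma~\ref{l:surplus} actually bound the time for all $\rho_a$ gate firings, not merely the surplus-many---but you should make that extension explicit rather than quote only the surplus bound. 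The paper's cumulative $x_a$ accounting is slightly better aligned with Lemma~\ref{l:surplus} here, though it leans on the same underlying estimate.
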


\begin{proof}
We can partition a time period of length $T=3m^2n(\lceil\log n\rceil+1)$ into $3m$ successive time segments, each with a length of $mn(\lceil\log n\rceil+1)$, corresponding to traps $a=3m,\ldots,2,1$.
We adopt notations from the proof of Lemma \ref{unique}.

According to Lemma \ref{l:surplus}, when the time period corresponding to trap $a$ ends, in total at least $x_{a-1}$ agents reach the gate state of trap $a-1$ from traps with higher indices.
By the end of the time related to trap $1$, at least $x_0=s(C_l)$ agents reach the extra state~$X$.
During this time, all inner states of trap $a$ occupied in $\overline{C_l}$ become occupied.
If also the gate state of trap $a$ is occupied in $\overline{C_l}$,
then at least $\alpha_a(\overline{C_l})+1$ agents are in trap $a$ in configuration $C_l^*$. Therefore $\overline{C_l}\le C_l^*$.
\end{proof}

The last lemma of this subsection provides an estimation for the time required to handle the token with the smallest number.

\begin{lemma}\label{line_throughput1}
Let us consider a scenario where a line starts in the configuration $C_l$. After a time period of length $3mn$, with high probability, 
a token with the smallest number is handled.
\end{lemma}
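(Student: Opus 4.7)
The plan is to follow the smallest-numbered token $\tau$ of $C_l$ and show that, whp, it is handled within parallel time $3mn$. Let $a_0\le 3m$ be the trap of $\tau$ in $C_l$, i.e.\ the smallest index with $\rho_{a_0}(C_l)\ge 1$. It suffices to show that within $3mn^2$ interactions the rule $\mathcal{R}_g$ is applied at the current trap of $\tau$ at least $a_0$ times, because by the token-tracking conventions each such firing either eliminates $\tau$ or moves it one trap down; hence at most $a_0$ firings are enough to either eliminate $\tau$ or push it all the way into the extra state $X$.

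The key invariant I would establish is that whenever $\tau$ is in trap $a$, the gate $(a,0)$ carries at least two agents, i.e.\ $\gamma_a\ge 2$. This follows from $\rho_a\ge 1$: in the case $\beta_a+\lfloor\gamma_a/2\rfloor\le m$ we have $\rho_a=\lfloor\gamma_a/2\rfloor$, which forces $\gamma_a\ge 2$; in the complementary case $\beta_a\le m$ together with $\beta_a+\lfloor\gamma_a/2\rfloor>m$ again forces $\gamma_a\ge 2$. Moreover, when $\mathcal{R}_g$ at $a$ moves $\tau$ to $a-1$ rather than eliminating it, a short case check on the same definitions shows that necessarily $\gamma_{a-1}$ was exactly $1$ just before the firing (otherwise $\rho_{a-1}$ would not increase), so right after the interaction $\gamma_{a-1}=2$. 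Between advancements of $\tau$, $\gamma_a$ can only grow, because the inner rules and $\mathcal{R}_g$ at trap $a+1$ can only add agents to $(a,0)$, and $\mathcal{R}_g$ at other traps does not touch $(a,0)$.

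Given the invariant, each interaction fires $\mathcal{R}_g$ at the current trap of $\tau$ with probability at least $2/n^2$, regardless of the history. Hence over $3mn^2$ interactions the number of such firings stochastically dominates a $\mathrm{Bin}(3mn^2,2/n^2)$ variable with mean $6m$. The Chernoff bound, invoked through Corollary~\ref{coro}, then bounds the probability of fewer than $3m$ firings by $e^{-3m/4}$, which is $n^{-\omega(1)}$ since $m=\Theta(n^{1/4})$. Therefore, in parallel time $3mn$ the token $\tau$ undergoes at most $a_0\le 3m$ gate events and is handled whp.

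The main obstacle is the invariant itself rather than the probabilistic estimate: one has to verify that no sequence of events (agents arriving at the line, cascades through other traps, inner-state shifts) can temporarily reduce $\gamma_a$ below $2$ while $\tau$ sits in trap $a$, and that the transition of $\tau$ to $a-1$ preserves the invariant. The check reduces to inspecting the three rule types of the line-of-traps protocol and to the analysis of $\rho_a$, $\alpha_a$, $\delta_a$ already used in Lemmas~\ref{unique} and~\ref{monotone}, so no new technical ingredient is required.
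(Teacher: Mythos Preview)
Your invariant ``whenever $\tau$ is in trap $a$, $\gamma_a\ge 2$'' is false, and this is the crux of the argument. In the second case of the definition of $\rho_a$ you tacitly use $\beta_a\le m$, but $\beta_a$ counts \emph{agents}, not inner states, and in a tidy configuration nothing prevents $\beta_a>m$. Concretely, take $\beta_a=m+2$ and $\gamma_a=0$: then $\beta_a+\lfloor\gamma_a/2\rfloor=m+2>m$, so $\rho_a=\beta_a+\gamma_a-m-1=1$ and $\tau$ sits in trap $a$ with an empty gate. Here $\mathcal{R}_g$ cannot fire at all until the surplus descends through the inner states to $(a,0)$, and that descent alone costs $\Theta(mn)$ parallel time. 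The same obstruction breaks your preservation claim for the move $a\to a-1$: if $\beta_{a-1}=m+1$ and $\gamma_{a-1}=0$ (which gives $\rho_{a-1}=0$, consistent with $\tau$ being smallest), then the arriving agent makes $\gamma_{a-1}'=1$ and $\rho_{a-1}'=1$, so $\tau$ has moved but the gate holds only a single agent.

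This descent is precisely what the paper's proof spends its first $mn$ time budget on: it invokes Lemma~\ref{l:surplus} to let the excess in trap $a$ reach the gate and to ensure that the gates of all traps $b<a$ become nonempty before any gate-firing count begins. Only after this preparation does the ``at most $3m$ gate interactions, each with probability $\ge 2/n^2$'' calculation---essentially your Phase-2 estimate---go through, and the paper allots the remaining $2mn$ time to it. Your proposal collapses these two phases into one by assuming the gate is always doubly loaded, which is not justified; you need to reinstate the $mn$ descent phase (via Lemma~\ref{l:surplus}) before the Chernoff argument.
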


\begin{proof}
Let us assume that the token with the smallest number is in trap $a$. Within $mn$ time, according to Lemma \ref{l:surplus}, in all traps $b<a$ the gate state becomes nonempty.
Also in time $mn$ trap $a$ releases one agent.
Moving one agent associated with the token from trap $a$ via traps with nonempty gate states to the trap that will eliminate it or to the end of the line reqiures at most $3m$ interactions of pairs of agents in gate states.
It takes time at most $\frac{1}{2}\cdot 3mn$ on average and at most $2mn$ time whp.
This demonstrates that indeed, within $3mn$ time units, this token is handled whp.
\end{proof}

\subsection{Protocol definition}\label{s:prot_def}

For the clarity of presentation, we analyse a ranking protocol defined for $n=3m^3(m+1)$ and even~$m.$
For $3m^3(m+1)<n<3(m+1)^3(m'+2)$, one can arbitrarily scatter $n-3m^3(m+1)$ states by adding up to 2 states to each trap and keep the same asymptotic bounds on time as for $n=3m^3(m+1)$. 
%However, this protocol can be easily modified (by adding a bounded number of states to each trap) to an arbitrary~$n,$ with the same time guarantees.
Thus asymptotically $n=\Theta(m^4)$.
The protocol's rank states are divided into $m^2$ lines of traps.
We mark them with triplets 
$(l,a,b)$, where $l\in[1,m^2]$ is the line number, $a\in [1,3m]$ is a trap number and $b\in [0,m]$ is a trap state.
There is also an extra non-rank state $X$.
Subsets of states having a fixed first component $l$ form lines of traps.
Agents in each line $l$ and trap $a$ are subject to the rule:
\[
(l,a,b)+(l,a,b)\rightarrow (l,a,b)+(l,a,b-1), \mbox{ for } b>0.
\]
Transitions moving agents along the line $l$ are subject to:
\[
(l,a,0)+(l,a,0)\rightarrow (l,a,m)+(l,a-1,0) \mbox{ for } a>1.
\]
Finally, the exit gate in the last trap of line $l$ releases agents to state $X$ according to:
\[
(l,1,0)+(l,1,0)\rightarrow (l,1,m)+X.
\]

\begin{figure}[h]
    \centering
    \includegraphics[scale=0.5]{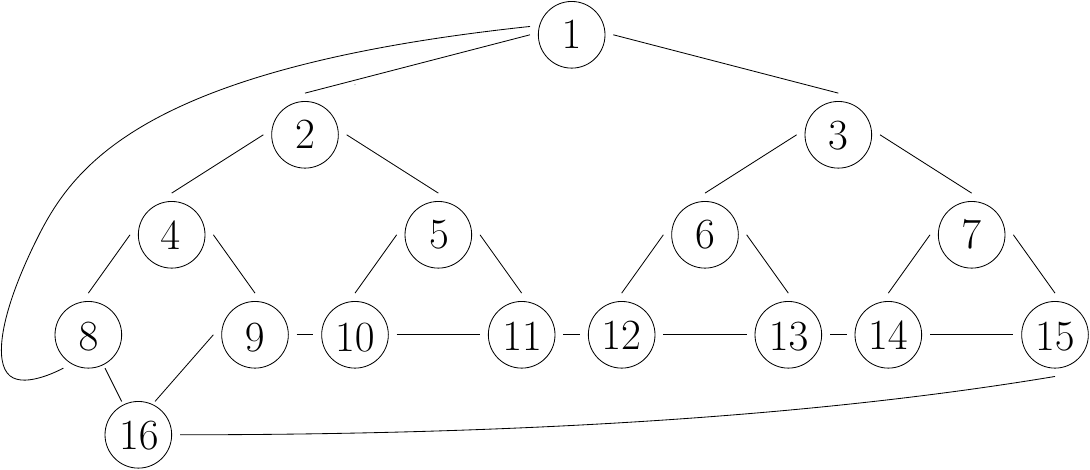}
    \caption{Graph $G$ for $m^2=16$. For example, for $l=1$ we get $l_0=2, l_1=3,$ and $l_2=8$.}
    \label{fig:linestructure}
\end{figure}

Now we define the transition rules for interactions in which an agent with state $X$
participates.
These interactions spread agents more or less evenly across the input gates of all lines.
First we define a cubic graph $G$ with the set of vertices $V=[1,m^2]$ of
diameter $4\lceil\log m\rceil$.
We begin with $G'$ being a balanced binary tree of height
$2\lceil\log m\rceil$
with $m^2+1$ vertices in which every parent has two children.
In this tree there are $m^2/2+1$ leaves and the root has degree 2.
We merge the root with one of the leaves into a single vertex.
Then we add a cycle connecting all remaining leaves obtaining $G$.
Vertices of graph $G$ correspond to lines of traps, see Figure~\ref{fig:linestructure}.
Denote the neighbours of line $l$ in this graph $l_0,l_1,l_2$.
We define the transitions as follows for any $l\in[1,m^2]$.
These transitions define a `routing table' for agents in state $X$.
State $X$ directs agents to line 1:
\[
X+X\rightarrow X+(1,3m,0).
\]
Rank states direct agents to lines designated by the graph $G$:
\[
(l,a,b)+X\rightarrow (l,a,b)+(l_i,3m,0) \mbox{ for } im< a\le (i+1)m \mbox{ and }i\in\{0,1,2\}.
\]
We can note that all states belonging to one trap of a line direct agents to the same line $l_i$
or {\em point to the line} $l_i$.
For this reason we say that each {\em trap points to a line} $l_i$.

Let $C$ be a configuration of all agents in time $t$.
The configuration $C$ truncated to the line $l$ will be denoted by $C_l$.
For each line $l$, we have deficit $d(C_l)$, surplus $s(C_l)$, and excess $r(C_l)$ as in the previous subsection, and $|C_X|$ denotes the number of agents in extra state $X$.
We define for configuration $C$ its surplus $s(C)=|C_X|+\sum_l s(C_l)$, a measure of global flow, and deficit $d(C)=\sum_l d(C_l)$,
which is a measure of the distance of $C$ to the final configuration.
We also define excess $r(C)=|C_X|+\sum_l r(C_l)$ as the total number of tokens (we treat agents in state $X$ as tokens).
Since for each line $s(C_l)\le r(C_l)$, we get $s(C)\le r(C)$.

\begin{lemma}
    For any configuration $C$ we have $s(C)=d(C)$.
\end{lemma}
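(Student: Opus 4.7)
The plan is to argue by pure counting, using conservation of the total number $n$ of agents together with the fact that, in the hypothetical isolated evolution of each line defining $\overline{C_l}$, no agents are lost: each agent either stays in the line or is released.

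First I would observe that the total number of agents is fixed:
\[
|C_X| + \sum_{l=1}^{m^2} |C_l| \;=\; n \;=\; 3m^3(m+1) \;=\; m^2 \cdot 3m(m+1),
\]
since the states partition into the extra state $X$ and the $m^2$ lines of traps, and every one of the $n$ agents occupies one such state.

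Next, appealing to Lemma~\ref{unique}, the line $l$ started from $C_l$ and run in isolation releases exactly $s(C_l)$ agents while settling in the silent configuration $\overline{C_l}$. Since no agents are destroyed in this process, agent conservation for each line gives
\[
|C_l| \;=\; |\overline{C_l}| + s(C_l).
\]
Summing over $l$ and substituting into the previous identity yields
\[
|C_X| + \sum_l |\overline{C_l}| + \sum_l s(C_l) \;=\; n,
\]
hence
\[
s(C) \;=\; |C_X| + \sum_l s(C_l) \;=\; n - \sum_l |\overline{C_l}|.
\]

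Finally, using $\sum_l 3m(m+1) = m^2\cdot 3m(m+1) = n$, I would conclude
\[
d(C) \;=\; \sum_l d(C_l) \;=\; \sum_l \bigl(3m(m+1) - |\overline{C_l}|\bigr) \;=\; n - \sum_l |\overline{C_l}| \;=\; s(C).
\]
There is no real obstacle here: the statement is an accounting identity, and the only subtle point is to justify $|C_l| = |\overline{C_l}| + s(C_l)$, which is exactly the content of the uniqueness statement in Lemma~\ref{unique} (agents only leave a line through its exit gate, and the silent configuration is uniquely determined together with the number of releases).
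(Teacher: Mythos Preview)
Your proof is correct and follows essentially the same counting argument as the paper: both use $|C_l|=|\overline{C_l}|+s(C_l)$, sum over lines, add $|C_X|$, and compare with $n=\sum_l 3m(m+1)$ to conclude $s(C)=n-\sum_l|\overline{C_l}|=d(C)$. The paper presents this as a single chain of equalities, while you spell out each step more explicitly, but the substance is identical.
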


\begin{proof}
We have
\begin{align*}
    s(C)=&|C_X|+\sum_l s(C_l) =
    |C_X|+\sum_l |C_l|-\sum_l |\overline{C_l}|=
    n-\sum_l |\overline{C_l}|\\=&
    \sum_l 3m(m+1)-\sum_l |\overline{C_l}|=
    \sum_l d(C_l)=d(C).
\end{align*}
\end{proof}

%We say that a trap points to a line $l$ if an agent having state $X$, upon interacting with an agent in this line, transitions into the entrance gate $(l,3m,0)$ of line $l$.
%Let $(l, a, b)$ be a rank state. We say $(l, a, b)$ \emph{points} to a line $l'$ if $l$ and $l'$ are neighbours in $G$.

For each line there are $3m(m+1)$ states of traps pointing to it.
We say that a line is {\em indicated}, if more than $m(m+1)$ states of traps pointing to it are occupied.
The probability of an agent in state $X$ being directed to a given indicated line exceeds~$\frac{m(m+1)}{n}\ge \frac{1}{3m^2}$.
The \emph{flow} $s_{\mathcal{T}}$ in the time interval $\mathcal{T}$ is the number of interactions that move an agent from state $X$ to the entrance gate of any line.

\begin{lemma}\label{r<s}
Let $C$ be a configuration at time $t_0$ and $C'$ be a configuration after a time interval of a length of $3m^2n(\lceil\log n\rceil+1)$ following $C$. 
We have $r(C') \leq s(C)$ whp.   
\end{lemma}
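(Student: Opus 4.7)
The plan is to define, at any time $t$ during the interval, the total token count
\[
R(t) = |C_X(t)| + \sum_l r(C_l(t)),
\]
note that $R(t_0) = r(C)$ and $R(t_0 + T) = r(C')$, and argue two things: (i) $R$ is non-increasing along protocol executions on tidy configurations, and (ii) whp during $[t_0, t_0 + T]$, $R$ drops by at least $r(C) - s(C)$. Together these give $r(C') \le r(C) - (r(C) - s(C)) = s(C)$, which is the desired bound.

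For (i), I would perform a rule-by-rule check that $\Delta R \in \{-1, 0\}$. The inner-state rule for $b \ge 2$ does not change any $\rho_a$. The exit rule $(l,1,0)+(l,1,0)\to(l,1,m)+X$ drops $\rho_1$ by $1$ while raising $|C_X|$ by $1$, so $\Delta R = 0$. The inter-trap gate rule $(l,a,0)+(l,a,0)\to(l,a,m)+(l,a-1,0)$ always decreases $\rho_a$ by $1$ and either leaves $\rho_{a-1}$ fixed (``elimination'', $\Delta R = -1$) or increases it by $1$ (``move'', $\Delta R = 0$), exactly as in the token terminology introduced before Lemma~\ref{monotone}. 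Each of the two routing rules drops $|C_X|$ by $1$ and either raises $\rho_{3m}$ at the receiving line by $1$ (move, $\Delta R = 0$) or leaves it fixed (``absorption'', $\Delta R = -1$). Thus along any execution $R$ decreases only via intra-line eliminations or via absorptions of $X$-agents into flat entrance traps.

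For (ii), I would apply Lemma~\ref{line_throughput} to each line $l$ separately. With high probability, within time $T$ line $l$ reaches some $C_l^\ast \ge \overline{C_l}$, and by Lemma~\ref{monotone} this requires all $r(C_l)$ original tokens of line $l$ to have been handled, with $s(C_l)$ of them released and the remaining $r(C_l) - s(C_l)$ eliminated inside line $l$. Each such intra-line elimination is a $\Delta R = -1$ event by the analysis in (i). A union bound over the $m^2$ lines (each per-line guarantee of Lemma~\ref{line_throughput} can be strengthened to failure probability at most $n^{-\eta-2}$ by tightening the constants in Lemma~\ref{l:surplus}) delivers at least $\sum_l (r(C_l) - s(C_l)) = r(C) - s(C)$ drops of $R$ whp, which combined with (i) finishes the proof.

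The main obstacle I anticipate is making the rule-by-rule analysis in (i) airtight: the gate-rule case analysis splits on the parity of $\gamma_{a-1}$ and on whether the receiving trap $a-1$ is in the regime $\beta_{a-1} + \lfloor \gamma_{a-1}/2 \rfloor \le m$ or $>m$, and one must also invoke tidiness (Lemma~\ref{l:tidy}) to argue that the inner rule at $b=1$, which shifts an agent from $\beta_a$ to $\gamma_a$, does not alter $\rho_a$ either (this is where tidiness rules out the only configurations in which such a shift could strictly increase $\rho_a$). Once this monotonicity is in place, step (ii) is an immediate consequence of the line-throughput bound already established, together with a standard union bound.
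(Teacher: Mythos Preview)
Your proposal is correct and follows essentially the same approach as the paper's proof: invoke Lemma~\ref{line_throughput} so that each line handles all $r(C_l)$ of its initial tokens, observe that $r(C_l)-s(C_l)$ of them are eliminated, and combine this with monotonicity of the total token count $R$ to conclude $r(C')\le r(C)-\sum_l(r(C_l)-s(C_l))=s(C)$. Your rule-by-rule verification of $\Delta R\in\{-1,0\}$ is more careful than the paper's one-line assertion (and you are right that tidiness is exactly what is needed to control the $b=1$ inner rule, and that absorptions at entrance gates can also cause $\Delta R=-1$, which only helps the inequality).
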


\begin{proof}
    By Lemma \ref{line_throughput} during this time period 
    each line $l$ handles all $r(C_l)$ of its tokens whp, of which
    $r(C_l)-s(C_l)$ tokens are eliminated. % and $s(C_l)$ are handled by releasing an agent to $X$.
    Each time a token is eliminated the value $r(C)$ is reduced by one, and it is not changed in other interactions.
    So $r(C')\le r(C) - \sum_l (r(C_l)-s(C_l))=s(C)$.
\end{proof}

\begin{lemma}\label{throughput}
Let $C$ be a configuration at time $t_0$. Let $\mathcal{T}$ be a time interval starting at $t_0$ with a length of $3m^2n(\lceil\log n\rceil+1)+O(\log n)$. We have $s_{\mathcal{T}} \geq s(C)$ whp.
\end{lemma}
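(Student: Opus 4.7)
The plan is to partition $\mathcal{T}$ into a main subinterval $\mathcal{T}_1$ of length $T = 3m^2n(\lceil\log n\rceil+1)$ followed by a tail $\mathcal{T}_2$ of length $T' = \Theta(\log n)$ to be fixed below, and to bootstrap the per-line release guarantee of Lemma~\ref{line_throughput} into a global flow guarantee through a conservation identity for the population of state $X$. Introduce $a_0 = |C_X|$ together with the $X$-counts $a_1$ and $a_f$ at the endpoints of $\mathcal{T}_1$ and $\mathcal{T}_2$; for $i \in \{1,2\}$, let $R_i$ count line-release interactions of the form $(l,1,0)+(l,1,0) \to (l,1,m)+X$ during $\mathcal{T}_i$ and let $D_i$ count drain interactions of either form $X+X \to X+(1,3m,0)$ or $(l,a,b)+X \to (l,a,b)+(l_i,3m,0)$. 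Since every release raises $|C_X|$ by one and every drain lowers it by one, the identities $a_1 = a_0 + R_1 - D_1$ and $a_f = a_1 + R_2 - D_2$ hold deterministically, so $s_{\mathcal{T}} = D_1 + D_2$ can be read off once $R_1$, $a_1$, and $D_2$ are under control.

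For the main interval, I would apply Lemma~\ref{line_throughput} separately to each of the $m^2$ lines. Since the whp constant $\eta$ can be chosen in advance, a union bound across the at most $n$ lines still yields a whp guarantee, so $R_1 \ge \sum_l s(C_l)$ whp. For the tail, the target is the inequality $D_2 \ge a_1$ whp, i.e., every $X$-agent alive at time $t_0+T$ is drained at least once during $\mathcal{T}_2$. Fix one such agent $v$: because the only $X$-rules written in Section~\ref{s:prot_def} fire exactly when $v$ is the responder, and in both cases $v$ then leaves $X$, the probability that $v$ never drains over the $T'n$ scheduled steps of $\mathcal{T}_2$ is at most $(1-1/n)^{T'n} \le e^{-T'}$. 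Choosing $T' = (\eta+2)\ln n$ makes this at most $n^{-\eta-2}$, and a union bound over the at most $n$ candidates for $v$ upgrades the statement to $D_2 \ge a_1$ whp.

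Combining the two pieces via $a_1 = a_0 + R_1 - D_1$ gives
\[
s_{\mathcal{T}} \;=\; D_1 + D_2 \;\ge\; D_1 + a_1 \;=\; a_0 + R_1 \;\ge\; a_0 + \sum\nolimits_l s(C_l) \;=\; s(C),
\]
where the last equality is the definition $s(C) = |C_X| + \sum_l s(C_l)$ given just before this lemma. The step I expect to be the main obstacle is justifying the per-step drain probability of $1/n$ for an $X$-agent: one has to check that neither $X$-rule leaves an $X$-responder still in state $X$, so that the first interaction in which $v$ plays the responder role genuinely removes $v$ from $X$ (if instead some $X$-rule could leave the responder in $X$, the probability $1/n$ would have to be replaced by a weaker bound and $T'$ rescaled accordingly). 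Once that is verified, the remainder is routine bookkeeping that combines Lemma~\ref{line_throughput}, an elementary geometric tail estimate, and two union bounds.
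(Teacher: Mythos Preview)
Your proposal is correct and follows essentially the same approach as the paper: apply Lemma~\ref{line_throughput} to each line to guarantee at least $\sum_l s(C_l)$ releases into $X$ during the main interval, add the $|C_X|$ agents already in $X$, and then use a geometric tail bound (probability $\ge 1/n$ per step of an $X$-agent being the responder and hence drained) together with a union bound to show all of these agents leave $X$ within the $O(\log n)$ tail. Your explicit conservation identity $a_1 = a_0 + R_1 - D_1$ and the derived chain $s_{\mathcal{T}} = D_1 + D_2 \ge D_1 + a_1 = a_0 + R_1 \ge s(C)$ make the accounting more transparent than the paper's sketch, but the underlying argument is the same.
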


\begin{proof}
    By Lemma \ref{line_throughput}, each line of traps $l$ releases to state $X$ in time whp $3m^2n(\lceil\log n\rceil+1)$ at least $s(C_l)$ agents.
    Adding the $|C_X|$ agents already in state $X$, the total number sums up to $s(C)$ agents.
    The probability that a given agent leaves state $X$ in a given interaction is at least $1/n$.
    The probability that this agent does not leave in time $c \ln n$ is
    \[ \left(1-1/n\right)^{cn\ln n}\le e^{-c\ln n}=n^{-c}.\]
    Thus, all of these agents leave state $X$ after time $O(\log n)$ whp.
\end{proof}

\begin{lemma}\label{throughput1}
    Let $\mathcal{T}$ be a time interval of length $c m^2 n \log n$, for some constant $c$ that can be derived from the proof. 
    Configurations at the beginning and the end of period $\mathcal{T}$ are $C$ and $C'$, respectively.
    Assume that $r(C)<m^3$ %, $0.9 r(C)<r(C')$, 
    and $m^2<r(C')$.
    Then the total flow of agents $s_{\mathcal{T}}$ during the period $\mathcal{T}$ is at least $m^3$ whp.
\end{lemma}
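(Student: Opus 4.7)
The plan is to partition $\mathcal{T}$ into $K$ consecutive sub-intervals $I_1,\ldots,I_K$ of length $\tau=3m^2n(\lceil\log n\rceil+1)+O(\log n)$, matching the scope of Lemmas~\ref{line_throughput}, \ref{r<s}, and \ref{throughput}. Write $C=C^{(0)},C^{(1)},\ldots,C^{(K)}=C'$ for the configurations at the sub-interval boundaries, and pick the constant $c$ large enough to guarantee that $K$ is sufficient for the bookkeeping below.

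The first step is to verify that the excess $r$ is monotonically non-increasing along any trajectory. A flow event removes a token from state $X$ and inserts a new one into some line, so $r=|C_X|+\sum_l r(C_l)$ is unchanged; a pass-through event removes a token from a line and deposits it in $X$, again preserving $r$; only token eliminations inside a trap reduce $r$, each by exactly one. Combined with the hypothesis $r(C')>m^2$, monotonicity yields $r(C^{(i)})>m^2$ for every $0\le i\le K$.

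Next, iterated application of Lemma~\ref{r<s} to each sub-interval gives $r(C^{(i+1)})\le s(C^{(i)})$ whp, and comparing with the previous step forces $s(C^{(i)})>m^2$ whp for all $0\le i<K$; a union bound preserves this uniformly. By Lemma~\ref{throughput}, the flow inside each $I_i$ is then at least $s(C^{(i)})>m^2$ whp, and summing over the $K$ sub-intervals gives $s_\mathcal{T}\ge K\cdot m^2$. Choosing $c$ so that $K\ge m$ then produces $s_\mathcal{T}\ge m^3$, completing the argument.

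The main technical hurdle is reconciling the requirement $K\ge m$ with the statement's description of $c$ as a constant. The clean remedy is to sharpen the per-sub-interval bound beyond $s(C^{(i)})>m^2$: one can invoke Lemma~\ref{line_throughput1} (each line carrying at least one token produces a handling every $3mn$ time) together with a lower bound on the number of lines simultaneously holding tokens, giving $\Omega(m^3)$ handling events across $\mathcal{T}$; the budget $r(C)-r(C')<m^3$ on eliminations then converts these handlings into $\Omega(m^3)$ pass-throughs, and finally into $\Omega(m^3)$ flow events, within a constant number of sub-intervals as asserted.
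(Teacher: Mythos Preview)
Your main decomposition is sound: monotonicity of $r$ along trajectories is correct (for tidy configurations), so $r(C^{(i)})>m^2$ for every boundary, and combining Lemmas~\ref{r<s} and~\ref{throughput} does yield flow exceeding $m^2$ in each sub-interval of length $\tau\approx 3m^2n\log n$. The problem, which you identify yourself, is that $\mathcal{T}$ has length $cm^2n\log n$ with $c$ constant, so it contains only $K=O(1)$ such sub-intervals, and $K\cdot m^2=O(m^2)$ is a factor $m$ short of the target.

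Your proposed remedy does not close this gap. The step ``together with a lower bound on the number of lines simultaneously holding tokens'' is precisely the missing ingredient, and nothing in the hypotheses supplies it: the $>m^2$ tokens guaranteed by $r(C^{(i)})>m^2$ could all sit in state $X$ or in a single line, in which case Lemma~\ref{line_throughput1} yields only $O(1)$ handlings per $3mn$ window, hence $O(m\log n)$ handlings over all of $\mathcal{T}$, far from $\Omega(m^3)$. Even using Lemma~\ref{line_throughput} to flush entire lines, you still recover only $>m^2$ handlings per sub-interval, which is the same $O(m^2)$ total you already had. The elimination-budget observation $r(C)-r(C')<m^3$ is correct but idle without the handling count.

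The paper's proof is structurally different. It argues by contradiction, assuming $s_{\mathcal{T}}<m^3$, and uses this bound on the \emph{incoming} flow to control how many tokens enter each line. The hypothesis $r(C)<m^3$ is exploited through a regular/irregular classification of agents: since fewer than $m^3$ agents are ``irregular'', an agent leaving $X$ is routed by a regular agent with probability at least $1-O(1/m)$, producing a ``black'' token whose destination is nearly uniform over the $m^2$ lines. A Chernoff bound then caps the black-token load per line, after which Lemma~\ref{line_throughput1} gives per-line throughput; the chunks $\mathcal{T}_i$ are sized adaptively so that each chunk's black tokens are cleared. The lower bound $r(C')>m^2$ ensures enough black tokens persist to keep the flow above $m^2/2$ in every chunk, and summing the chunk flows to $m^3$ contradicts the assumption. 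None of the pieces---the contradiction hypothesis, the black/red split, or the adaptive chunking---appears in your sketch, and at least the first two seem essential.
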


\begin{proof}
    Assume that on the contrary $s_{\mathcal{T}}<m^3$.
    Let us divide the period $\mathcal{T}$ into two parts: the initial part $\mathcal{T}_0,$ with a length $3m^2n(\lceil\log n\rceil+1)$, and the remaining part denoted as $\mathcal{T}'$. We~have $s_{\mathcal{T}_0}\le s_{\mathcal{T}}<m^3$.
    Thus,
    less than $m^3$ tokens are added to all lines during the period~${\mathcal T}_0$. According to Lemma~\ref{line_throughput}, after $\mathcal{T}_0$, all tokens initially present in these lines are handled whp. Consequently, the tokens present in the lines after the period $\mathcal{T}_0$ are exclusively those that entered the lines during 
    $\mathcal{T}_0$.

    For the purpose of protocol analysis, we distinguish two types of agents: {\em regular} and {\em irregular}.
    In particular, directly before each interaction, we distinguish as many as possible but not more than $m+1$ agents in any trap as regular.
    The remaining agents, including those in state $X$, are irregular.
    Furthermore, it's important to note that the number of tokens $r(C)$ is always greater than or equal to the number of irregular agents.
  %  Note also that the number of tokens $r(C)$ is not smaller than
  %  the number of irregular agents.
%

An agent in state $X$ enters a line by initiating interaction with another agent, who can be regular or irregular. Upon entering the line, it transforms into a token on that line. 
Tokens generated from interactions with regular agents increase the count of {\em black} tokens on the line, whereas those from interactions with irregular agents increase the count of {\em red} tokens. Black tokens consistently occupy traps
on the line with the lowest numbers.
The number of irregular agents does not exceed $r(C)<m^3$. Thus, the probability of increasing the count of red tokens when an agent enters the line does not exceed $\frac{m^3}{n-m^3}<\frac{1}{3m}$. This way, Lemma~\ref{line_throughput1} guarantees that every $3mn$ time units one of these black tokens will be handled.

    %Let $r(C')>m^2$ be the total number of tokens at the end of the period $\mathcal{T}$.
    Since in period $\mathcal{T}$ 
    the flow $s_{\mathcal{T}}<m^3$, the expected total number of red tokens created during the period $\mathcal{T}$ does not exceed $m^3\cdot\frac{1}{3m}$. By the Chernoff bound, this value does not exceed $m^2/2$~whp.
    Note, that the total number of tokens in any chunk is not smaller than $r(C')>m^2$, so the number of black tokens exceeds $m^2/2$.

    %So, since $r(C')\ge m^2$, then in any configuration present during the period $\mathcal{T}$, we have at least $m^2/2\le r(C)-m^2/2$ black tokens.
    Let the period $\mathcal{T}'$ be the sum of consecutive chunks $\mathcal{T}_1,\mathcal{T}_2,\ldots,\mathcal{T}_j$.
    The duration of each chunk $\mathcal{T}_i$ is $3c's_{\mathcal{T}_{i-1}}n\log n/m$, where $c',j$ will be defined later in the proof.

    We show by induction that all black tokens in the lines are handled during the chunk $\mathcal{T}_i$ with high probability.
    In the previous chunk $\mathcal{T}_{i-1}$, the flow is $s_{\mathcal{T}_{i-1}}\ge m^2/2$, because all black tokens are handled. On average, at most $\frac{3m(m+1)s_{\mathcal{T}_{i-1}}}{n-m^3}$ black tokens will enter each line. By the Chernoff bound, there exists a $c'$ such that with high probability at most $c's_{\mathcal{T}_{i-1}}\log n/m^2$ black tokens will enter each line. According to Lemma~\ref{line_throughput1}, the time $3c's_{\mathcal{T}_{i-1}}n\log n/m$ ensures that all black tokens present in the lines at the beginning of chunk $\mathcal{T}_i$ will be handled with high probability.

    The number of chunks is the smallest  $j$ such that
    \(
    \sum_{i=0}^{j} s_{\mathcal{T}_{i}}\ge m^3.
    \)
    These chunks take time $3m^2n(\lceil\log n\rceil+1)+3c'm^2n\log n=O(m^2n\log n)$, and generate a total flow of at least $\sum_{i=0}^{j} s_{\mathcal{T}_{i}}\ge m^3$ guaranteed by black tokens alone. This contradicts the assumption~$s_{\mathcal{T}}<m^3.$
\end{proof}

\begin{lemma}\label{timetosqr}
    After initial time $O(m^3n\log^2 n)$ we have $r(C)<m^2$.
\end{lemma}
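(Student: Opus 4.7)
The plan is to partition time into chunks of length $T=3m^2n(\lceil\log n\rceil+1)$, matching Lemmas~\ref{r<s}--\ref{throughput1}, and to show that within $K=O(m\log n)$ chunks the excess $r$ drops below $m^2$; the total time is then $O(m^3n\log^2 n)$. Since the only interactions in the protocol that change $r$ are gap-filling eliminations (each reducing $r$ by one, while transitions between $X$ and line entrance gates and within-line moves preserve $r$), the quantity $r(C)$ is monotonically non-increasing, so it suffices to bound the number of chunks in which $r$ stays at least $m^2$.

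Writing $r_k=r(C_{kT})$ and $d_k=d(C_{kT})$, Lemma~\ref{r<s} yields $r_{k+1}\le d_k\le r_k$ whp, so $r$ drops per chunk whenever $d_k<r_k$; moreover, reading the proof of Lemma~\ref{r<s}, the number of gaps filled in chunk~$k$ is at least $r_k-d_k$, so also $d_{k+1}\le 2d_k-r_k$, and in the easy case $r_k\ge 2d_k$ this forces $d_{k+1}=0$ and hence $r_{k+2}=0$ within two chunks. We may therefore assume $d_k>r_k/2$ throughout. To extract a quantitative per-chunk decrease in this regime, I would use Lemma~\ref{throughput1} in the low-$r$ range and Lemma~\ref{throughput} in the high-$r$ range: once $r_k<m^3$ Lemma~\ref{throughput1} immediately gives either $r_{k+1}\le m^2$ (done) or a flow of at least $m^3$ in chunk~$k$, while for $r_k\ge m^3$ Lemma~\ref{throughput} guarantees flow at least $s(C_k)=d_k\ge r_k/2$. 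Combining either estimate with the routing analysis behind Lemma~\ref{throughput1} --- agents in state $X$ are distributed across lines via the cubic graph $G$ so that any indicated line is reached with probability at least $1/(3m^2)$ --- one shows that at least $\Omega(r_k/m)$ of the flow events in the chunk actually produce gap-fillings whp, giving $r_{k+1}\le r_k(1-\Omega(1/m))$.

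Iterating this multiplicative shrinkage for $K=O(m\log n)$ chunks, starting from $r_0\le n=\Theta(m^4)$, yields $r_K\le n\cdot e^{-\Omega(\log n)}<m^2$ whp for a suitable constant hidden in~$K$. A union bound over the $K$ invocations of Lemmas~\ref{r<s}--\ref{throughput1} (each failing with polynomially small probability) gives the claimed whp guarantee on the entire interval. The main obstacle is establishing the $\Omega(1/m)$ lower bound on per-chunk shrinkage uniformly in the tight regime $d_k\approx r_k$: one must argue both that the deficit is concentrated enough that some indicated line carries a non-negligible fraction of the $d_k$ gaps, and that an agent routed into such a line actually reaches a gap rather than cycling back out through the exit gate, the latter following from the descending trap dynamics and the token/excess machinery of Lemmas~\ref{unique}--\ref{monotone}.
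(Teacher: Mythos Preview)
Your framework of chunking time and chasing a multiplicative contraction $r_{k+1}\le r_k(1-\Omega(1/m))$ is not what the paper does, and the step you flag as ``the main obstacle'' is in fact a genuine gap that your sketch does not close. The problem is the implication from ``flow $\ge m^3$'' to ``$\Omega(r_k/m)$ gap-fillings''. A flow event is an $X\to$entrance-gate transition; it produces a gap-filling only if the destination line has deficit \emph{and} the agent is absorbed rather than passed through. Your argument that ``some indicated line carries a non-negligible fraction of the $d_k$ gaps'' has no support: in the early regime $r(C)$ can be of order $n$, so the traps pointing to a deficient line may themselves be largely empty, and that line need not be indicated at all. Flow may then land almost entirely on lines that are already full (or nearly so) and cycle back to $X$, yielding essentially no reduction in $r$. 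Nothing in Lemmas~\ref{unique}--\ref{monotone} or the routing via $G$ rules this out.

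The paper sidesteps this entirely with a \emph{bootstrap over the graph $G$} rather than a uniform contraction. First, $4m$ intervals of flow $\ge m^3$ give total flow $\ge 4m^4$; by pigeonhole over the $m^2$ lines, some line receives $\ge 4m^2$ tokens and becomes full (Lemmas~\ref{reduce_d},~\ref{line_throughput}) --- this step needs no line to be indicated. Once a line is full, all $3m(m+1)$ states pointing to each of its $G$-neighbours are occupied, so those neighbours are indicated; another $10m$ intervals of flow $\ge m^3$ then deliver $\ge 3m(m+1)+m$ tokens to each neighbour whp, making them full. Iterating across the $O(\log n)$ diameter of $G$ fills all lines (or forces $s(C)<m^2$) in $O(m\log n)$ intervals, after which Lemma~\ref{r<s} gives $r(C)<m^2$. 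The graph structure, which you invoke only for the $1/(3m^2)$ routing bound, is actually the mechanism that makes the argument work.
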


\begin{proof}
    Let us divide the time until $s(C)<m^3$ into intervals $\mathcal{T}$ of length $3m^2n(\lceil\log n\rceil+1)+O(\log n)$ as in Lemma \ref{throughput}. 
    By Lemma \ref{throughput} in each interval at the beginning of which $s(C)\ge m^3$, we also have $s_{\mathcal{T}}\ge m^3$.
    By Lemma \ref{r<s}, after the first interval for which $s(C)<m^3$ 
    we also have $r(C')<m^3.$
    
    The remaining time we divide into intervals $\mathcal{T}$ of length $cm^2n\log n$ as in Lemma \ref{throughput1}.     
    %During this time there can be at most $O(\log m)$ static intervals for which $0.9r(C)>r(C')$.
    %During this time there can be at most $O(\log m)$ intervals for which $0.9r(C)>r(C')$, when Lemma \ref{throughput1} does not apply.
    By Lemma \ref{throughput1} in any remaining interval for which $r(C')\ge m^2$ we have $s_{\mathcal{T}}\ge m^3$.
    
    After the initial $4m$ intervals in which $s_{\mathcal{T}}\ge m^3$ and additional time $3m^2n(\lceil\log n\rceil+1)$ some line $l$ becomes full whp. 
    Indeed these intervals guarantee the total flow at least $4m^4$, so 
    the line receiving the most tokens receives at least $4m^2$ of them.
   After an additional time of $3m^2n(\lceil\log n\rceil+1)$, this line becomes full whp by Lemmas \ref{reduce_d} and \ref{line_throughput}.
    
    In any additional $10m$ intervals in which $s_{\mathcal{T}}\ge m^3$ there is a total flow at least $10m^4$.
    Any line neighbouring to full lines in graph $G$ is indicated.
    The expected number of tokens entering any neighbouring line is at least $\frac{10}{3}m^2$.
    By the Chernoff bound, $3m(m+1)+m$ tokens enter any neighbouring line in this time whp.
    In additional time $3m^2n(\lceil\log n\rceil+1)$ these lines become full whp by Lemmas \ref{reduce_d} and \ref{line_throughput}.

    Thus, after $4\log n$ time periods from the previous paragraph all lines become full or $s(C)<m^2$ whp.
    In both cases $s(C)<m^2$.
    During an additional time of $3m^2n(\lceil\log n\rceil+1)$, 
    by Lemma~\ref{r<s}, we have $r(C)<m^2$.
\end{proof}

\begin{lemma}\label{all_indicated}
    If $r(C)<m^2$, then all lines are indicated.
\end{lemma}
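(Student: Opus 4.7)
The plan is an argument by contradiction. Suppose some line $L$ is not indicated, i.e.\ the number $N_L$ of agents residing in the $3m(m+1)$ states of traps pointing to $L$ satisfies $N_L \le m(m+1)$; I would then derive $r(C) \ge m^2$.

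The first ingredient is a trap-level overload bound: for every trap $a$ with $n_a := \beta_a + \gamma_a$ agents, $\rho_a \ge (n_a - (m+1))_+$. This is immediate in Case B of the definition of $\rho_a$, and in Case A the hypothesis $\beta_a + \lfloor \gamma_a/2\rfloor \le m$ rewrites as $\lceil \gamma_a/2\rceil \le m+1-\beta_a$, so $\rho_a = \lfloor \gamma_a/2\rfloor = \gamma_a - \lceil \gamma_a/2\rceil \ge n_a - (m+1)$. Summing over the $3m$ traps of any line $l$ (and using $\sum_a x_a^+ \ge (\sum_a x_a)_+$) yields $r(C_l) \ge (|C_l| - 3m(m+1))_+$, equivalently $|C_l| \le 3m(m+1) + r(C_l)$.

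I would combine this with agent conservation $\sum_l |C_l| = n - |C_X|$ to obtain a global deficit bound. Writing $x_l = |C_l| - 3m(m+1)$, one has $\sum_l x_l = -|C_X|$ and $\sum_l x_l^+ \le \sum_l r(C_l)$, hence $\sum_l x_l^- \le \sum_l r(C_l) + |C_X| = r(C) < m^2$. Applied to the three neighbours $l_1, l_2, l_3$ of $L$ in the graph $G$, this yields $M := \sum_{i=1}^{3} |C_{l_i}| \ge 9m(m+1) - m^2 = 8m^2 + 9m$, so the neighbours of $L$ are collectively close to capacity.

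The main step then localises this mass to the traps actually pointing to $L$. Applying the trap-level bound just to the $2m$ traps of each neighbour $l$ that do not point to $L$ gives $N_l^{\mathrm{rest}} \le 2m(m+1) + r(C_l)$, so the $L$-pointing portion satisfies $N_l^L \ge |C_l| - 2m(m+1) - r(C_l)$. Summing over the three neighbours, $N_L = \sum_i N_{l_i}^L \ge M - 6m(m+1) - \sum_i r(C_{l_i}) \ge (8m^2 + 9m) - 6m(m+1) - m^2 = m^2 + 3m > m(m+1)$, contradicting the assumption. The main obstacle is establishing the overload inequality $\rho_a \ge (n_a-(m+1))_+$ and then applying it at two different scales — once at the whole-line scale to force each neighbour of $L$ near full capacity via the global deficit bound, and once at the subset-of-$2m$-traps scale to push that capacity into the traps that actually point to $L$; once both applications are in place, the remainder is a straightforward double-count.
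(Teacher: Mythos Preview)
Your argument has a definitional mismatch. The paper defines ``indicated'' via occupied \emph{states}: a line $L$ is indicated when more than $m(m+1)$ of the $3m(m+1)$ states pointing to it are occupied. You take $N_L$ to be the number of \emph{agents} in those states and assume ``not indicated $\Rightarrow N_L\le m(m+1)$'', which is false --- many agents may sit in few states. In fact your derivation never uses that hypothesis: from $r(C)<m^2$ alone you prove $N_L>m^2+3m>m(m+1)$ for every $L$. That agent-count conclusion does not imply the state-based one, so strictly speaking the lemma as written is not established. The agent version, however, is exactly what the downstream probability bound (stated right after the definition of ``indicated'') requires, so the slip is harmless for the paper's purposes.

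Setting the definitional issue aside, your route differs from the paper's. The paper gives a three-sentence sketch: not indicated means at least $2m(m+1)$ empty states among the pointing traps, those missing agents sit elsewhere as excess, yielding more than $2m(m+1)>m^2$ tokens. You instead isolate the trap-level inequality $\rho_a\ge(n_a-(m+1))_+$ and deploy it at two scales --- first at the whole-line level (bounding the total line-deficit $\sum_l x_l^-$ by $r(C)$, hence forcing the three neighbours of $L$ collectively near capacity) and then on the $2m$ non-pointing traps of each neighbour (pushing that capacity into the $L$-pointing traps). Your two-scale bookkeeping makes the ``displaced agents become tokens'' step explicit where the paper leaves it informal, at the cost of some extra machinery.
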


\begin{proof}
    If a line is not indicated, then traps indicating this line are lacking more than $2m(m+1)$ agents.
    These agents cause excesses in other traps yielding more than $2m(m+1)$ tokens.
    This contradicts the assumption that the total number of tokens is $r(C)<m^2$.
\end{proof}

\begin{lemma}\label{throughput2}
    Let $\mathcal{T}$ be a time interval of length $c m^2 n \log n,$ for some constant $c$ that can be derived from the proof. 
    Configurations at the beginning and the end of the period $\mathcal{T}$ are $C$ and $C'$ respectively.
    Assume that $r(C)<m^2$ %, $0.9 r(C)<r(C')$, 
    and $r(C') = k < m^2$.
    Then the total flow of agents $s_{\mathcal{T}}$ during the period $\mathcal{T}$ is at least $k\cdot m$ whp.
\end{lemma}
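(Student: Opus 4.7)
The proof sharpens Lemma~\ref{throughput1}, exploiting the stronger hypothesis $r(C) < m^2$. Assume for contradiction $s_{\mathcal{T}} < km$. Since the total token count $r$ is monotonically non-increasing (only elimination events reduce it), we have $r(C'') \in [k, m^2)$ for every intermediate configuration $C''$: the upper bound gives, by Lemma~\ref{all_indicated}, that every line is indicated at all times during $\mathcal{T}$, so each agent leaving state $X$ targets any fixed line with probability at least $\frac{m(m+1)}{n} \ge \frac{1}{3m^2}$; the lower bound says that at least $k$ tokens persist in the system throughout $\mathcal{T}$.

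The plan is to mirror the chunked analysis of Lemma~\ref{throughput1}. We partition $\mathcal{T}$ into an initial window $\mathcal{T}_0$ of length $3m^2 n(\lceil\log n\rceil + 1)$, which by Lemma~\ref{line_throughput} suffices to handle every token initially present in any line, followed by chunks $\mathcal{T}_1, \mathcal{T}_2, \ldots$ each of length $3c' s_{\mathcal{T}_{i-1}} n \log n / m$ for an appropriate constant $c'$. Agents are again classified as regular or irregular, and new tokens as black or red; since the number of irregular agents never exceeds $r < m^2$, the probability of creating a red token upon entering a line is at most $m^2 / (n - m^2) = O(1/m^2)$, so by a Chernoff bound at most $O(s_{\mathcal{T}}/m^2) = O(k/m)$ red tokens are produced whp throughout $\mathcal{T}$. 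Because at least $k$ tokens are always present in the system, the number of black tokens at the start of every chunk is at least $k - O(k/m) \ge k/2$. Lemma~\ref{line_throughput1}, combined with the near-uniform spread of newly arriving tokens across the $m^2$ indicated lines, ensures that every chunk handles all of its black tokens whp, yielding $s_{\mathcal{T}_i} \ge k/2$.

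Summing over $2m$ consecutive chunks then produces a total flow of at least $km$, while the corresponding total length is $\sum_i 3c' s_{\mathcal{T}_{i-1}} n \log n / m = O(kn \log n) \le O(m^2 n \log n)$, which fits inside $\mathcal{T}$ for a sufficiently large constant $c$, contradicting $s_{\mathcal{T}} < km$. The main technical obstacle lies in the concentration step: one must argue via Chernoff that no individual line is swamped with substantially more than its fair share $O(s_{\mathcal{T}_{i-1}} \log n / m^2)$ of the black tokens, so that Lemma~\ref{line_throughput1} can process them all within the chunk's allotted length. The fact that every line is indicated throughout $\mathcal{T}$, rather than only some of them as in Lemma~\ref{throughput1}, is precisely what makes this finer per-line bound available and what drives the improvement from the uniform $m^3$ flow guarantee to the tighter $k m$ guarantee.
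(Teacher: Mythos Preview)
Your proposal follows the black/red token template of Lemma~\ref{throughput1}, but the paper's proof of Lemma~\ref{throughput2} takes a simpler route that dispenses with that colouring entirely. Because $r<m^2$ throughout $\mathcal{T}$, there are fewer than $m^2$ irregular agents at every moment, so the probability that an agent leaving $X$ enters any fixed line $l$ is at most $\frac{3m(m+1)+m^2}{n-1}<\frac{4}{3m^2}$: the additive $m^2$ absorbs all irregular agents directly into the per-line upper bound. The paper then applies Chernoff to \emph{all} tokens entering a line (with a case split on whether $s_{\mathcal{T}_{i-1}}\ge m^2$ or $s_{\mathcal{T}_{i-1}}<m^2$ to get the right concentration) and shows by induction that every token in every line is handled in each chunk, giving $s_{\mathcal{T}_{i-1}}\ge r(C')=k$ per chunk.

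Your black/red argument has a genuine gap in the small-$k$ regime. The expected number of red tokens is indeed $O(s_{\mathcal{T}}/m^2)=O(k/m)$, but when this quantity is $o(\log n)$ the Chernoff bound only yields an $O(\log n)$ high-probability upper bound, not $O(k/m)$. Your inference ``black $\ge k-O(k/m)\ge k/2$'' therefore collapses whenever $k=O(\log n)$, and Lemma~\ref{timetozero} invokes the present lemma precisely in that regime (down to $k\ge 1$). The paper's direct per-line bound sidesteps this because it never needs to lower-bound a subpopulation of tokens; it only needs the per-line \emph{upper} bound together with the global invariant $r\ge k$ to get flow $\ge k$ per chunk. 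A second issue: your chunk length $3c's_{\mathcal{T}_{i-1}}n\log n/m$ suffices to process $O(s_{\mathcal{T}_{i-1}}\log n/m^2)$ tokens per line via Lemma~\ref{line_throughput1}, but when $s_{\mathcal{T}_{i-1}}<m^2$ each line can still receive $\Theta(\log n)$ tokens, which needs time $\Theta(mn\log n)$; the paper handles this by effectively sizing chunks by $s_{\mathcal{T}_{i-1}}/k$ rather than $s_{\mathcal{T}_{i-1}}/m$, which your budget does not accommodate for small $s_{\mathcal{T}_{i-1}}$.
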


\begin{proof}
    Let us divide the period $\mathcal{T}$ into two parts: the initial part $\mathcal{T}_0$ with a length $3m^2n(\lceil\log n\rceil+1)$, and the remaining part denoted as $\mathcal{T}'$. In the scenario where the flow during
the initial period $\mathcal{T}_0$ satisfies $s_{\mathcal{T}_0}\ge m^3$,
    a total flow also satisfies $s_{\mathcal T}\ge m^3\ge km$. Conversely,
    in the remaining case, one gets $s_{\mathcal{T}_0}<m^3$.
    %Let us assume, then, that $s_{\mathcal{T}_0}<m^3$. 
    Thus,
    less than $m^3$ tokens are added to all lines during the period ${\mathcal T}_0$. According to Lemma~\ref{line_throughput}, after $\mathcal{T}_0$, all tokens initially present in these lines are handled whp. Consequently, the tokens present in the lines after the period $\mathcal{T}_0$ are exclusively those that entered the lines during 
    $\mathcal{T}_0$.

    Let us divide the period $\mathcal{T}'$ into consecutive chunks $\mathcal{T}_1,\mathcal{T}_2,\ldots,\mathcal{T}_j$. The duration of each chunk $\mathcal{T}_i$ is $3c's_{\mathcal{T}_{i-1}}n\log n/m$, where $c'$ will be defined later in the proof.
    We show by induction that all tokens in the lines are handled during the chunk $\mathcal{T}_i$ whp. Indeed, in the preceding chunk $\mathcal{T}_{i-1}$, the flow is $s_{\mathcal{T}_{i-1}}\ge r(C')=k$, because all tokens are handled at least once whp.
    For a given line, the probability that a token will be inserted at it's entrance gate is not greater than $\frac{3m(m+1)+m^2}{n-1}<\frac{4}{3m^2}$.
    On average, at most $\frac{4s_{\mathcal{T}_{i-1}}}{3m^2}$ tokens enter a given line.
    In the case when $s_{\mathcal{T}_{i-1}}\ge m^2$, by the Chernoff bound, 
    there exists a constant $c'$ such that whp at most $c's_{\mathcal{T}_{i-1}}\log n/m^2$ tokens will enter each line.
    According to Lemma \ref{line_throughput1}, the time $3c's_{\mathcal{T}_{i-1}}n\log n/m<3c's_{\mathcal{T}_{i-1}}mn\log n/k$ guarantees that all tokens present in the lines at the beginning of chunk $\mathcal{T}_i$ will be handled whp.
    In the case when $s_{\mathcal{T}_{i-1}}< m^2$, by the Chernoff bound, 
    there exists a constant $c'$ such that whp at most $c'\log n$ tokens will enter each line.    
    Finally, according to Lemma \ref{throughput1}, the time $3c'mn\log n<3c's_{\mathcal{T}_{i-1}}mn\log n/k$ guarantees that all tokens present in the lines at the beginning of chunk $\mathcal{T}_i$ will be handled whp.

    The number of chunks is the smallest number $j$ such that
    \(
    \sum_{i=0}^{j} s_{\mathcal{T}_{i}}\ge mk.
    \)
    These chunks take time not greater than $3m^2n(\lceil\log n\rceil+1)+3c'm \cdot mk\log n/k=O(m^2n\log n)$, and generate a total flow of at least $\sum_{i=0}^{j} s_{\mathcal{T}_{i}}\ge mk$.
\end{proof}

\begin{lemma}\label{timetolog}
    Assume $r(C)=2k<m^2$, for $k>\ln n$. After time $O(m^3n\log n)$ we get $r(C)<k$~whp.
\end{lemma}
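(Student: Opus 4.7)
The plan is to argue by contradiction. Assume that at the end of the $O(m^3 n \log n)$ window we still have $r(C) \ge k$. Since $r$ is monotonically non-increasing and bounded by $2k < m^2$ initially, the invariant $r < m^2$ is preserved throughout, so Lemma \ref{all_indicated} keeps every line indicated at every moment. I would partition the window into $\Theta(m)$ consecutive chunks, each of length $c m^2 n \log n$ matching the scale of Lemma \ref{throughput2}. Applying that lemma to each chunk, with the end-of-chunk value $r(C') \ge k$, gives a per-chunk flow of at least $k m$ whp; a union bound over the $\Theta(m)$ chunks yields a total flow of $\Omega(k m^2)$ whp across the window.

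Next I would distribute this flow over the $m^2$ lines. Each indicated line $l'$ catches a given flow unit with probability at least $\frac{m(m+1)}{n} \ge \frac{1}{3 m^2}$, so the expected inflow to $l'$ is $\Omega(k)$. The hypothesis $k > \ln n$ is tailor-made for a Chernoff bound via Corollary \ref{coro}, which together with a union bound over lines ensures that every line receives $\Omega(k)$ incoming agents whp. I would then convert inflow into eliminations using Lemma \ref{reduce_d}: roughly every second entering agent fills a gap while gaps remain, so line $l'$ accounts for at least $\min(d_{l'}, \Omega(k))$ eliminations during the window.

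Summing over lines yields a total elimination count that, in aggregate, should exceed $k+1$, contradicting $r \ge k$ since each elimination reduces $r$ by one and $r$ started at $2k$. The main obstacle is the adversarial regime in which the deficit $d(C)$ is concentrated on only a handful of lines, each with very large $d_{l'}$; there the single-pass pigeonhole on inflow may not reach the threshold $k+1$ directly. I would close this gap with a cascade argument in the spirit of Lemma \ref{timetosqr}: once a heavy line saturates, its leftover tokens recycle through state $X$ and are rerouted via the cubic graph $G$ to neighbouring lines, which become the next round of absorbers. Since the diameter of $G$ is $4 \lceil \log m \rceil$, a logarithmic number of such propagation stages suffices, and this is precisely what justifies the $\log n$ factor in the stated bound $O(m^3 n \log n)$.
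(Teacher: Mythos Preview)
Your overall architecture matches the paper's proof exactly: partition into $\Theta(m)$ intervals of length $c m^2 n\log n$, invoke Lemma~\ref{throughput2} on each to get per-interval flow at least $km$ (using $r(C')\ge k$ by the contradiction hypothesis and monotonicity of $r$), note via Lemma~\ref{all_indicated} that all lines stay indicated, and use Chernoff to guarantee every line receives $\Omega(k)$ agents. Up to this point you are spot on.

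The gap is in the finish. The ``obstacle'' you identify is a phantom, and the cascade through the cubic graph $G$ is entirely unnecessary here. You worry about lines with ``very large $d_{l'}$'', but this cannot occur: since $\sum_{l} d(C_l)=d(C)=s(C)\le r(C)=2k$, every single line satisfies $d(C_l)\le 2k$. Once you see this, the argument closes immediately. Choosing the constant so that each line receives at least $2k$ agents, and noting $2k\ge d(C_l)$, Lemma~\ref{reduce_d} (together with Lemma~\ref{line_throughput} to commit the insertions) gives that each $d(C_l)$ drops by at least half. Summing, $d(C)=s(C)$ falls from at most $2k$ to below $k$, and a final application of Lemma~\ref{r<s} converts $s(C)<k$ into $r(C)<k$ after one more interval of length $3m^2 n(\lceil\log n\rceil+1)$.

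Note also that your attempt to count \emph{eliminations} directly is the less robust route: eliminations only pull $r$ down toward $s$, so to force $r<k$ you must in any case drive $s(C)=d(C)$ below $k$. The paper bypasses the elimination bookkeeping by working with $d(C)$ throughout and invoking Lemma~\ref{r<s} once at the end; this is both cleaner and avoids the spurious obstacle you encountered.
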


\begin{proof}
   Let us divide the time until $r(C)<k$, into intervals $\mathcal{T}$ of length $c m^2 n \log n,$ where $c$ is specified by Lemma~\ref{throughput2}.
    %There can be a constant number of static intervals for which $0.9r(C)>r(C')$.
    By Lemma \ref{throughput2} in any such interval $s_{\mathcal{T}}\ge km$ and by Lemma \ref{all_indicated} all lines are indicated.

    For any constant $c''$ in $4c''m$ first intervals there is a total flow at least $4c'' m^2 k$ whp by Lemma~\ref{throughput2}. 
    By the Chernoff bound, for any line at least $2k$ agents enter this line in this period.
    For each line $l,$ lemmas \ref{reduce_d} and \ref{line_throughput} guarantee reduction of $d(C_l)$ by at least half.
    Thus, $d(C)=s(C)$ is also reduced from $2k$ to less than $k$.
    After an additional time of $3m^2n(\lceil\log n\rceil+1)$, 
    by Lemma \ref{r<s}, we have $r(C)<k$ whp.
\end{proof}

\begin{lemma}\label{timetozero}
    Assume $r(C)\le 2\ln n$. After time $O(m^3n\log^2 n)$ full stabilisation occurs whp.
\end{lemma}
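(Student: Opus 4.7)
The plan is to prove that, starting from $r(C)\le 2\ln n$, the system reaches the silent, fully ranked configuration within $O(m^3n\log^2 n)$ time whp. Since $r(C)$ is nonincreasing (tokens are removed only by elimination, and moves between $X$ and a line do not create tokens), the hypothesis gives $r(C)<m^2$ throughout, so by Lemma~\ref{all_indicated} every line stays indicated and any agent leaving state $X$ is routed to any fixed line with probability at least $1/(3m^2)$.

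I would partition the budget into $\Theta(m\log n)$ successive periods of length $\Theta(m^2n\log n)$ as in Lemma~\ref{throughput2}. At the end of any period, either $r(C)=0$ and we are done, or $r(C)\ge 1$ and Lemma~\ref{throughput2} delivers flow at least $m$ in that period. Consequently, conditional on not finishing early, the total flow $F$ through $X$ across the whole budget is $\Omega(m^2\log n)$ whp, with the hidden constant made as large as desired by scaling the constant in front of the budget.

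Each of the $F$ routing events hits any fixed line with probability at least $1/(3m^2)$, so the expected number of tokens arriving at any fixed entrance gate is $\Omega(\log n)$. A multiplicative Chernoff bound combined with a union bound over the at most $2\ln n$ lines carrying positive deficit (recall $d(C)\le r(C)\le 2\ln n$) yields, whp, that each such line $l$ receives at least $4\ln n\ge 2d(C_l)$ agents at its entrance gate during the budget. By Lemma~\ref{reduce_d} this suffices to drive $d(C_l)$ to zero on every line, hence $d(C)=0$. The identity $s(C)=d(C)$ then forces $|C_X|=0$ and $|C_l|=3m(m+1)$ for every line, so no further agent can be pulled into $X$ and the lines evolve independently. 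Applying Lemma~\ref{line_throughput} to each line absorbs any lingering excess at high inner states into the remaining gaps within an additional $\Theta(m^2n\log n)$ whp; since $|C_l|=|\overline{C_l}|=3m(m+1)$, the resulting configuration $C_l^*\ge\overline{C_l}$ must coincide with $\overline{C_l}$, so the silent configuration is reached globally.

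The main obstacle is the concentration step in the regime $r(C)=O(\log n)$: the flow guarantee of Lemma~\ref{throughput2} degrades from $\Omega(km)$ to only $\Omega(m)$ per period, which is precisely why the budget must carry an extra $\log n$ factor beyond the halving time of Lemma~\ref{timetolog}—one $\log n$ to accumulate enough total flow and another to lift the per-line expectation to a level that survives a Chernoff bound plus union bound over the $O(\log n)$ deficit-carrying lines. A minor technicality is that Lemma~\ref{throughput2} bounds the flow in terms of $r$ at the \emph{end} of a period, but monotonicity of $r(C)$ in our regime ensures that any intermediate drop can only let us finish earlier.
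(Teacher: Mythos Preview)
Your proof is correct and follows essentially the same route as the paper: partition the budget into $\Theta(m\log n)$ periods of length $\Theta(m^2n\log n)$, invoke Lemma~\ref{throughput2} to secure flow $\ge m$ per period (hence total flow $\Omega(m^2\log n)$), use Lemma~\ref{all_indicated} plus Chernoff to ensure every relevant line receives $\Omega(\log n)\ge 2d(C_l)$ entrants, and finish via Lemmas~\ref{reduce_d} and~\ref{line_throughput}. The paper takes the union bound over all $m^2$ lines rather than only the $\le 2\ln n$ deficit-carrying ones, and simply stops at ``all lines become full''; your restriction is a harmless optimisation.

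One small slip in your last sentence: the inference ``$C_l^*\ge\overline{C_l}$ and $|C_l^*|=|\overline{C_l}|$ force $C_l^*=\overline{C_l}$'' is not literally valid, because the order $\le$ compares only the $(\alpha,\delta)$-vectors, and a trap with $(\beta_a,\gamma_a)=(m+1,0)$ or with $\beta_a=m$ distributed unevenly among the inner states has the same $(\alpha_a,\delta_a,\rho_a)=(m,1,0)$ as the fully stabilised trap. What you do get is that every trap is full with exactly $m+1$ agents and no surplus, so no gate ever holds two agents and the traps evolve independently; Lemma~\ref{l:surplus} (the $l=0$ clause) then settles each almost-stabilised trap in an extra $O(mn)$, absorbed in the budget. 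The paper glosses over this same point.
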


\begin{proof}
    Let us divide the time when $r(C)\ge 1$ into intervals $\mathcal{T}$ of length $c m^2 n \log n,$ where $c$ is specified by Lemma~\ref{throughput2}. 
    %There can be $O(\log\log n)$ of static intervals for which $0.9r(C)>r(C')$.
    Also by this lemma, in any such an interval $s_{\mathcal{T}}\ge m$ whp and by Lemma \ref{all_indicated} all lines are indicated.

    In the first $8c''m$ intervals, where $c''$ is a constant, there is a total flow exceeding $8c'' m^2 \log n$ whp.
    %For a constant $c''$, in $4c''m$ first intervals there is a total flow exceeding $4c'' m^2 \log n$ whp. 
    By the Chernoff bound, there is $c''$ such that in this period at least $4\ln n$ agents enter each line whp.
    This guarantees that all lines become full by Lemmas \ref{reduce_d} and \ref{line_throughput}.
\end{proof}

\begin{theorem}
    By solving ranking with one extra state,
    we obtain a silent, self-stabilising
leader election protocol with a running time $O(n^{7/4}\log^2 n)$~whp.
\end{theorem}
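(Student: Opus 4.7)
The plan is to stitch together the three excess-reduction lemmas (Lemmas~\ref{timetosqr}, \ref{timetolog}, and \ref{timetozero}) into a three-phase argument that drives the total number of tokens $r(C)$ from its worst-case starting value down to zero, which coincides with full stabilisation of the protocol.

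Starting from an arbitrary initial configuration, the first phase applies Lemma~\ref{timetosqr} to reach $r(C) < m^2$ in time $O(m^3 n \log^2 n)$ whp. The second phase iterates Lemma~\ref{timetolog}: beginning at $r(C) < m^2$ and repeatedly halving, each round incurs $O(m^3 n \log n)$ time. Since $m^2 = \Theta(\sqrt{n})$, only $O(\log n)$ rounds are needed before $r(C) \le 2 \ln n$, so this phase also costs $O(m^3 n \log^2 n)$ time in total. The third phase invokes Lemma~\ref{timetozero} once to push $r(C)$ down to $0$ in another $O(m^3 n \log^2 n)$ time. At that point no agent occupies state $X$ and every line is full, so by Fact~\ref{f:full} the configuration is silent; stability follows because the absorbing configuration is precisely the one in which every rank state is occupied exactly once.

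Adding the three phases yields total time $O(m^3 n \log^2 n)$, which, after substituting $n = 3 m^3 (m+1) = \Theta(m^4)$, equals $O(n^{7/4} \log^2 n)$ as claimed. The main obstacle I expect is the union bound across the $O(\log n)$ halving rounds of the second phase: the failure probability $n^{-\eta'}$ of each individual invocation must be small enough so that, after union-bounding over all rounds together with the initial and final phases, the overall failure probability remains $n^{-\eta}$ for the desired constant $\eta$. This is handled by inflating the Chernoff constants inside Lemmas~\ref{throughput1} and~\ref{throughput2} (and hence inside Lemmas~\ref{timetosqr}, \ref{timetolog}, \ref{timetozero}) by a suitable constant factor at the outset; the bookkeeping for this is routine but is the only place where one needs to be careful rather than simply composing the already-established lemmas.
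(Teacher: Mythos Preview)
Your proposal is correct and matches the paper's own proof essentially line for line: apply Lemma~\ref{timetosqr} to reach $r(C)<m^2$, iterate Lemma~\ref{timetolog} for $O(\log n)$ halving rounds to reach $r(C)\le 2\ln n$, then finish with Lemma~\ref{timetozero}, summing to $O(m^3 n\log^2 n)=O(n^{7/4}\log^2 n)$. Your remark about the union bound over the $O(\log n)$ rounds is a valid technical point that the paper leaves implicit.
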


%\begin{theorem}
%    The ranking protocol with one extra state stabilises silently in time $O(n^{7/4}\log^2 n)$~whp.
%\end{theorem}

\begin{proof}
    After time $O(m^3n\log^2 n)$ the protocol reaches a configuration $C$ for which $r(C)<m^2$ by Lemma
    \ref{timetosqr}.
    After each subsequent period of length $cm^3n\log n$ from Lemma \ref{timetolog}, the value of $r(C)$ decreases by at least a factor of two.
    Therefore, within time $O(m^3n\log^2 n)$, the value $r(C)\le 2\ln n$.
    Finally, the protocol stabilises after an additional time of $O(m^3n\log^2 n)$ by Lemma \ref{timetozero}.
    Thus, the total time until stabilisation is $O(m^3n\log^2 n)=O(n^{7/4}\log^2 n)$.
\end{proof}

\iffalse
\begin{lemma}
    Jeśli S tokenów rozdzielimy losowo na M linii, to każda linia otrzyma whp co najwyżej $(1+2\eta)\mu$ tokenów
    gdy $\mu>\ln n$ i co najwyżej $\mu+2\eta\ln n$ tokenów, gdy $\mu\le\ln n$.
\end{lemma}

\begin{proof}
    Ustalona linia otrzymuje średnio $\mu=S/M$ tokenów.
    Dla jakiego $\delta$ otrzyma co najwyżej $(1+\delta)\mu$ tokenów whp?
    \[\Pr(X>(1+\delta)\mu))<e^{-\delta^2\mu/(2+\delta)}<n^{-\eta} \]
    Przypadek 1: $\mu>\ln n$. Wtedy dla $\delta=2\eta,\eta>1$ mamy
    \[e^{-\delta^2\mu/(2+\delta)}<e^{-4\eta^2\ln n/(2+2\eta)}<e^{-\eta\ln n}=n^{-\eta}\]

    Przypadek 2: $\mu\le\ln n$. Wtedy dla $\delta=2\ln n\eta/\mu,\eta>1$ mamy
    \[e^{-\delta^2\mu/(2+\delta)}<e^{-4\ln^2 n\eta^2/(2\mu+2\ln n\eta)}<e^{-\eta\ln n}=n^{-\eta}\]
\end{proof}
\fi

\section{Ranking protocol with O(log n) extra states}\label{s:logn}

In this section, we introduce a self-stabilising ranking protocol that utilises $O(\log n)$ extra states, stabilising in time $O(n\log n)$ whp. This represents an exponential improvement in extra states utilisation compared to~\cite{BCC+21}, which relies on $\Omega(n)$ extra states. 
The proposed solution utilises a novel concept of {\em perfectly balanced} binary trees, defined for any integer size $n$. This tree is constructed recursively, starting from its root. %referring to rank state $0$.
%In the recursive step of tree construction, we start from its root denoting rank $p$.
%The task is to construct a subtree of size $k$.
If the size $k$ (starting with $k=n$) of the rooted tree is odd, i.e., $k=2l+1,$ the root is a {\em branching node} that spawns two children, serving as the roots of two identical (also perfectly balanced) subtrees of size $l.$ For even $k$, the root is a {\em non-branching node}, acting as a parent to a leaf or a branching node that serves as the root of a subtree of size $k-1.$
We also define the level of a node as its distance $d$ from the root. Based on the full symmetry of the recursive construction provided above, perfectly balanced binary trees exhibit several properties, including: (1) all nodes (and subtrees rooted) at the same level are uniform, i.e., they are either branching nodes with two children (the roots of two subtrees of the same structure and size), have one child, or are leaves in the tree, and (2) the height $h$ of the tree satisfies $h \leq 2\log n$.

\begin{figure}[h]
    \centering
    \includegraphics[scale=0.5]{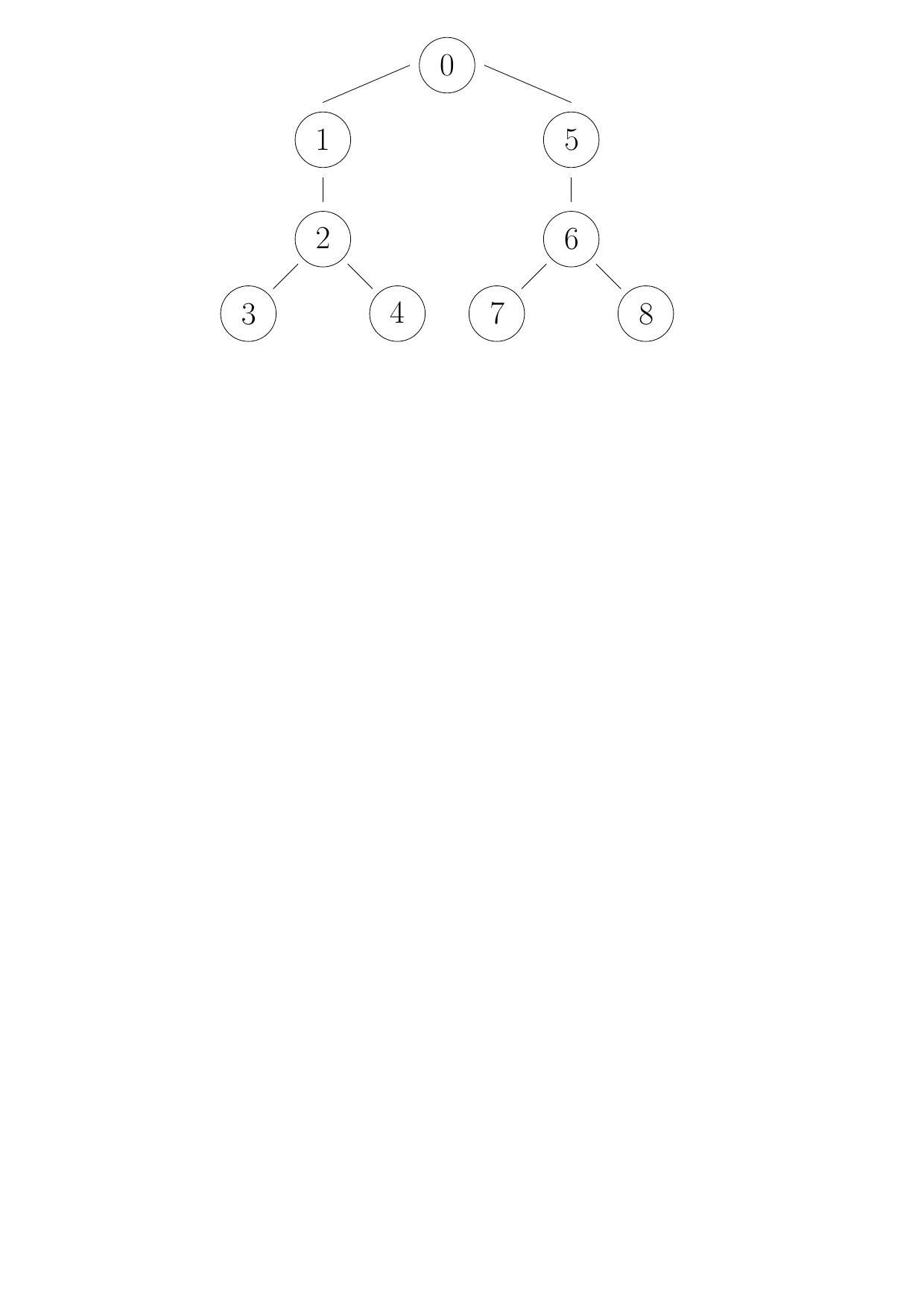}
    \caption{State distribution in perfectly balanced binary tree for $n=9$.}
    \label{fig:pbt}
\end{figure}

In our protocol, the nodes of a perfectly balanced binary tree are used to span all $n$ rank states, % of the ranking protocol.  
where the node with the {\em pre-order} number $0\le p\le n-1$ in this tree represents state $p$. 
In this order, the root represents state $0,$ a lone child of a parent representing state $p$ corresponds to state $p+1,$ and children of a branching node representing state $p$ correspond to states $p+1$ (left child) and $p+l+1$ (right child), where $l$ is the size of each of the two identical trees rooted in nodes representing states $p+1$ and $p+l+1$, see Figure~\ref{fig:pbt}.
We will refer to this tree as the {\em tree of ranks}.

We define a rank state $p$ as {\em overloaded} if it is occupied by at least two agents. 
Rule ${\mathcal R}_1$ of the transition function manages interactions within overloaded states as follows.
%defined on the ranking states organised in perfectly balanced binary tree of size $n$ is as follows.
%
\begin{comment}
\[
{\mathcal R}_1: (p,p)\rightarrow (p,p+1) \mbox{, if } p+1 \mbox{ is a lone child, and}
\]
%
\[
{\mathcal R}2: (p,p)\rightarrow (p+1,p+k) \mbox{, if } p \mbox{ is a branching node}.
\]
\end{comment}
%
\[
        {\mathcal R}_1: p+p\rightarrow  \begin{cases}
        p+(p+1), & \mbox{if } p \mbox{ is a non-branching node, and}\\
        (p+1)+(p+l+1), & \mbox{if } p \mbox{ is a branching node}.
        \end{cases}
\]

Namely, when two agents interact while occupying a non-branching state $p$, only the responder adopts state $p+1$. 
In contrast, if they interact in a branching state $p$, both agents vacate $p$. 
The initiator adopts state $p+1$, while the responder adopts state $p+l+1$.

We first observe in Lemma~\ref{l:tree}, that rule ${\mathcal R}_1$ admits $O(n\log n)$ time perfect dispersion of agents in the tree of ranks. %, if possible. 
%which aligns with the structure of $n$-state automaton with transition rule ${\mathcal R}_1.$
In Section~\ref{s:nlogn}, we show that if needed, the protocol based on rules ${\mathcal R}_2, {\mathcal R}_3, {\mathcal R}_4,$ and ${\mathcal R}_5$ guarantees relocation of all agents to the buffer line formed of $O(\log n)$ extra states $X_1,\dots,X_{2k}$ in time $O(\log n),$ before allowing agents to (re)enter the tree of ranks, see Lemma~\ref{l:upper-line}, and conclude with perfect agent dispersion among rank states.
These observations yield the following theorem, with its proof deferred to the end of Section~\ref{s:logn}.

\begin{theorem}\label{t:logn}
    By employing the ranking protocol based on rules ${\mathcal R}_1, \dots,{\mathcal R}_5$ and $O(\log n)$ extra states, we obtain a silent, self-stabilising leader election protocol with a running time $O(n\log n)$~whp.
\end{theorem}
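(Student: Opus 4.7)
The plan is to assemble Theorem~\ref{t:logn} from the two structural ingredients set up in this section: Lemma~\ref{l:tree} on perfect dispersion of agents across the tree of ranks via rule ${\mathcal R}_1$ within time $O(n\log n)$, and Lemma~\ref{l:upper-line} on fast relocation of agents through the buffer of $O(\log n)$ extra states $X_1,\dots,X_{2k}$ via rules ${\mathcal R}_2,\dots,{\mathcal R}_5$ within time $O(\log n)$. From an arbitrary initial configuration the two mechanisms act in tandem: rule ${\mathcal R}_1$ pushes surplus agents downwards from any overloaded branching or non-branching state along the tree, while rules ${\mathcal R}_2,\dots,{\mathcal R}_5$ catch agents that cannot be accommodated in the tree and reinject them at its root through the buffer.

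First, I would invoke Lemma~\ref{l:upper-line} to guarantee that, from any starting configuration, within time $O(\log n)$ whp the buffer has cleared the agents it currently holds onto the root, and every subsequent traversal of the buffer consumes only a further $O(\log n)$ time. This ensures that the buffer never becomes the bottleneck of the protocol and contributes only a lower-order additive term to the total running time. Because the buffer contains only $O(\log n)$ states, the number of agents simultaneously present in extra states is bounded by $O(\log n)$ throughout the execution, and so the tree of ranks always sees at least $n - O(\log n)$ agents.

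Next, I would apply Lemma~\ref{l:tree} to bound the dispersion process inside the tree of ranks by $O(n\log n)$ whp. Since the tree has height $h \leq 2\log n$, every agent performs at most $O(\log n)$ downward steps before reaching its destined rank state, and the standard Chernoff-based bounds via Corollary~\ref{coro} convert the expectation bound into the claimed high-probability bound. Combining with the previous paragraph, the total stabilisation time is dominated by $O(n\log n)$ whp.

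The main obstacle, I expect, is ruling out destructive interference between the two flows: reinjections from the buffer must not keep resurrecting overloads in the upper portion of the tree faster than ${\mathcal R}_1$ clears them. The natural way to handle this is a potential argument in which each agent is weighted by the size of the subtree rooted at its current node, with a distinguished large weight for agents sitting in buffer states; one then shows that this potential decreases in expectation over windows of length $O(\log n)$, except for a final regime whose length is controlled by Lemma~\ref{l:tree}. Once the potential bottoms out, every rank state is occupied by exactly one agent and every extra state is empty, so no rule among ${\mathcal R}_1,\dots,{\mathcal R}_5$ is applicable; the silent and stable guarantees then follow because such a configuration is absorbing under the transition function, making correctness hold with probability $1$ and the running time bound hold whp.
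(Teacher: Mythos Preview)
Your proposal has a fundamental misunderstanding of how the buffer of extra states operates, and this misunderstanding propagates into a genuine gap.

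You write that ``the buffer contains only $O(\log n)$ states, so the number of agents simultaneously present in extra states is bounded by $O(\log n)$ throughout the execution.'' This is false: the $X_i$ are \emph{states}, not slots, and arbitrarily many agents may occupy the same extra state simultaneously. More importantly, the buffer is not a passive overflow catchment that ``reinjects'' a few stray agents at the root. Rule ${\mathcal R}_2$ converts an overloaded leaf into the red state $X_1$, and then the first case of ${\mathcal R}_4$ makes any red agent meeting a rank-state agent pull \emph{both} into $X_1$. This is an epidemic that evacuates the entire tree: within $O(\log n)$ time whp \emph{all} $n$ agents sit in red extra states, then march together through the line to green states, and only then re-enter at the root. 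Your claim that ``the tree of ranks always sees at least $n-O(\log n)$ agents'' is therefore the opposite of what happens.

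Because you miss the reset, you also miss the reason the protocol is self-stabilising from an arbitrary configuration. Rule ${\mathcal R}_1$ alone cannot rank correctly from every start: if too many agents begin in one subtree, dispersion by ${\mathcal R}_1$ will overload a leaf rather than stabilise. Lemma~\ref{l:tree} is a dichotomy: on a balanced path it gives even stabilisation in time $O(n\log n)$; on an overloaded path it guarantees the leaf becomes overloaded in time $O(n\log n)$, which is precisely the trigger for ${\mathcal R}_2$. The paper's proof is accordingly a case analysis (balanced vs.\ unbalanced initial configuration, and in each case whether a red agent from the initial configuration interacts with a tree agent), showing that in every branch either the modified ``all-green'' protocol runs unimpeded to the silent configuration, or a reset fires, after which Lemma~\ref{l:upper-line} puts all agents in green states and Lemma~\ref{l:root}/\ref{l:tree} then finishes the job from the root. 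Your potential argument, as sketched, does not account for the full evacuation and cannot substitute for this case analysis.
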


%\begin{theorem}\label{addlog}
%    The ranking protocol based on rules ${\mathcal R}_1, \dots,{\mathcal R}_5$ utilises $n$ ranks states and $x=O(\log n)$ extra states stabilises silently in time $O(n\log n)$ whp.
%\end{theorem}

Before we present and analyse the complete self-stabilising ranking protocol (see Section~\ref{s:nlogn}), we focus on the progress of agents along {\em root-to-leaf} paths. These paths connect the root of the tree of ranks with its leaves.
In particular, we say that a root-to-leaf path $P$ is {\em balanced} if the protocol, based on rule ${\mathcal R}_1$, stabilises silently in a configuration where each node (rank) on this path is occupied by a single agent. Conversely, we say that such a path is {\em overloaded} if this protocol leads to a configuration where the last node on this path (a leaf in the tree of ranks) becomes overloaded.

\begin{lemma}\label{l:root}
In the initial configuration, if all rank states except the root $0$ are unoccupied, and all $n$ agents are at or will eventually arrive at the root, then rule ${\mathcal R}_1$ ensures that all rank states will eventually become occupied.
\end{lemma}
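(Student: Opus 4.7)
The plan is to prove, by strong induction on the size $k$, the following strengthening of the lemma: for every perfectly balanced subtree $T$ of size $k$ rooted at a state $r$, if all $k$ agents destined for $T$ arrive at $r$ (possibly over time) and no other state of $T$ is initially occupied, then under rule ${\mathcal R}_1$ every state of $T$ almost surely becomes occupied by exactly one agent. The original lemma is the instance in which $T$ is the entire tree of ranks. The base case $k=1$ is immediate, as the single arriving agent has nowhere else to go.

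For the inductive step, the recursive definition of perfectly balanced trees cleanly splits into two cases. If $k$ is even, the root $r$ is non-branching and its lone child $r+1$ roots a perfectly balanced subtree $T'$ of size $k-1$. Rule ${\mathcal R}_1$ at $r$ retains the initiator and forwards the responder to $r+1$, decrementing the count at $r$ by one per firing. Since the random scheduler fires this interaction with probability at least $2/n^2$ whenever $r$ is overloaded, and only $k$ arrivals ever occur at $r$, the count at $r$ almost surely stabilises at exactly $1$, while the remaining $k-1$ agents are forwarded to $r+1$. Applying the inductive hypothesis to $T'$ completes this case. If $k$ is odd and $k>1$, the root $r$ is branching, its children $r+1$ and $r+l+1$ (with $l=(k-1)/2$) root identical subtrees of size $l$, and rule ${\mathcal R}_1$ at $r$ dispatches one agent to each child while removing both from $r$. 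Each firing decreases the count at $r$ by $2$, preserving parity; since $k$ is odd and the count begins at $0$, it almost surely stabilises at $1$ after all arrivals and interactions. Because exactly one agent is sent to each child per firing, each child ultimately receives $(k-1)/2$ arrivals, so applying the inductive hypothesis to each size-$l$ subtree completes the step.

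The main technical obstacle is converting the informal ``eventually'' into a rigorous almost-sure statement, namely that no configuration with at least two agents at $r$ can persist indefinitely. This is a routine consequence of the uniform random scheduler: the desired interaction at $r$ has probability at least $2/n^2$ in every step, so a Borel--Cantelli argument yields almost-sure firing infinitely often whenever $r$ is overloaded. Combined with the fact that only finitely many arrivals at $r$ ever occur, this pins down the limiting count at $r$ to exactly one in both cases, and the parity bookkeeping together with the destination of displaced agents validates the induction and hence the lemma.
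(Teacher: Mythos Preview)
Your proof is correct and follows essentially the same approach as the paper's: both argue by recursion/induction on the perfectly balanced subtree, splitting into the non-branching (even size) and branching (odd size) cases and observing that the root retains exactly one agent while forwarding the appropriate counts to its child(ren). You add a more explicit almost-sure justification via the $2/n^2$ lower bound and Borel--Cantelli, which the paper leaves implicit, but the underlying structure of the argument is the same.
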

\begin{proof}
According to the definition of perfectly balanced trees, if $n$ is even the root is a non-branching node.
In this case, agents are moved to (lone child) state $1$ until a single agent is left in state $0.$  
If $n$ is odd, the same amounts of $\frac{n-1}{2}$ agents are moved to states $1$ (left child) and 
$\frac{n-1}{2}+1$ (right child) until a single agent is left in state $0.$   
This process is repeated recursively at each internal node of the tree of ranks. 
Thus due to perfect distribution, in the final configuration all (including leaf) 
states are occupied by single agents, i.e., the ranking process is successfully completed.
\end{proof}

%\subsection{Fast non-self-stabilising ranking with no extra states}\label{s:non}
\subsection{Progress along root-to-leaf paths}

We observe first that if all agents occupy rank states, 
either all root-to-leaf paths in the tree of ranks are balanced 
(i.e., the protocol based on rule ${\mathcal R}_1$ will successfully rank all agents) or
at least one such path is overloaded (i.e., further actions are needed to rank the population). 

We prove the following lemma.

\begin{lemma}\label{l:tree}
    Given either balanced or overloaded root-to-leaf path. 
    The ranking protocol based on rule ${\mathcal R}_1$ either stabilises evenly or
    overloads the leaf on this path in time $O(n\log n)$ whp. 
\end{lemma}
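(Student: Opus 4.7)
The plan is to follow the cascade of agents down the fixed root-to-leaf path $v_0,v_1,\dots,v_h$ with $h \le 2\log n$, and to show that every level can be drained in $O(n)$ parallel time whp, so the overall cascade completes in $O(n\log n)$ whp.

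First I would isolate the dynamics on the path. Under rule $\mathcal{R}_1$, any interaction between two agents sharing state $v_i$ either moves a single agent to $v_{i+1}$ (if $v_i$ is non-branching) or dispatches two agents, one of which lands on $v_{i+1}$ while the other leaves the path (if $v_i$ is branching). Interactions at states off the path have no effect on the counts $a_0(t),\dots,a_h(t)$, so the only question is how quickly collisions at each $v_i$ occur. Moreover, by conservation of agents and the structure of a perfectly balanced subtree, the number of agents that eventually settle in $v_h$ is determined by the initial configuration: on a balanced path each $v_i$ receives exactly one agent, while on an overloaded path the unabsorbable surplus propagates all the way to $v_h$.

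Next I would estimate the drain time at a single node. With $j$ agents at $v_i$, the probability that a given interaction realises a collision at $v_i$ is $\binom{j}{2}/\binom{n}{2}=\Theta(j^2/n^2)$, so in a parallel-time window of length $\Theta(n)$ the expected number of collisions is $\Theta(j^2)$. Summing the resulting expected waiting times over $j$ decreasing from the initial load down to $2$ yields an expected drain time of $\sum_{j=2}^{k} O(n/j^2)=O(n)$. Applying the Chernoff bound via Corollary~\ref{coro} to the number of collisions in a window of length $\Theta(n)$ upgrades this to an $O(n)$ bound whp whenever $j^2=\Omega(\log n)$; the residual regime with $j=O(1)$ is covered by a direct tail bound on the associated geometric waiting time, which is $O(n\log n)$ whp.

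Finally, I would chain the per-level analysis along the path. A union bound over the $h=O(\log n)$ levels yields a total cascade time of $O(n\log n)$ whp. A conservation argument at termination distinguishes the two cases: on a balanced path each $v_i$ is left with exactly one agent, while on an overloaded path the surplus that cannot be accommodated inside the subtrees cascades down to $v_h$, so the leaf carries at least two agents. The main obstacle is handling the temporal overlap between drains at consecutive levels, since agents may still be arriving at $v_{i+1}$ while $v_{i+1}$ is already being drained; I plan to resolve this by a monotone coupling showing that additional incoming agents can only accelerate the local collision rate, so the worst case in which all future arrivals are injected up-front upper-bounds the true drain time. A secondary subtlety is controlling the geometric tails at late levels where $j$ is small, which is where the $\log n$ factor in the final bound is incurred.
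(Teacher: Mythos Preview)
Your approach differs from the paper's, and the chaining step has a genuine gap. The paper defines a single potential on the whole path, $F(C)=k_b+\tfrac32 k_n-h_b-\tfrac32 h_n$ (with $k_b,k_n$ the numbers of agents at branching and non-branching nodes of the path and $h_b,h_n$ the corresponding node counts), verifies that every collision on the path lowers $F$ by at least $\tfrac12$, and deduces $E[F(C')]\le(1-1/n^2)F(C)$ per interaction. Markov's inequality then gives $F(C_T)=0$ whp after $T=O(n^2\log n)$ interactions, i.e.\ $O(n\log n)$ parallel time. The overloaded case is reduced to the balanced one by appending a single virtual node to the path.

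Your level-by-level drain argument loses a $\log n$ factor at the final step. You correctly obtain that each level drains in $O(n\log n)$ whp, the $\log n$ coming from the geometric tail of the last $2\to 1$ collision. But these last collisions are inherently sequential: $v_i$ cannot settle at $a_i=1$ for good until $v_{i-1}$ has already done so, because any remaining pair at $v_{i-1}$ will eventually push one more agent into $v_i$. Your monotone coupling (``inject all future arrivals up front'') does not help here---it bounds the drain at $v_i$ \emph{after} the last arrival, but the time of that last arrival is precisely $T_{i-1}$, the finishing time of the previous level. A union bound over the $h=O(\log n)$ sequential stages therefore only yields $h\cdot O(n\log n)=O(n\log^2 n)$ whp, not $O(n\log n)$. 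To salvage the bound along your lines you would need a concentration inequality for the \emph{sum} of the $h$ final-step geometric waits (mean $\Theta(n)$ each, total mean $\Theta(n\log n)$), together with a way to control the dependence between stages induced by the shared schedule. The paper's global potential avoids all of this: the multiplicative decay of $F$ already aggregates progress across every level simultaneously, so a single Markov step delivers the sharp bound without any per-level bookkeeping.
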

\begin{proof}
    Consider an arbitrary {\em root-to-leaf} path in the tree of ranks.
    Assume first that this path is balanced, i.e., the target number of agents on this path 
    in the final configuration $\overline{C}$ is $h+1$,
    where $h$ is the height of the tree of ranks.
    Let $h+1=h_b+h_n$ refers to  
    $h_b$ branching and $h_n$ non-branching nodes, where we count the leave as a branching node. 
    Consider a configuration $C$ with $k$ agents currently occupying nodes on this path,
    where the split $k=k_b+k_n$ refers to agents residing in nodes counted as branching and non-branching, respectively.
    %Let in $C$ the number of occupied nodes on this path is
    %$l\le h$ where $l_b\le h_b$ of them are branching
    %and $l_n\le h_n$ are non-branching.
    We define a potential function $F(C) = k_b+\frac{3}{2}k_n-h_b-\frac{3}{2}h_n$, and
    let $C$ transition to $C'$ after one interaction. 
    We show that the following inequality holds:
    \[ E[F(C')] < \left(1 - \frac{1}{n^2}\right) F(C). \]

After delegating exactly one agent to every occupied node on this path, which requires at most $h_b+h_n$ agents, 
each of the remaining at least $k_b+k_n-h_b-h_n$ agents engages in a move (down the tree) related interaction 
with the probability greater than $2/n^2.$ 
Each such move decreases the value of $F(C)$ by at least $1/2$.
Therefore, the expected decrease of $F(C)$ during single interaction is at least 
\[\left((k_b-h_b)+\frac{3}{2}(k_n-h_n)\right)\cdot\frac{1}{2} \cdot \frac{2}{n^2}= \left(k_b+\frac{3}{2}k_n-h_b-\frac{3}{2}h_n\right)\cdot \frac{1}{n^2} = F(C) \cdot \frac{1}{n^2}.\]

The value of the potential function in the initial configuration is $F(C_0)< \frac{3}{2}n$. 
The worst case is the deployment of all $n$ agents in non-branching states.
%,as all agents are in the rank state $0$.
If $C_{T}$ is the configuration after $T$ interactions, then
\[
E[F(C_{T})]\le (1-1/n^2)^{T}F(C_0)<\frac{3}{2}e^{-T/n^2}n. 
\]
We have $F(\overline{C})=0$ and for $C_T\neq \overline{C}$ we have $F(C_T)\ge 1/2$.
Using Markov's inequality we get
\[\Pr(C_T\neq \overline{C})=\Pr(F(C_T)\ge 1/2)\le 
    \frac{E[F(C_T)]}{1/2}<3e^{-T/n^2}n.     
\]
We get $3e^{-T/n^2}n\le n^{-\eta}$ for $T\ge n^2((\eta+1)\ln n+\ln 3)$.

Assume now that the considered root-to-leaf path is overloaded. 
Let us extend this path by adding one extra node. 
If the new path becomes balanced, we can adopt the argument from above, 
concluding that the extra node will be populated (indicating that the path is overloaded) in time $O(n\log n)$ whp. 
The availability of extra agents on this path only accelerates the process (increases transfer probabilities) during which the extra node gets populated.
Note that such extra node will be utilised in the ranking algorithm, presented in the next section, 
to trigger the reset signal for the entire system.
\end{proof}

\subsection{Fast self-stabilising ranking with O(log n) extra states}\label{s:nlogn}
%
%Thus we need to show that if in the initial configuration there are some overloaded paths 

Recall that the rank states are in the range $0, \dots, n-1$. We partition the extra states into two groups: $X_1, \dots, X_k$ and $X_{k+1}, \dots, X_{2k}$, which we will refer to as the {\em red} and {\em green} groups, respectively, where $x=2k=O(\log n)$. 
The following rule generates the {\em reset signal} by changing the state of two agents sharing a leaf rank to the red state $X_1$, with the intention of resetting all already ranked agents.
\[
{\mathcal R}_2: l+l\rightarrow X_1+X_1 \mbox{, where } l \mbox{ is a leaf in the tree of ranks.} 
\]
The extra states form a line starting in state $X_1$ and ending in state $X_{2k}$. The exchange of agents' states along this line is governed by the following rules:
\[
{\mathcal R}_3: X_i+X_j\rightarrow X_{i+1}+X_{i+1} \mbox{, where } i\le j \mbox{ and } i<2k. 
\]
The following rule mandates interactions of agents in extra states with those in rank states $j\in\{0,\dots,n-1\}$.
\[
    {\mathcal R}_4: X_i+j\rightarrow  \begin{cases}
        X_{1}+X_{1}, & \mbox{if } i\le k,\\
        0+j, & \mbox{if } i>k.
        \end{cases}
\]
The first case of the rule allows agents with red extra states to unload (reset) 
the tree of ranks and to propagate the reset signal. 
The second case relocates agents with green extra states to the root of the tree of ranks.
Also rule ${\mathcal R}_5$ is responsible for such transfer.
\[
{\mathcal R}_5: X_{2k}+X_{2k}\rightarrow 0+0. 
\]
%

%The ranking protocol presented here utilises $O(\log n)$ extra states.
%We show that this protocol is able (if needed) to move temporarily all agents 
%to the extra states, and subsequently to the root of empty tree of ranks.
Before we analyse performance of the ranking protocol based on rules ${\mathcal R}_1$ through ${\mathcal R}_5,$
we prove one additional lemma governing distribution of agents along the line of $2k$ extra states.
We assume that constant $k$ is large enough to accommodate the needs of the following lemma (i.e., \ $k\ge d'$).

\begin{lemma}\label{l:upper-line}
    There exists $d'>0$, s.t., for any $d\ge 0$ there exists $c>0,$
    for which after at most $c\log n$ time since state $X_1$ (reset signal) arrival, all agents are in line states $X_i$ with indices $d\log n< i\le (d+d')\log n$ whp.
\end{lemma}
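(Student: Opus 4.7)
My plan is to analyze the evolution after the $X_1$ reset signal arrives via two overlapping phases, tracking three quantities: the number $m(t)$ of agents currently in extra states, the minimum position $\mu(t)$ and the maximum position $M(t)$ among extra-state agents.

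\emph{Absorption.} I would first argue that within $O(\log n)$ parallel time after $X_1$'s arrival, every agent lies in an extra state; call this time $t_{\text{abs}}$. Whenever a red extra-state agent (at $X_i$ with $i \le k$) meets a rank-state agent, rule ${\mathcal R}_4$ (first case) converts both to $X_1$, netting one new extra-state agent. With $m(t) = m$ and $n - m$ rank-state agents, the rate of such events is $\Theta(m(n-m)/n)$ per parallel time unit, which drives $m(t)$ up by essentially a geometric factor per unit time until saturation. A standard Chernoff concentration argument then yields $m(t) = n$ in $O(\log n)$ parallel time whp.

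\emph{Propagation.} Once $m(t) = n$, only rule ${\mathcal R}_3$ is active and $\mu(t)$ is monotone non-decreasing. The key computation is that when a constant fraction of agents sit at $\mu(t) = i^*$, each interaction between two of them advances both to $i^* + 1$; a direct expectation bound combined with Chernoff concentration yields a constant wave speed $\alpha > 0$ with $\mu(t) \ge \alpha (t - t_{\text{abs}})$ whp. For the width, the critical observation is that any interaction between the max-position agent and one at position $p < M(t)$ forces the max down to $p + 1$, a strong drift toward the bulk. I would use a potential-function argument, for instance with $\Phi(t) = \sum_a 2^{p_a(t) - \mu(t)}$ where $p_a$ is the position of agent $a$ (or a variance-type potential), combined with a union bound over $O(\log n)$ checkpoints in time, to show that there is a universal constant $d'$ with $M(t) - \mu(t) \le d' \log n$ whp throughout the propagation.

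\emph{Conclusion and main obstacle.} Given $d \ge 0$, I would set $c = d/\alpha + t_{\text{abs}}/\log n$. At time $c \log n$ from the $X_1$ arrival, whp every agent is in extras, $\mu(t) > d \log n$, and $M(t) - \mu(t) \le d' \log n$, placing every agent in some $X_i$ with $d \log n < i \le (d+d')\log n$, as required. The hardest step I expect is the width bound during the absorption phase: rule ${\mathcal R}_4$ (first case) can intermittently yank an already-advanced agent back to $X_1$, injecting perturbations into the otherwise monotone wave dynamics. The potential-function argument must absorb these perturbations, whose total number is itself $O(\log n)$ by the absorption bound, so the two analyses need to be coupled rather than run strictly sequentially.
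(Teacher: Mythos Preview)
Your decomposition into absorption, propagation, and width is different from the paper's approach, which instead uses an \emph{extended chain} argument: for the epidemic spreading from the initial $X_1$ agent $a$, the extended chain $\mathcal{C}(a,b)$ to any agent $b$ (the infection path together with all interactions hitting the path agents in their respective time windows) has length $<d'\log n$ whp. Because ${\mathcal R}_3$ always sets both participants to $\min+1$ and ${\mathcal R}_4$ (first case) sets both to $X_1$, the position of $b$ once reached by the chain is bounded by that chain length. This single observation delivers the upper bound $M(t)<d'\log n$ at time $c'\log n$ directly, and absorption falls out for free (all agents are red since $d'\log n\le k$). The lower bound on $\mu(t)$ is cited from the phase-clock literature, and the width at later times is obtained by re-running the same chain argument anchored at the last agent still below $d\log n$.

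Your plan has a genuine ordering gap. The absorption argument implicitly assumes every extra-state agent is red, so that each extra--rank interaction fires ${\mathcal R}_4$ first case and increments $m(t)$. But ${\mathcal R}_4$ \emph{second} case (green meets rank) sends the green agent to rank state $0$, \emph{decreasing} $m(t)$; an agent that starts in $X_{2k}$, or one that drifts past $k$ before the epidemic reaches it, triggers exactly this leak. Ruling this out requires $M(t)\le k$ throughout absorption, yet your max/width bound is deferred until after absorption---a circular dependency. The difficulty you flag (red agents yanked back to $X_1$) is a red herring: those resets neither decrease $m(t)$ nor increase $M(t)$; they simply pin $\mu(t)=1$ during absorption, which is harmless. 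Moreover, the claim that such perturbations number $O(\log n)$ is off by a polynomial factor: ${\mathcal R}_4$ first case fires once per converted rank agent, i.e.\ $\Theta(n)$ times. The fix is to invert your order and bound $M(t)$ first---either via the paper's chain argument, or by bounding each agent's interaction count in the first $O(\log n)$ time and handling initially high-position agents separately---after which your absorption and propagation steps go through.
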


\begin{proof}
    Let $t'=c'\log n$ be the time needed to complete epidemic whp amongst $n$ agents.
    Let us consider an arbitrary pair of agents $a$ and $b$, %at the very start of this epidemic.
    where during the epidemic, a {\em chain of interactions} $a+a', a'+a'', 
    \dots
    a^{(l-1)}+b,$ begins with agent $a$, ultimately reaching agent $b$.
    We expand this chain to the {\em extended} chain $\mathcal{C}(a,b)$ which includes also all interactions from the beginning to the end of the epidemic, involving:
    $*+a$ before interaction $a+a';$
    $*+a'$ after and before interactions $a+a'$ and $a'+a'',$ respectively;
    etc, finishing with all interactions $*+b$ occurring after interaction $a^{(l-1)}+b.$
    The expected length of this extended chain is $2c'\log n$.
    By Chernoff inequality there is $d'>0$, s.t.,\ this length is
    smaller than $d'\log n$ whp.

    In the initial period of time $c'\log n$, since an agent $a$ with state $X_1$ occurs, any agent $b$ belongs to some extended chain $\mathcal{C}(a,b)$ of length less than $d'\log n$. 
    Thus at time $c'\log n$ agent $b$ is in red state $X_i$, where $i<d'\log n$.
    The argument showing that after time $t<c\log n$, all agents are in states $X_i$, for $i>d\log n$ (for any $d$), can be found in~\cite{AlistarhDKSU17}. Note that if $t\le c'\log n$, the thesis holds for $t^*=c'\log n$.
    Otherwise, let $a$ be an agent in state $X_i$, for $i\le d\log n$, at time $t'\le t-c'\log n$. We remind that every agent $b$ has an extended chain $\mathcal{C}(a,b)$ of length
    smaller than $d'\log n$ from agent $a$. Thus, at any time up to $t$ any agent $b$ is in state
    $X_i$ for $i<(d+d')\log n$ whp.
%
    \iffalse
    If $t'$ is the time in which
    For a configuration $C$, let $Z$ be the set of agents which have already had direct or indirect 
    contact with the signal initiated by the agent that first adopted state $X_1$.
    We define the potential function $P(C)=\sum_{a\in Z} 2^{i(a)}$,
    where $i(a)$ refers to the index of state $X_i$ that agent $a$ is currently in.
    Any agent $a\in Z$ can at most double its contribution during an interaction. 
    In addition, it can pass this contribution to an agent introduced to $Z$ during this interaction. 
    Therefore, if $C'$ is a configuration following
    $C$ directly, we get
    \[E[P(C')]\le (1+8/n)P(C).\]
    In the initial configuration, $P(C_0)=1$.
    Thus for configuration $C_{cn\log n}$ reached after 
    $cn\log n$ interactions we get:
    \[E[P(C_{cn\log n})]\le (1+8/n)^{cn\log n}P(C_0)\le
    e^{8c\log n}=n^{8c/\ln 2}. 
    \]
    Using Markov's inequality
    \[\Pr(\exists a:i(a)>(8c/\ln 2+\eta)\log n)\le
        \Pr(P(C_{cn\log n})>n^{8c/\ln 2+\eta})\le 
        \frac{E[P(C_{cn\log n})]}{n^{8c/\ln 2+\eta}}\le n^{-\eta}.     
    \]
    Therefore, the thesis holds for $d_u=8c/\ln 2+\eta$.
    \fi
\end{proof}

%We conclude with the final theorem.
%
%\begin{theorem}\label{addlog}
%    The ranking protocol based on rules ${\mathcal R}_1, \dots,{\mathcal R}_5$ utilises $n$ ranks states and $x=O(\log n)$ extra states and stabilises in time $O(n\log n)$ whp.
%\end{theorem}
%

We conclude this section with the proof of the main theorem.

\begin{proof}[Proof of Theorem \ref{t:logn}]
The {\em balanced configuration} is such that if we convert all extra states to the root state $0$ then a protocol starting from this modified configuration will lead to the final silent configuration in which all rank states are occupied.
Consider a {\em modified protocol} in which all states on the line are treated as green and the transition $X_i+j\rightarrow 0+j$ is always performed in rule $\mathcal{R}_4$.
This modified protocol leads to the first overload of a leaf in time $O(n \log n)$ whp when we do not start from a balanced configuration
by Lemmas \ref{l:tree} and \ref{l:upper-line}.
Due to the same lemmas, if we start from a balanced configuration, we obtain the final silent configuration in $O(n \log n)$ time whp.

We can now proceed to analyse the actual protocol.
We consider two cases: starting from a balanced configuration and from a non-balanced configuration.
%We consider two cases: starting from a balanced configuration and starting from a configuration different from the balanced configuration.
If we start from a balanced configuration, two scenarios are possible. 
In the first scenario, there is no interaction between an agent in a red state and an agent in a tree state for time $O(n\log n)$.
The protocol computations in this scenario are identical to those in the modified protocol. Therefore, in this scenario, the probability of not reaching the silent configuration in time $O(n\log n)$ is negligible (less than $n^{-\eta}$).
In the second scenario, an agent in a red state has an interaction with an agent in a tree state in time $O(n\log n)$.
After this happens, all agents in the population receive red states whp through an epidemic in time $O(\log n)$ by Lemma \ref{l:upper-line}.
Then, in time $O(\log n)$, they transition into green states.
The configuration then becomes silent after time $O(n\log n)$ whp, so the probability of failure is also negligible in this scenario.

In the second case, when we start from a configuration different from the balanced configuration, we also consider two scenarios. 
In the first scenario, an agent in a red state from the starting configuration after interacting with an agent in a tree state transitions to state $X_1$ in time $O(n \log n)$. 
The analysis of this scenario is the same as in the second scenario of case one. 
In total time $O(n \log n)$, the configuration becomes silent whp,
so the probability of not reaching the final silent configuration is negligible.
In the second scenario, there is no interaction between an agent which was in a red state from the beginning and an agent in the tree state.
Then the protocol computation is identical to the modified protocol computation and in time $O(n \log n)$ there is an overload of some leaf in the tree.
The analysis of the remaining part of this scenario is again identical to the second scenario of the first case.
In total time $O(n \log n)$, the configuration becomes silent whp,
so the probability of not reaching the final silent configuration is negligible.
\end{proof}

\section{Conclusion}\label{s:conclusion}

We introduced and conducted detailed analysis on multiple sub-quadratic self-stabilising ranking and in turn leader election protocols. The improved efficiency of these protocols hinges on either near-stabilised initial configurations or the utilisation of a small number of extra states. Since ranking guarantees leader election in self-stabilising protocols, our findings also provide solutions for leader election in this model.

The main remaining open problem in this area pertains to the existence of a ranking protocol with a time complexity of $o(n^2)$  based solely on the use of $n$ rank states. It's worth noting that also no efficient (polylogarithmic) $n$ state population protocol is known for solving ranking when all agents start in a uniform state. 
Thus, further understanding of the ranking and naming problems is needed. %in standard population protocols. 
And in particular, how the stabilisation time in both problems changes when the number of extra states decreases.

\section{Chernoff Bound}\label{s:app}

\begin{corollary}[Utilisation of the Chernoff Bound]\label{coro}
Let us randomly distribute $S$ tokens among $M$ lines and denote $\mu=S/M$. Each line receives whp at most $(1+2\eta)\mu$ tokens when $\mu>\ln n$, and at most $\mu+2\eta\ln n$ tokens when $\mu\le\ln n$.
\end{corollary}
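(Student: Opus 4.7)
The plan is to apply the standard multiplicative Chernoff bound to the number of tokens received by a fixed line and then conclude via a union bound. Fix a line and let $X$ denote the number of tokens it receives; since each of the $S$ tokens is independently placed in a uniformly random line, $X$ is a sum of $S$ independent Bernoulli$(1/M)$ variables with $E[X]=\mu=S/M$. I would use the form $\Pr(X\ge(1+\delta)\mu)\le\exp\!\left(-\delta^2\mu/(2+\delta)\right)$ valid for any $\delta>0$, and split on the size of $\mu$ relative to $\ln n$.

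In the regime $\mu>\ln n$, I would set $\delta=2\eta$, so that the exponent becomes $-4\eta^2\mu/(2+2\eta)\le -\frac{2\eta^2}{1+\eta}\ln n$. Assuming $\eta\ge 1$ (otherwise strengthen $\eta$ at no cost), the coefficient $2\eta^2/(1+\eta)\ge\eta$, so the tail probability is at most $n^{-\eta}$, giving the claimed bound $(1+2\eta)\mu$.

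In the regime $\mu\le\ln n$, I would instead choose $\delta=2\eta\ln n/\mu$, which makes $(1+\delta)\mu=\mu+2\eta\ln n$, precisely the threshold in the statement. A direct substitution yields
\[
\frac{\delta^2\mu}{2+\delta}=\frac{4\eta^2\ln^2 n}{2\mu+2\eta\ln n}\ge \frac{4\eta^2\ln^2 n}{(2+2\eta)\ln n}=\frac{2\eta^2}{1+\eta}\ln n\ge\eta\ln n,
\]
where the first inequality uses $\mu\le\ln n$ and the last again uses $\eta\ge 1$. Hence the tail probability is again bounded by $n^{-\eta}$.

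Finally, a union bound over the $M\le n$ lines inflates the failure probability by at most a factor $n$, which is absorbed by a mild increase of the exponent (equivalently, replacing $\eta$ by $\eta+1$ in the Chernoff step). I do not expect any real obstacle here; the only care required is in the algebra of the second case, where one must verify that with the chosen $\delta$ the exponent still dominates $\eta\ln n$ uniformly for all $\mu\le\ln n$, and that the two regimes meet continuously at $\mu=\ln n$, both of which fall out of the computation above.
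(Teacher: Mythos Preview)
Your proof is correct and essentially identical to the paper's: the same multiplicative Chernoff form $\Pr(X\ge(1+\delta)\mu)\le e^{-\delta^2\mu/(2+\delta)}$ is applied to a fixed line, with the same choices $\delta=2\eta$ when $\mu>\ln n$ and $\delta=2\eta\ln n/\mu$ when $\mu\le\ln n$, under the assumption $\eta\ge 1$. You are in fact a bit more explicit than the paper, which omits the union-bound step and the verification that the two regimes cover all cases.
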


\begin{proof}
    Any given line receives on average $\mu=S/M$ tokens.
    We will determine $\delta$ for which this line receives at most $(1+\delta)\mu$ tokens whp. By the Chernoff inequality
    \[\Pr(X>(1+\delta)\mu))<e^{-\frac{\delta^2\mu}{2+\delta}}. \]
    Case 1: $\mu>\ln n$. Then for $\delta=2\eta,\eta>1$ we have
    \[e^{-\frac{\delta^2\mu}{2+\delta}}<e^{-\frac{4\eta^2\ln n}{2+2\eta}}<e^{-\eta\ln n}=n^{-\eta}.\]
    Case 2: $\mu\le\ln n$. Then for $\delta=\frac{2\ln n\eta}{\mu},\eta>1$ we have
    \[e^{-\frac{\delta^2\mu}{2+\delta}}<e^{-\frac{4\ln^2n\cdot \eta^2}{2\mu+2\ln n\cdot \eta}}<e^{-\eta\ln n}=n^{-\eta}.\]
\end{proof}

\bibliography{references}

%\newpage

%\includegraphics[scale=1]{one-external.jpg}
%\bibliography{bibliography}

%\newpage

\end{document}